\begin{document}

\title{Fractal dimension and lower bounds for geometric problems}

\author{
Anastasios Sidiropoulos\thanks{Dept.~of Computer Science, University of Illinois at Chicago, \texttt{sidiropo@uic.edu}.}
\and
Kritika Singhal\thanks{Dept.~of Mathematics, The Ohio State University, \texttt{singhal.53@osu.edu}.}
\and
Vijay Sridhar\thanks{Dept.~of Computer Science \& Engineering, The Ohio State University, \texttt{sridhar.38@osu.edu}.}
}

\def\lf{\left\lfloor}   
\def\rf{\right\rfloor}
\def\lc{\left\lceil}   
\def\rc{\right\rceil}

\newtheorem{theorem}{Theorem}[section]
\newtheorem{lemma}[theorem]{Lemma}
\newtheorem{proposition}[theorem]{Proposition}
\newtheorem{claim}[theorem]{Claim}
\newtheorem{corollary}[theorem]{Corollary}
\newtheorem{definition}[theorem]{Definition}
\newtheorem{observation}[theorem]{Observation}
\newtheorem{fact}[theorem]{Fact}
\newtheorem{property}{Property}
\newtheorem{Remark}{Remark}[section]
\newtheorem{notation}{Notation}[section]
\newtheorem{example}{Example}[section]
\newtheorem{conjecture}{Conjecture}
\newtheorem{question}[conjecture]{Question}

\newcommand{\eps}{\varepsilon}
\newcommand{\tw}{\mathsf{tw}}
\newcommand{\rep}{\mathsf{rep}}
\newcommand{\pw}{\mathsf{pw}}
\newcommand{\dimF}{\dim_{\mathsf{f}}}
\newcommand{\dimD}{\dim_{\mathsf{d}}}
\newcommand{\diam}{\mathsf{diam}}
\newcommand{\dist}{\mathsf{dist}}
\newcommand{\volume}{\mathsf{volume}}
\newcommand{\ifac}{\mathsf{interface}}
\newcommand{\dbox}{\mathsf{box}}
\newcommand{\ball}{\mathsf{ball}}
\newcommand{\sphere}{\mathsf{sphere}}
\newcommand{\origin}{\mathbf{o}}
\newcommand{\OPT}{\mathsf{OPT}}
\newcommand{\XXX}{\textcolor{red}{XXX}}
\newcommand{\siep}{Sierpi\'{n}ski}
\newcommand{\ecp}{Exact Cover Problem}
\newcommand{\csp}{Constraint Satisfaction Problem}
\newcommand{\etal}{\textit{et al}.}

\AtAppendix{\counterwithin{theorem}{section}}


\clearpage

\date{}
\maketitle

\thispagestyle{empty}
\begin{abstract}
We study the complexity of geometric problems on spaces of low \emph{fractal dimension}.
It was recently shown by [Sidiropoulos \& Sridhar, SoCG 2017] that several problems admit improved solutions when the input is a pointset in Euclidean space with fractal dimension smaller than the ambient dimension.
In this paper we prove nearly-matching lower bounds, thus establishing nearly-optimal bounds for various problems as a function of the fractal dimension.

More specifically, we show that for any set of $n$ points in $d$-dimensional Euclidean space, of fractal dimension $\delta\in (1,d)$, for any $\eps>0$ and $c\geq 1$, any $c$-spanner must have treewidth at least $\Omega \left( \frac{n^{1-1/(\delta - \epsilon)}}{c^{d-1}} \right)$, matching the previous upper bound.
The construction used to prove this lower bound on the treewidth of spanners, can also be used to derive lower bounds on the running time of algorithms for various problems, assuming the Exponential Time Hypothesis.
We provide two prototypical results of this type:
\begin{itemize}
\item For any $\delta \in (1,d)$ and any $\eps >0$, $d$-dimensional Euclidean TSP on $n$ points with fractal dimension at most $\delta$ cannot be solved in time $2^{O\left(n^{1-1/(\delta - \eps)} \right)}$.
The best-known upper bound is $2^{O(n^{1-1/\delta} \log n)}$.

\item For any $\delta \in (1,d)$ and any $\eps >0$, the problem of finding $k$-pairwise non-intersecting $d$-dimensional unit balls/axis parallel unit cubes with centers having fractal dimension at most $\delta$ cannot be solved in time $f(k)n^{O \left(k^{1-1/(\delta - \eps)}\right)}$ for any computable function $f$.
The best-known upper bound is $n^{O(k^{1-1/\delta} \log n)}$.
\end{itemize}
The above results nearly match previously known upper bounds from [Sidiropoulos \& Sridhar, SoCG 2017], and generalize analogous lower bounds for the case of ambient dimension due to [Marx \& Sidiropoulos, SoCG 2014].
\end{abstract}

\pagebreak{}
\setcounter{page}{1}

\section{Introduction}

The \emph{curse of dimensionality} is a general phenomenon in computational geometry, asserting that the complexity of many problems increases rapidly with the dimension of the input.
Sets of fractional dimension can be used to model various processes and phenomena in science and engineering \cite{takayasu1990fractals}.
Recently, the complexity of various geometric optimization problems was studied as a function of the fractal dimension of the input \cite{sidiropoulos2017algorithmic}.
It was shown that, for several problems, improved algorithms can be obtained when the fractal dimension is smaller than the ambient dimension.

Interestingly, the algorithms obtained in \cite{sidiropoulos2017algorithmic} nearly match the best-possible algorithms for integral dimension.
In this paper, we give nearly-matching lower bounds, assuming the Exponential Time Hypotheis (ETH).
We remark that there are several different definitions of fractal dimension that can be considered.
Our results indicate that, for the case of Euclidean pointsets, the definition of fractal dimension we consider is the ``correct'' one for certain computational problems.
That is, it precisely generalizes the dependence of the running time on the ambient dimension.

\subsection{Our contribution}

We obtain nearly-optimal lower bounds for various 
prototypical geometric problems.
Our results are obtained via a general method that could be applicable to other problems.


\paragraph*{Spanners}
We begin with a lower bound on the treewidth of spanners.
It is known that any set of $n$ points in $\mathbb{R}^d$ admits a $(1+\eps)$-spanner of size $n (1/\eps)^{O(d)}$ \cite{salowe1991construction,vaidya1991sparse}.
This result has been generalized for the case of fractal dimension.
Specifically, it was shown in \cite{sidiropoulos2017algorithmic} that any $n$-point set in $O(1)$-dimensional Euclidean space,  of fractal dimension $\delta > 1$, admits a $(1+\eps)$-spanner of size $n(1/\eps)^{O(d)}$, and of pathwidth $O(n^{1-1/\delta} \log n)$. 
We show the following lower bound, which establishes that the upper bound from \cite{sidiropoulos2017algorithmic} is essentially best-possible.

\begin{theorem} \label{thm:spannerhigherdim}
Let $d \geq 2$ be an integer. Then for all $\delta \in (1,d)$, for all $\eps > 0$ and for all $n_{0} \in \mathbb{N}$, there exists a set of $n \geq n_{0}$ points $P \subset \mathbb{R}^{d}$, of fractal dimension at most $\delta'$, where $ \big| \delta - \delta^{\prime} \big| \leq \eps$, such that for any $c \geq 1$, any $c$-spanner $G$ of $P$ has $tw(G) = \Omega \left(\frac{n^{1-1/\delta^{\prime}}}{c^{d-1}} \right)$. 
\end{theorem}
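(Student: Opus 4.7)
The plan is to build a hierarchical Cantor-style point set $P \subset \mathbb{R}^d$ of fractal dimension $\delta'$ close to $\delta$, yet containing enough ``grid-like'' cross-sectional structure that any $c$-spanner of $P$ is forced to have treewidth $\Omega(n^{1-1/\delta'}/c^{d-1})$.

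For the construction, I would pick integers $m, k$ with $k = \lceil m^{\delta - \eps'} \rceil$ for a small $\eps' < \eps$, so that $\delta' := \log k/\log m$ satisfies $|\delta - \delta'|\leq \eps$ and $\delta' > 1$. Choose a subset $A \subseteq \{0, 1, \ldots, m-1\}^d$ of size $k$ designed so that (i) the projection of $A$ on each coordinate axis is a subset of $\{0, \ldots, m-1\}$ spread roughly uniformly across the interval, and (ii) for every coordinate $i$, $A$ contains $\Omega(k/m) = \Omega(m^{\delta' - 1})$ pairs of indices differing in coordinate $i$ by exactly $\pm 1$, split roughly evenly on the two sides of a balanced cut in that coordinate. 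Iterate for $L$ levels the rule ``subdivide each surviving cube into $m^d$ equal subcubes, keep those indexed by $A$''; take one point per surviving level-$L$ cube, rescaled so that the minimum pairwise distance is $1$, giving $P \subset [0,m^L]^d$ with $n = k^L$ points. A standard box-counting computation gives $\dimF(P) = \log k / \log m = \delta'$, and induction on $L$, using property (ii), produces for some coordinate direction $i$ and position $x$ a hyperplane $H_{i,x}$ that splits $P$ into two sides of size $\Omega(n)$ each with $\Omega(n^{1-1/\delta'})$ ``unit pairs'' $(p,q) \in P \times P$ on opposite sides satisfying $\|p-q\| \leq O(1)$; the counting simply propagates the top-level bound $\Omega(m^{\delta'-1})$ across the $L$ levels.

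To convert this cross-section bound into a treewidth bound, let $G$ be any $c$-spanner of $P$ and let $S$ be any balanced vertex separator, giving $V\setminus S = V_1 \sqcup V_2$ with $|V_1|,|V_2| \geq n/3$. By averaging over the $d$ coordinate directions and the hierarchical midway positions supplied by the construction, at least one hyperplane $H_{i,x}$ must split $\Omega(n^{1-1/\delta'})$ of its unit pairs between $V_1$ and $V_2$. Each such pair admits a spanner path of length $\leq c$ that must intersect $S$, and since the unit pairs near $H_{i,x}$ contained in any ball of radius $c$ number only $O(c^{d-1})$ by a standard volume packing estimate, each vertex of $S$ can serve at most $O(c^{d-1})$ such pairs. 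Hence $|S| = \Omega(n^{1-1/\delta'}/c^{d-1})$, and so $\tw(G) = \Omega(n^{1-1/\delta'}/c^{d-1})$.

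The hard part is the averaging step that reduces an arbitrary balanced separator to a coordinate-aligned cut splitting many unit pairs: unlike in the Marx--Sidiropoulos integral-dimension setting, where the point set is a full Euclidean grid and the claim follows from a direct axis-aligned isoperimetric inequality, here $P$ is a sparse fractal, so both the existence of a rich axis-aligned cut and its compatibility with a general separator must be engineered into the choice of $A$ and propagated consistently through all $L$ levels of the recursion. A secondary, more routine issue is ensuring $m$ and $L$ can be chosen to produce a desired $n \geq n_0$, which is handled by taking $L$ large and, if needed, deleting a few points without affecting $\dimF(P)$ or the cross-section estimate.
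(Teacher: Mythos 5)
Your high-level plan — build a hierarchical Cantor-type pointset with the right fractal dimension and then force any $c$-spanner to have large treewidth, with the $c^{d-1}$ loss coming from a packing bound in a thin slab — matches the paper in spirit, but there are two genuine gaps, one in the construction and one in the conversion from a geometric cut to a treewidth bound.

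First, the construction is under-specified in a way that breaks the cross-section estimate. Having a pattern $A\subseteq\{0,\dots,m-1\}^d$ in which each coordinate direction has $\Omega(m^{\delta'-1})$ unit-adjacent index pairs at a single level does \emph{not} imply that the level-$L$ pointset has $\Omega(n^{1-1/\delta'})$ unit pairs straddling a top-level hyperplane. The unit pairs at the finest scale live only where entire chains of nested cells line up on both sides of the cut across all $L$ levels, and a generic $A$ does not guarantee this. For example, with $d=2$, $m=3$, $A=\{(0,0),(1,0),(2,0),(0,1),(0,2),(1,2)\}$ (which satisfies your (i) and (ii) with $k=6$), after $L$ levels the number of unit pairs across the top-level vertical cut is $O(1)$, not $\Theta(2^L)$. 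What is actually needed is the paper's \emph{crossbar} structure: the recursion uses auxiliary gadgets $h^{l,v,d}_i$ that, by design, preserve \emph{full rows} (unit-spaced collinear runs spanning the whole diameter) in direction $i$ at every level, and the number of such full rows is exactly $(l-v)^{k(d-1)}\approx n^{1-1/\delta'}$. That engineered invariant is what makes the cross-section estimate hold; your generic properties (i), (ii) don't propagate.

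Second, the ``averaging step'' that turns an arbitrary balanced vertex separator into a coordinate-aligned cut with many split unit pairs is not just the hard part — as stated it is false. Consider a separator $S$ consisting of all points within a slab of width $\Theta(c)$ around a coordinate hyperplane. Then any unit pair straddling a hyperplane inside or near the slab has both endpoints in $S$, and unit pairs straddling hyperplanes in other directions have both endpoints on the same side; no unit pair is split between $V_1$ and $V_2$, so the averaging claim fails (this particular $S$ happens to be large, but your argument is supposed to rule out \emph{all} small balanced separators, and it doesn't engage with this case). There is also a more subtle issue: a $c$-spanner may contain arbitrarily long edges, so a geometric slab need not even be a vertex separator of $G$. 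The paper sidesteps all of this by not reasoning about separators at all: it uses the full rows to route vertex-disjoint paths (one per full row, thinned out by a factor of $c+1$ per direction to keep them from merging) and contracts them to a $d$-dimensional grid minor of side $\approx (l-v)^k/c$, whence $\tw(G)=\Omega\left(\left((l-v)^k/c\right)^{d-1}\right)=\Omega\left(n^{1-1/\delta'}/c^{d-1}\right)$ by the known treewidth of $d$-dimensional grids. If you want to keep a separator/bramble-style argument you would essentially have to reconstruct this grid-minor structure anyway, so the direct minor argument is both cleaner and necessary here.
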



\paragraph*{Independent Set of Unit Balls}
We consider the $k$-Independent Set of Unit Balls in $\mathbb{R}^d$, which is a prototypical geometric optimization problem, parameterized by the optimum.
In this problem given a set of $n$ unit balls in $\mathbb{R}^d$, we seek to find a set of $k$ pairwise non-intersecting balls.
It is known that this problem can be solved in time $n^{O(k^{1-1/d})}$, for any $d\geq 2$ \cite{alber2002geometric,MS14}, and that there is no algorithm with running time $f(k) n^{o(k^{1-1/d})}$, for any computable function $f$, assuming ETH \cite{MS14} (see also \cite{marx2005efficient}).
The upper bound has been generalized for fractal dimension as follows:
It has been shown that when the set of centers of the balls has fractal dimension $\delta$, the problem can be solved in time $n^{O(k^{1-1/\delta})+\log n}$ \cite{sidiropoulos2017algorithmic}.
We show the following lower bound on the running time, which nearly matches tis upper bound, up to a logarithmic term.

\begin{theorem}\label{thm:independent_balls}
Let $d\geq 2$ be an integer, and 
let $1<\delta'<\delta<d$. If for all $k$, and for some computable function $f$, there exists an $f(k)n^{o(k^{1-1/\delta'})}$ time algorithm for finding $k$ pairwise non-intersecting open balls in a collection of $n$ unit balls with the centers having fractal dimension at most $\delta$, then ETH fails.
\end{theorem}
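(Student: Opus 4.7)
The plan is to adapt the integral-dimension lower bound of Marx and Sidiropoulos to the fractal setting, by encoding an ETH-hard grid tiling instance into a pointset of fractal dimension at most $\delta$, paying an $\eps$-loss in the exponent. Recall the integral case: for any integer $d' \geq 2$, $d'$-dimensional grid tiling on $k$ cells admits no $f(k) n^{o(k^{1-1/d'})}$ algorithm under ETH, and this hardness transfers to $k$-Independent Set of Unit Balls in $\mathbb{R}^{d'}$ via a standard gadget reduction whose ball centers lie on a $k^{1/d'} \times \cdots \times k^{1/d'}$ lattice. For our theorem we must instead force the center set to have fractal dimension at most $\delta$ while keeping the ambient dimension $d > \delta$ fixed, and still extract a $k^{1-1/\delta'}$-type lower bound.

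First, I would fix an intermediate parameter $\delta'' \in (\delta', \delta)$ and build a self-similar pointset $P \subset \mathbb{R}^d$ of fractal dimension at most $\delta$ that, at a sufficiently fine scale, contains a Marx-Sidiropoulos-style grid skeleton of effective dimension $\delta''$. Concretely, I would re-use the recursive branching construction engineered in the proof of Theorem~\ref{thm:spannerhigherdim}, with the contraction ratio and branching factor tuned so that $P$ contains $k$ landmark points arranged to host a grid-tiling gadget with exponent $k^{1-1/\delta''}$, while the remaining points of $P$ are placed so that the global box-counting dimension stays below $\delta$.

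Second, I would execute the gadget reduction: for each of the $k$ cells of the grid tiling instance, place in a small neighborhood of the corresponding landmark of $P$ a cluster of candidate unit ball centers, one per possible tile, choosing the cluster radius and inter-landmark distance so that balls within a cluster pairwise intersect (forcing a feasible $k$-independent set to pick exactly one per cluster) while balls from neighboring clusters are disjoint iff the corresponding tile assignments are compatible. All ball centers lie within an $O(1)$-scale perturbation of $P$, so the center set still has fractal dimension at most $\delta$. Now assuming the hypothesized $f(k) n^{o(k^{1-1/\delta'})}$ algorithm, feeding it the reduction's output solves the grid tiling instance in time $f(k) n^{o(k^{1-1/\delta'})}$, which is also $f(k) n^{o(k^{1-1/\delta''})}$ since $\delta' < \delta''$ implies $k^{1-1/\delta'} = o(k^{1-1/\delta''})$; this contradicts the ETH-based lower bound $f(k) n^{\Omega(k^{1-1/\delta''})}$ for the underlying grid tiling.

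The main obstacle, as in the spanner theorem, is calibrating the fractal construction so that $P$ simultaneously has global fractal dimension at most $\delta$ (needed for the hypothesized algorithm to apply) and supports a local skeleton of effective dimension strictly above $\delta'$ (needed to activate the Marx-Sidiropoulos-style lower bound with exponent $k^{1-1/\delta''}$). The gap $\delta - \delta' > 0$ in the statement is exactly the slack that makes both constraints simultaneously satisfiable. Once this calibration is in place, the gadget construction and correctness proof are essentially verbatim adaptations of the integral-dimension argument.
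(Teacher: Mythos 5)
Your high-level strategy is essentially the paper's: reuse the treewidth-extremal pointset $f^{l,v,d}(m)$ from Theorem~\ref{thm:spannerhigherdim}, overlay ball gadgets that encode the Marx--Sidiropoulos geometric $\leq$-CSP (Theorem~\ref{thm:geometric-csp}), and extract the contradiction with ETH. However, the plan as written has two genuine gaps.

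First, you place ball clusters only at the ``landmarks'' of $P$ (the grid-minor intersection points), and claim that tuning ``inter-landmark distance'' lets balls from neighboring clusters express tile compatibility. In the Cantor crossbar the landmarks are not uniformly spaced: consecutive landmarks sit at Cantor-like positions whose pairwise gaps range from $O(1)$ up to a constant fraction of the full diameter, and no global rescaling can make all of them $\Theta(1)$ simultaneously. Once two clusters are separated by much more than one ball diameter, every pair of balls between them is trivially disjoint and no constraint can be expressed. The paper avoids this by populating \emph{every} point of $f^{l,v,d}(m)$ with a gadget: the chain points $C = f^{l,v,d}(m)\setminus M$ become new relay variables of an augmented CSP $I'$, each carrying a unary constraint $R_{\mathbf a}$ of size $\Delta$ (recording only the single coordinate being propagated) and the usual $\leq$-binary constraint to its neighbors, and Lemma~\ref{lem:newcsp} shows $I'$ is equisatisfiable with $I$. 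Without some analog of this relay step the gadget reduction does not go through; those same relay gadgets are also what makes the center set a substitution (in the sense of Lemma~\ref{lem:substitution}) of the entire $f^{l,v,d}(m)$, giving the fractal-dimension bound cleanly.

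Second, your closing contradiction appeals to a lower bound that does not exist in the form you state. You argue that an $f(k)n^{o(k^{1-1/\delta'})}$ algorithm contradicts an ETH lower bound of $f(k)n^{\Omega(k^{1-1/\delta''})}$ for the underlying grid tiling; but grid tiling / geometric $\leq$-CSP in $\mathbb{R}^d$ has hardness exponent $k^{1-1/d}$ with $d$ an integer, not $k^{1-1/\delta''}$. The missing computation is the variable blowup: with $|V'| = \Theta\bigl(|V|^{\frac{d-1}{d}\cdot\frac{\delta'}{\delta'-1}}\bigr)$ one obtains the identity $|V'|^{1-1/\delta'} = \Theta(|V|^{1-1/d})$, and it is this identity (not the slack $k^{1-1/\delta'} = o(k^{1-1/\delta''})$, which points in the wrong direction) that converts the hypothesized algorithm into an $f'(|V|)\,|I|^{o(|V|^{1-1/d})}$ algorithm for the original CSP, contradicting Theorem~\ref{thm:geometric-csp}.
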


\paragraph*{Euclidean TSP}

It is known that TSP on a set of $n$ points in $d$-dimensional Euclidean space can be solved in time $2^{1-1/d} n^{O(1)}$ \cite{smith1998geometric},
and that there is no algorithm with running time $2^{O(n^{1-1/d-\eps})}$, for any $\eps>0$, assuming ETH.
The upper bound has been generalized to the case of fractal dimension as follows.
It has been shown that for set of fractal dimension $\delta>1$, in $O(1)$-dimensional Euclidean space, TSP can be solved in time $2^{O(n^{1-1/\delta}\log n)}$.
Here, we obtain the following nearly-tight lower bound.

\begin{theorem}\label{thm:tsplowerboundmain}
Let $d \geq 2$ be an integer. Then, for all $\delta \in (2,d)$, for all $\delta'<\delta$ and for all $n_0 \in \mathbb{N}$, if there exists $n \geq n_{0}$ such that Euclidean TSP in $\mathbb{R}^d$ on all pointsets of size $n$ and fractal dimension at most $\delta$ can be solved in time $2^{O(n^{1-1/\delta'})}$, then ETH fails.
\end{theorem}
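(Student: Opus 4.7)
The plan is to extend the Marx-Sidiropoulos \cite{MS14} ETH-based lower bound for Euclidean TSP in integer dimension to the fractal setting by planting the MS14 gadget construction inside a Cantor-type subset of $[0,1]^d$ of fractal dimension strictly between $\delta'$ and $\delta$. Fix once and for all a parameter $\delta''\in(\delta',\delta)$. The goal is a reduction that turns an $N$-variable 3-SAT instance into a Euclidean TSP instance on $n=\Theta(N^{\delta''/(\delta''-1)})$ points of fractal dimension at most $\delta''$; a TSP algorithm of the hypothesized speed would then yield a sub-ETH algorithm for 3-SAT.

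Start by constructing a structured self-similar host set $F\subset[0,1]^d$: divide $[0,1]^d$ into $m^d$ equal axis-parallel subcubes, retain $s=\lfloor m^{\delta''}\rfloor$ of them, and iterate $k$ times. Choosing the retained subcubes so that they contain at every scale a complete $\lceil\delta''\rceil$-dimensional coarse sub-grid (essentially a product of intervals' worth of cells) ensures that the adjacency structure of the MS14 grid-tiling reduction embeds naturally into $F_k$, while the cell count $s=\lfloor m^{\delta''}\rfloor$ keeps the upper box-counting dimension of the limit at most $\delta''$. On this sub-grid, I would reduce 3-SAT on $N$ variables to a grid-tiling instance of side $L=\Theta(N^{1/(\delta''-1)})$ and apply the MS14 gadget construction: place one constant-size TSP gadget inside each surviving level-$k$ cube (scaled to fit strictly inside), and wire adjacent gadgets by short connectors routed inside $F$. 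All points then lie in $F$, so the TSP pointset has fractal dimension at most $\delta''\leq\delta$; by the standard MS14 gadget analysis, short tours correspond to satisfying assignments.

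A hypothetical algorithm that solves TSP on such $n$-point instances in time $2^{O(n^{1-1/\delta'})}$ would, when applied with $n=\Theta(N^{\delta''/(\delta''-1)})$, solve 3-SAT on $N$ variables in time
\[
2^{O\bigl(N^{\delta''(\delta'-1)/((\delta''-1)\delta')}\bigr)}.
\]
Since $\delta''>\delta'$, an elementary manipulation gives $\delta''(\delta'-1)/((\delta''-1)\delta')<1$, so the running time is $2^{o(N)}$, contradicting ETH. To match the hypothesized value of $n$ exactly, I would pad the reduction's output with isolated dummy points placed inside $F$, which affects neither the TSP cost by more than a prescribed amount nor the fractal-dimension bound.

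The main technical obstacle is the geometric fidelity of the embedding: the MS14 gadgets rely on specific face-to-face adjacency between grid cells to propagate variable assignments and clause satisfaction, and a generic Cantor-like retention rule can destroy these adjacencies. The remedy sketched above---forcing the retained subcubes at every scale to contain a complete $\lceil\delta''\rceil$-dimensional coarse grid---must be carried out so that neither the fractal-dimension upper bound nor the gadget-wiring budget is violated. A secondary routine check is that connector paths and gadget points (each of $O(1)$ size per cell) do not raise the fractal dimension above $\delta''$; this follows because every constructed point lies inside an already-counted level-$k$ cell of $F$.
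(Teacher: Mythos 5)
Your high-level plan matches the paper's: embed the MS14 gadget construction along a Cantor-type set of dimension $\delta''\in(\delta',\delta)$ and show a fast TSP algorithm on the resulting instance would violate ETH. The final exponent manipulation ($\delta''(\delta'-1)/((\delta''-1)\delta')<1$ iff $\delta''>\delta'$) is the one the paper uses and is correct. However, there is a significant gap in the middle, and your intermediate parameters are internally inconsistent.

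The central gap concerns the choice of fractal. You propose to retain $\lfloor m^{\delta''}\rfloor$ of the $m^d$ subcubes at each level, ``choosing the retained subcubes so that they contain at every scale a complete $\lceil\delta''\rceil$-dimensional coarse sub-grid.'' This does not, as stated, give what the reduction needs. Two problems. First, a coarse grid of retained cells at one scale does \emph{not} yield a connected grid at the bottom scale: when you recurse, each retained cell is itself hollowed out, so the candidate grid acquires gaps at every finer scale. This is precisely the failure mode the paper isolates in Section~\ref{subsec:sierp}: the \siep~carpet has dimension $\log 8/\log 3$, yet the largest grid that survives in any $O(1)$-spanner has side only $2^k$ out of $3^k$, which is strictly worse than the bound one would get from the dimension alone. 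The paper's Cantor crossbar $f^{l,v,d}(k)$ in Section~\ref{subsec:twfractal} is engineered around exactly this obstruction by building the set out of full ``bars'' (Cartesian products with complete intervals) in each axis direction; Lemma~\ref{lem:twhigherdim} then shows these bars supply $(l-v)^{k(d-1)}$ genuine full rows in each coordinate direction, and it is those rows onto which MS14's gadgets and flexible $2$-chains are routed. Nothing in your ``retain $\lfloor m^{\delta''}\rfloor$ cubes with a coarse grid'' prescription forces such bars to exist; you would have to prove a lemma of that type, and your construction, as described, does not admit one. Second, you embed a $\lceil\delta''\rceil$-dimensional grid, whereas the paper uses the full $d$-dimensional grid embedded in $f^{l,v,d}(k)$; the MS14 reduction and its parameters are tied to the dimension of the grid, and the $\lceil\delta''\rceil$-dimensional choice makes your $L=\Theta(N^{1/(\delta''-1)})$ and $n=\Theta(N^{\delta''/(\delta''-1)})$ fail to satisfy any consistent dimensional bookkeeping (they would require $n\approx L^{\delta''}$, but a set of fractal dimension $\delta''$ containing a $d$-dimensional $L$-sided grid of points joined by chains has $n\approx L^{\delta''(d-1)/(\delta''-1)}$, which is strictly larger).

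Two further, smaller, issues. You argue only that the upper box-counting dimension is at most $\delta''$; the paper's $\dimF$ is a uniform local condition on $\eps$-nets (Definition in Section~\ref{subsec:preliminaries}), which is strictly stronger and needs the careful net/cover argument of Lemma~\ref{lem:fractalhigherdimmodular} together with the invariance Lemmas~\ref{lem:scaling} and~\ref{lem:substitution} once gadgets are substituted. And your remark that ``every constructed point lies inside an already-counted level-$k$ cell'' understates the substitution step: the paper must verify the hypotheses of Lemma~\ref{lem:substitution} (original points far apart, each replaced by a bounded number of nearby points), and must re-verify the $a$-component / $a$-compactness conditions of Papadimitriou's Lemma~\ref{lem:papadimitrioutechnical} in the bent, fractal layout where flexible $2$-chains turn corners; ``the standard MS14 analysis'' does not apply verbatim.
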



\subsection{Overview of techniques}

We now briefly highlight the main technical tools used in the paper.

\paragraph*{High-level idea}

We derive our lower bounds by adapting a method from the case of graph problems.
It is known that, for many problems on graphs, large treewidth implies large running time lower bounds (see, e.g.~\cite{marx2010can}).
This is typically done by exploiting the duality between treewidth and grid minors.
Specifically, it is known that, for any $r$, graphs of treewidth at least $f(r)$, for some function $r$, have the $(r\times r)$-grid as a minor \cite{robertson1986graph}.
In fact, the function $f$ is known to be linear for planar graphs \cite{robertson1994quickly}, and polynomial in general \cite{chekuri2016polynomial}.
One can often obtain a lower bound on the running time by using the grid minor to embed a large hard instance in the input.
We apply the above approach to the geometric setting by relating fractal dimension to treewidth.
Specifically, we construct a pointset in Euclidean space, such that any $O(1)$-spanner must have large treewidth.

\paragraph*{From fractal dimension to treewidth}

A main technical ingredient for obtaining nearly-optimal lower bounds is constructing pointsets such that the treewidth of any $O(1)$-spanner is as large as possible.
For the case of exposition, we will describe the construction in the continuous case.
We construct some $X\subset \mathbb{R}^d$, and we discretize $Z$ by taking some $O(1)$-approximate\footnote{Recall that a $O(1)$-approximate $r$-net in some metric space $(X,\rho)$ is some $N\subseteq X$, such that for all $x\neq y\in N$, $\rho(x,y)>r$, and for all $z\notin N$, $\rho(z,N) = O(r)$.} $\eps$-net $N_{\eps}$ of $X$.
Let us refer to the fractal dimension of the resulting infinite family of nets $N_\eps$, as the fractal dimension of $X$ (see Section \ref{subsec:preliminaries} for precise definitions).
Our goal is to construct some $X$, with some fixed fractal dimension $\delta \in (1,d]$, such that the treewidth of any $O(1)$-spanner of $N_\eps$ is as large as possible as a function of $1/\eps$.

\paragraph*{A first failed attempt: The \siep~carpet}

Let us now briefly describe the construction and point out the main technical challenges.
A natural first attempt in $\mathbb{R}^2$ is to let $X$ be the \siep~carpet. This is a set obtained from the unit square by removing the central square of side length $1/3$, and by recursing on the remaining $8$ sub-squares (see Figure \ref{fig:siep}).
Unfortunately, this construction does not lead to a tight treewidth lower bound.
Specifically, the resulting set has fractal dimension $\delta=\log 8/\log 3$, while there exist $O(1)$-spanners of treewidth $O(n^{1-1/\gamma})$, where $\gamma$ is a constant arbitrarily close to $\log 6 / \log 3$.

\begin{figure}
\begin{center}
\scalebox{0.7}{\includegraphics{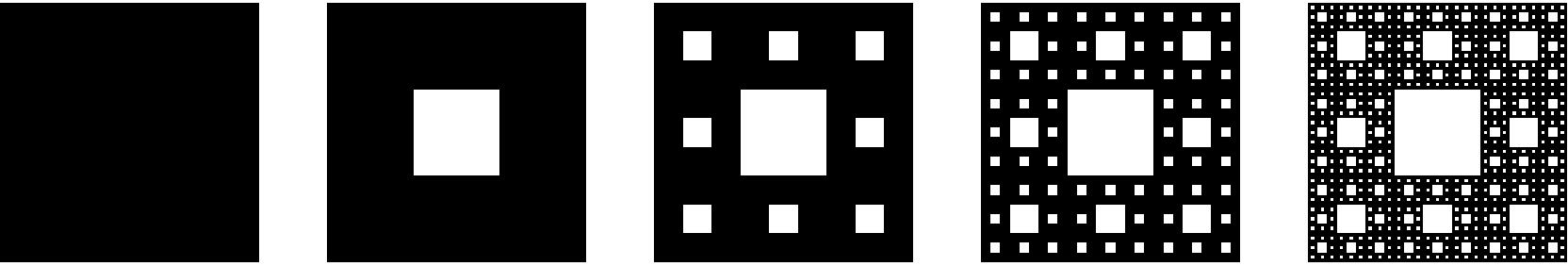}}
\caption{The first few iterations of the \siep~carpet.\label{fig:siep}}
\end{center}
\end{figure}

Intuitively, this happens for the following reason.
Let $S_\eps$ be any $O(1)$-spanner for $N_\eps$.
Then, the largest possible grid minor in $S_\eps$ does not use most of the vertices in $S_\eps$.
Thus, roughly speaking, we can obtain a larger grid minor by constructing a set $X$ so that as few vertices of $S_\eps$ as possible are being ``wasted''.

\paragraph*{Constructing a treewidth-extremal fractal: The Cantor crossbar}

Using the above observation, we define the set $X$ as follows.
We first recall that the Cantor set ${\cal C}$ is obtained from the unit interval by removing the central interval of length $1/3$, and recursing on the other two (see Figure \ref{fig:cantor}).
We define ${\cal C}'$ to be the Cartesian product of ${\cal C}$ with $[0,1]$, and we set $X$ to be the union of two copies of ${\cal C}'$, where one is rotated by $\pi/2$.
We refer to the resulting set as the \emph{Cantor crossbar} (see Figure \ref{fig:cantor_crossbar_R2}).
We can sow that the resulting set achieves a nearly-optimal treewidth lower bound.

\begin{figure}
\begin{center}
\scalebox{0.6}{\includegraphics{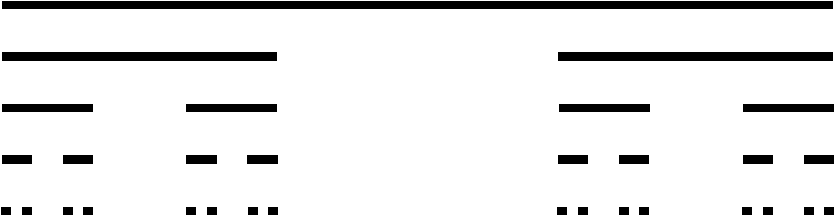}}
\caption{The first few iterations of the Cantor ternary set.\label{fig:cantor}}
\end{center}
\end{figure}

\begin{figure}
\begin{center}
\scalebox{0.7}{\includegraphics{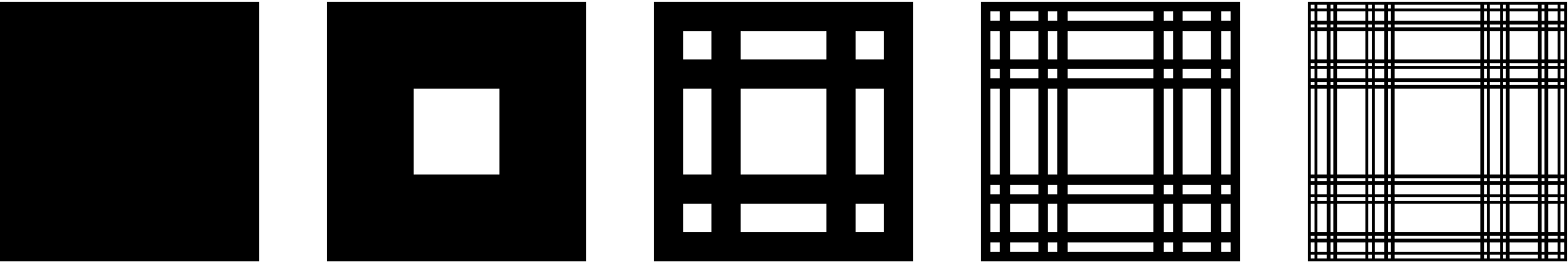}}\\
\caption{The first few iterations of the Cantor crossbar in $\mathbb{R}^2$.\label{fig:cantor_crossbar_R2}}
\end{center}
\end{figure}

The above construction can be generalized to the case where the ambient dimension is $d\geq 2$ as follows.
Recall that, for any $d'\geq 1$, the \emph{Cantor dust} in $\mathbb{R}^{d'}$, denoted by ${\cal D}_{d'}$, is the Cartesian product of $d'$ copies of the Cantor set (see Figure \ref{fig:cantor_dust}).
\begin{figure}
\begin{center}
\scalebox{0.7}{\includegraphics{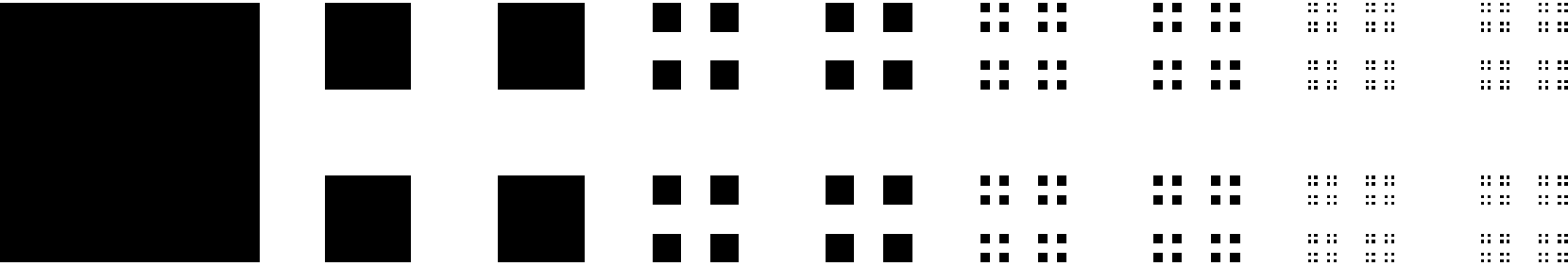}}\\
\caption{The first few iterations of the Cantor dust in $\mathbb{R}^2$.\label{fig:cantor_dust}}
\end{center}
\end{figure}
Let $e_1,\ldots,e_d$ be the standard orthonormal basis in $\mathbb{R}^d$.
For each $i\in \{1,\ldots,d\}$, we define ${\cal T}_i$ to be the Cartesian product of ${\cal D}_{d-1}$ with $[0,1]$, rotated so that $[0,1]$ is parallel to $e_i$.
Finally, we set $X={\cal T}_1\cup \ldots \cup {\cal T}_d$ (see Figure \ref{fig:cantor_crossbar_R3}).

\begin{figure}
\begin{center}
\scalebox{0.18}{\includegraphics{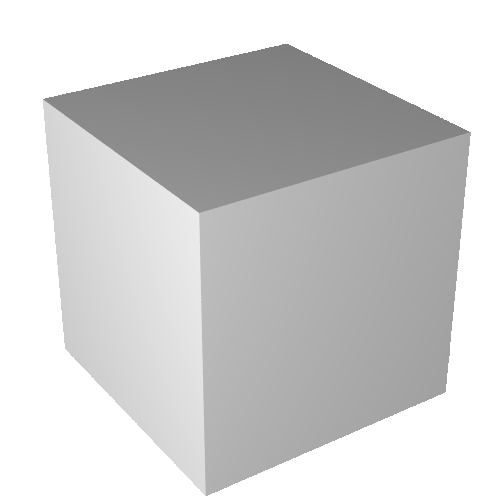}}
\scalebox{0.18}{\includegraphics{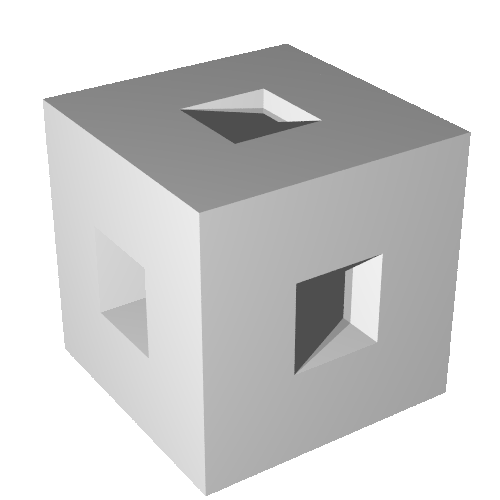}}
\scalebox{0.18}{\includegraphics{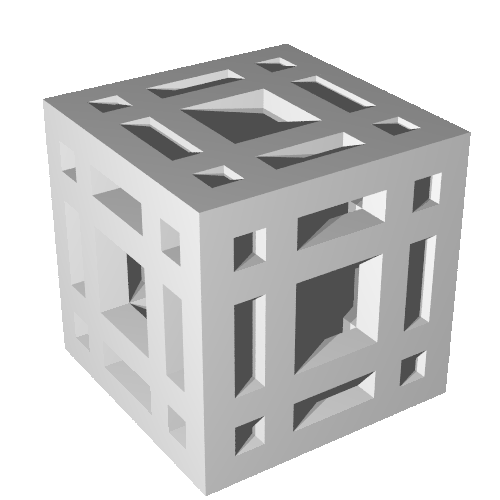}}
\scalebox{0.18}{\includegraphics{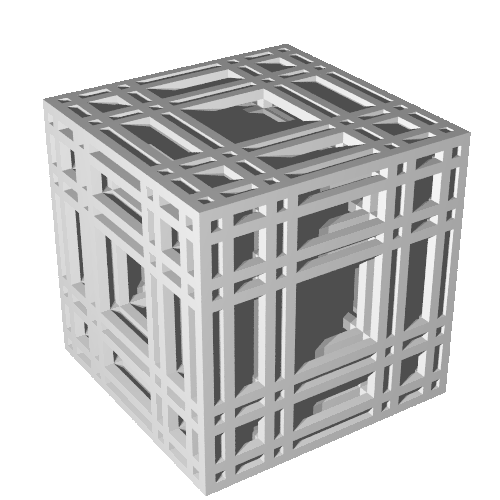}}
\caption{The first few iterations of the Cantor crossbar in $\mathbb{R}^3$.\label{fig:cantor_crossbar_R3}}
\end{center}
\end{figure}

The above construction gives a set with some fixed fractal dimension $\delta$, for each fixed $d\geq 2$.
We can generalize the construction so that $\delta$ attains any desired value in the range $(1,d]$.
The only difference is that, when defining the Cantor dust, we start with a Cantor set of smaller dimension.
This can be done by removing the central interval of length $\alpha\in (0,1)$, instead of $1/3$, and recursing on the remaining two intervals of length $(1-\alpha)/2$.

\paragraph*{From spanner lower bounds to running time lower bounds}

The above construction of the Cantor crossbar leads to a nearly-optimal lower bound for the treewidth of $O(1)$-spanners.
We next use this construction to obtain running time lower bounds.
Informally, a typical NP-hardness reduction for some geometric problem in the plane works as follows:
One encodes some known computationally hard problem by constructing ``gadgets'' that are arranged in a grid-like fashion in $\mathbb{R}^2$ (see, e.g.~\cite{MS14}).
More generally, for problems in $\mathbb{R}^d$, the gadgets are arranged along some $d$-dimensional grid.
We follow a similar approach, with the main difference being that we arrange the gadgets along a Cantor crossbar.

\subsection{Other related work}

There has been a large body of work on determining the effect of \emph{doubling} dimension on the complexity of various geometric problems \cite{har2006fast,bartal2012traveling,cole2006searching,krauthgamer2005black,gottlieb2008optimal,krauthgamer2005measured,chan2009small,chan2005hierarchical,gupta2012online,talwar2004bypassing}.
Other notions of dimension that have been considered include low-dimensional negatively curved spaces \cite{krauthgamer2006algorithms}, 
growth-restricted metrics \cite{karger2002finding},
as well as 
generalizations of doubling dimension to metrics of so-called bounded global growth \cite{hubert2012approximating}.
In all of the above lines of research the goal is to extend tools and ideas from the Euclidean setting to more general geometries.
In contrast, we study restricted classes of Euclidean instances, with the goal of obtaining better bounds than what is possible in the general case.



\subsection{Preliminaries} \label{subsec:preliminaries}
We give some definitions that are used throughout the paper.
\begin{definition}[\textbf{Fractal dimension}] \cite{sidiropoulos2017algorithmic}
The \emph{fractal dimension} of some $P\subseteq \mathbb{R}^d$, denoted by $\dimF(P)$, is defined as the infimum $\delta$, such that for any $\eps>0$ and $r\geq 2\eps$, for any $\eps$-net $N$ of $P$, and for any $x\in \mathbb{R}^d$, we have 
$|N \cap \ball(x,r)| = O((r/\eps)^\delta)$.
\end{definition}

We have the following lemmas showing invariance of fractal dimension under certain operations.

\begin{lemma}\label{lem:scaling}
Let $d \geq 1$ be an integer, let $0 < \delta \leq d$ and let $c>0$ be some constant. Let $P \subset \mathbb{R}^d$ be a pointset such that $\dimF(P) = \delta$. Let $P'$ be the pointset obtained by uniformly scaling the points of $P$ about the origin by a factor of $c$. Then $\dimF(P') = \delta$.    
\end{lemma}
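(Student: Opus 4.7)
The plan is to exploit the bijection between $\eps$-nets of $P$ and $(c\eps)$-nets of $P'$ that is induced by the uniform scaling map $\phi:\mathbb{R}^d \to \mathbb{R}^d$, $\phi(x)=cx$. The key geometric fact is simply that $\phi$ multiplies every Euclidean distance by exactly $c$, so the property defining an $\eps$-net (pairwise separation $>\eps$, and every point covered within distance $O(\eps)$) is transformed verbatim into the analogous property at scale $c\eps$. In particular, $N\subseteq P$ is an $\eps$-net of $P$ if and only if $\phi(N)\subseteq P'$ is a $(c\eps)$-net of $P'$, and $\phi$ maps balls to balls via $\phi(\ball(x,r))=\ball(cx,cr)$.

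First, I would establish the inequality $\dimF(P')\leq \dimF(P)$. Writing $\delta=\dimF(P)$, I would take any $\eps'>0$, any $r'\geq 2\eps'$, any $\eps'$-net $N'$ of $P'$, and any $x'\in\mathbb{R}^d$. Setting $\eps=\eps'/c$, $r=r'/c$, $x=\phi^{-1}(x')=x'/c$, and $N=\phi^{-1}(N')$, the preceding paragraph gives that $N$ is an $\eps$-net of $P$, that $r\geq 2\eps$, and that
\[
|N'\cap \ball(x',r')| \;=\; |N\cap \ball(x,r)|.
\]
Applying the fractal dimension bound for $P$ then yields $|N'\cap \ball(x',r')|=O((r/\eps)^\delta)=O((r'/\eps')^\delta)$, where the implicit constant depends only on the constant from the definition of $\dimF(P)$ and not on $c$. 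Taking the infimum over admissible $\delta$ gives $\dimF(P')\leq \dimF(P)$.

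Second, the reverse inequality $\dimF(P)\leq \dimF(P')$ follows by an identical argument with the roles of $P$ and $P'$ interchanged and the scaling factor replaced by $1/c$, since $P$ is obtained from $P'$ by uniform scaling about the origin by the factor $1/c$. Combining the two inequalities gives $\dimF(P')=\delta$, as desired.

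There is essentially no serious obstacle here; the only point requiring care is keeping the quantifier order in the definition of $\dimF$ straight, namely that one must exhibit a single $\delta$ that works simultaneously for \emph{every} $\eps$-net and every center $x$, and that the implicit $O(\cdot)$ constant produced by the transfer under $\phi$ is independent of $\eps'$, $r'$, and $c$. Because $\phi$ is a global similarity, the constants transfer without any loss, so no refinement of the argument is needed.
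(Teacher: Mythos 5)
Your proof is correct and takes essentially the same approach as the paper: both use the bijection $x \mapsto cx$ to transfer $\eps$-nets of one set to $(c\eps)$-nets of the other and balls to balls, which carries the counting bound across verbatim. Your write-up is arguably slightly more careful than the paper's in that you explicitly establish both inequalities $\dimF(P') \leq \dimF(P)$ and $\dimF(P) \leq \dimF(P')$ by symmetry, whereas the paper states only the forward bound and then asserts equality.
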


\begin{proof}
Fix a constant $c > 0$. Consider the natural mapping $f : P \to P'$ where every point $p$ maps to the point obtained by scaling $p$ by a factor of $c$ about the origin. For all sets $X \subseteq P$ we denote by $f(X)$ the set $Y = \{f(p)~|~p \in X \}$. Similarly we use $f^{-1}(Y)$ to refer to the inverse mapping to $X$. First we observe that that for all $\eps>0$, any $\eps$-net of $P$ maps to a $c \eps$-net of $P'$ under $f$ and any $\eps$-net of $P'$ maps to an $\frac{\eps}{c}$-net of $P$ under $f^{-1}$, since all pairwise distances are scaled by exactly a factor of $c$ in $P'$. Similarly for any $x \in \mathbb{R}^d$ and $r>0$, we have that $P' \cap \ball(x,r)$ is mapped to by $P \cap \ball \left(f^{-1}(x),\frac{r}{c} \right)$. Therefore for any $\eps>0$ and $r \geq 2\eps$, for any $\eps$-net $N$ of $P'$, and for any $x \in \mathbb{R}^d$, we have $|N \cap \ball(x,r)| = \left|f^{-1}(N) \cap \ball \left(f^{-1}(x),\frac{r}{c} \right) \right| = O \left( \left(\frac{r/c}{\eps /c} \right)^{\delta} \right) = O \left( \left(\frac{r}{\eps} \right)^\delta \right)$. From the definition of fractal dimension, it follows that $\dimF(P') = \delta$. 
\end{proof}

\begin{lemma} \label{lem:substitution}
Let $d \geq 1$ be some integer, let $0 < \delta \leq d$ and let $c > 0,k>0$ be some constants. Let $P \subset \mathbb{R}^d$ be a pointset such that $\dimF(P) = \delta$ and for all $u,v \in P$, $d(u,v) > 4c $. For all $p \in P$ let $S_p \subset \mathbb{R}^d$ be a set of points such that $|S_p| \leq k$ and for all $x \in S_p$, $d(x,p) \leq c$. Let $P' = \bigcup\limits_{p \in P} S_p$. Then $\dimF(P') = \delta$.    
\end{lemma}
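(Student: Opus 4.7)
The plan is to show that the substitution preserves fractal dimension via two structural facts about the family of clusters: each $S_p$ has diameter at most $2c$ and cardinality at most $k$, and, since $P$ is $4c$-separated, distinct clusters lie more than $2c$ apart. These facts let me transfer counting estimates between $\eps$-nets of $P'$ and $\eps$-nets of $P$ at a constant-factor cost that is absorbed into the implicit $O(\cdot)$ notation.

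For the upper bound $\dimF(P') \leq \delta$, I would fix an $\eps$-net $N'$ of $P'$, a radius $r \geq 2\eps$, and a point $x \in \mathbb{R}^d$, then split into two cases based on how $\eps$ compares to $c$. When $\eps \geq 4c$, each cluster $S_p$ has diameter $\leq 2c < \eps$, so at most one point of $N'$ lies in any given $S_p$; sending the chosen net point in $S_p$ to its center $p$ produces a subset $N \subseteq P$ of pairwise distance $> \eps - 2c \geq \eps/2$ with $|N' \cap \ball(x,r)| \leq |N \cap \ball(x, r+c)|$. Completing $N$ to an $(\eps/2)$-net $\tilde N$ of $P$ and applying the hypothesis $\dimF(P) = \delta$ then bounds this by $O(((r+c)/(\eps/2))^{\delta}) = O((r/\eps)^{\delta})$, after using $r \geq 2\eps \geq 8c$ to simplify. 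When $\eps < 4c$ I would instead use the cruder estimate $|N' \cap \ball(x,r)| \leq k \cdot |P \cap \ball(x, r+c)|$. Since $P$ is itself a $4c$-net of $P$, for $r \geq 7c$ the dimension bound yields $|P \cap \ball(x, r+c)| = O((r/c)^{\delta}) = O((r/\eps)^{\delta})$, while for $r < 7c$ the ball $\ball(x, r+c)$ has bounded diameter and a standard packing bound in $\mathbb{R}^d$ gives $|P \cap \ball(x, r+c)| = O_d(1) = O((r/\eps)^{\delta})$ since $r/\eps \geq 2$.

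For the matching lower bound $\dimF(P') \geq \delta$, I would argue by contrapositive: if the bound held for $P'$ with some exponent $\delta_0 < \delta$, it would transfer to $P$ and contradict $\dimF(P) = \delta$. Concretely, picking one representative $s_p \in S_p$ for each $p \in P$, any $\eps$-net $N$ of $P$ with $\eps \geq 4c$ produces an $(\eps - 2c)$-packing $\{s_p : p \in N\}$ of $P'$, which can be extended to an $(\eps - 2c)$-net $\tilde N'$ of $P'$ containing these representatives. The assumed bound for $P'$, together with $\eps - 2c \geq \eps/2$, then gives $|N \cap \ball(x, r)| \leq |\tilde N' \cap \ball(x, r+c)| = O((r/\eps)^{\delta_0})$, forcing $\dimF(P) \leq \delta_0 < \delta$.

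The main obstacle is the intermediate regime where $\eps$ is comparable to $c$, since neither extreme case by itself is tight there; the flexibility in choosing the case threshold at $4c$, together with absorbing the constants $k$ and $c$ into the implicit constants, is what closes the gap in that regime.
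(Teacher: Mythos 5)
Your upper-bound argument that $\dimF(P') \leq \delta$ is essentially the paper's: split on whether $\eps$ exceeds a multiple of $c$, transfer counting between nets of $P'$ and nets of $P$ via cluster representatives, and absorb the factor $k$ in the small-$\eps$ case. You are in fact more careful in the regime where $r = O(c)$, where you correctly fall back on a packing bound in $\mathbb{R}^d$; the paper's simplification $(r + \eps/2 + c)/\eps = O(r/\eps)$ does not hold uniformly there. Note also that the paper's written proof only establishes $\dimF(P') \leq \delta$; the equality claimed in the lemma is asserted but not actually argued.

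For the reverse direction $\dimF(P') \geq \delta$ your contrapositive set-up is fine, and the transfer for $\eps \geq 4c$ is correct, but it does not yet justify ``forcing $\dimF(P) \leq \delta_0$.'' To conclude that, you need the bound $|N \cap \ball(x,r)| = O((r/\eps)^{\delta_0})$ for \emph{every} $\eps > 0$ and $r \geq 2\eps$. For $\eps < 4c$, the $4c$-separation of $P$ forces $N = P$, so you must separately verify $|P \cap \ball(x,r)| = O((r/\eps)^{\delta_0})$ in that regime. This is doable with the same tools: for $r$ bounded below by a constant multiple of $c$, compare $P \cap \ball(x,r)$ with a $2c$-net of $P'$ built from the representatives and invoke the assumed bound for $P'$; for smaller $r$, use a packing bound together with $r/\eps \geq 2$. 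As written, however, your lower bound stops at $\eps \geq 4c$, leaving a genuine gap.
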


\begin{proof}
Fix constants $c>0$ and $k>0$. For all $p \in P$, let $\rep(p) \in S_p$ denote an arbitrarily chosen representative point. Let $\eps >0$ and let $N$ be an $\eps$-net of $P'$. 

First we consider the case when $\eps > 4c$. Let $M$ be some $ \left(\frac{\eps}{2} -2c \right)$-net of $P$. Consider the pointset $M' = \{ \rep(p)~|~p \in M \} $. Then $M'$ is an $\frac{\eps}{2}$-cover of $P'$. Since $N$ is $\eps$-packing, this implies that every point in $N$ is covered by a unique point in $M'$. Therefore for any $\eps>0$ and $r \geq 2\eps$, for any $\eps$-net $N$ of $P'$, and for any $x \in \mathbb{R}^d$, we have $|N \cap \ball(x,r)| \leq |M' \cap \ball(x,r+ \frac{\eps}{2})| \leq |M \cap \ball(x,r+ \frac{\eps}{2} + c)|$. Since the pointset $P$ has fractal dimension $\delta$, this implies that $|N \cap \ball(x,r)| \leq O\left( \left(\frac{r+ \frac{\eps}{2} + c}{\frac{\eps}{2} -2c} \right)^{\delta} \right) = O\left(\left(\frac{r}{\eps}\right)^{\delta}\right)$. From the definition of fractal dimension this implies that $\dimF(P') = \delta$.

Suppose instead that $\eps \leq 4c$. Then let $M$ be some $\eps$-net of $P$. Since for all $u,v \in P$ $d(u,v) > 4c$, it must be that $M = P$. Let $M' = P'$. Now using the same argument as before, we have that for any $\eps>0$ and $r \geq 2\eps$, for any $\eps$-net $N$ of $P'$, and for any $x \in \mathbb{R}^d$, $|N \cap \ball(x,r)| \leq \left|M' \cap \ball \left(x,r+ \frac{\eps}{2} \right) \right|$. Since $M' = P'$ and $M = P$, it follows that $\left|M' \cap \ball \left(x,r+ \frac{\eps}{2} \right) \right| \leq k \cdot \left|M \cap \ball \left(x,r+ \frac{\eps}{2} +c \right) \right|$. Now since $M$ is an $\eps$-net of $P$ $k \cdot \left|M \cap \ball \left(x,r+ \frac{\eps}{2} +c \right) \right| \leq k \cdot O \left(\frac{r+ \frac{\eps}{2} +c}{\eps} \right)^{\delta} = O \left(\frac{r}{\eps} \right)^{\delta}$. Again, from the definition of fractal dimension, it follows that $\dimF(P') = \delta$.
\end{proof}

\begin{definition}[\textbf{$c$-spanner}]
For any pointset $P \subset \mathbf{R}^d$, and for any $c \geq 1$, a \emph{$c$-spanner} for $P$ is a graph $G$ with $V(G) = P$, such that for all $x,y \in P$, we have
$$ ||x-y||_2 \leq d_G(x,y) \leq c \cdot ||x-y||_2, $$
where $d_G$ denotes the shortest path distance in $G$.
\end{definition}

\begin{definition}[\textbf{Treewidth}] \cite{Diestel12}
Let $G$ be a graph, $T$ a tree and $\mathcal{V} = \{ V_t \}_{t \in T}$ be a family of vertex sets $V_t \subseteq V(G)$ indexed by the vertices $t$ of $T$. The pair $(T, \mathcal{V})$ is called a tree-decomposition of $G$ if it satisfies the following three conditions:
\begin{enumerate}
\item $V(G) = \cup_{t \in T} V_t$.
\item For every edge $e \in G$, there exists a $t \in T$ such that both ends of $e$ lie in $V_t$.
\item $V_{t_1} \cap V_{t_3} \subseteq V_{t_2}$, whenever $ t_2$ lies in the unique path joining $t_1$ and $t_3$ in $T$.
\end{enumerate}
The width of $(T,\mathcal{V})$ is the number $\max \{|V_t|~:~t \in T \}$ and the \emph{treewidth} of $G$ is the least width of any tree-decomposition of $G$.
\end{definition}

\subsection{Organization}
This paper is organized as follows. Section \ref{sec:tw-lower-bound} presents lower bound on the treewidth of spanners for arbitrary pointsets with integral dimension, and with fractal dimension. Section \ref{sec:independent-balls} presents running time lower bound on the Independent Set of Balls problem on pointsets with arbitrary fractal dimension in $\mathbf{R}^d$. The proof of Theorem \ref{thm:tsplowerboundmain} has been moved to Section \ref{sec:TSP} of the appendix. This theorem proves running time lower bound on the Euclidean TSP problem on pointsets with arbitrary fractal dimension in $\mathbf{R}^d$. The case of $d=2$ is discussed in Section \ref{subsec:lowerboundtsp} and the case of $d \geq 3$ is discussed in Section \ref{subsec:lowerboundtsparbitdim}.

\section{Lower bounds on the treewidth of spanners} \label{sec:tw-lower-bound}
In this section, we obtain lower bounds on the treewidth of spanners for arbitrary pointsets. In subsection \ref{subsec:twintegerdim}, we consider pointsets with integral fractal dimension. In subsection \ref{subsec:sierp}, we consider a discretized version of the \siep~carpet whose fractal dimension is less than two but greater than 1. In subsection \ref{subsec:modifiedsierp}, we use a carefully chosen inductive construction to obtain a specific fractal pointset of fractal dimension $\frac{\log 6}{ \log 3}$. This pointset gives us a nearly tight lower bound on the treewidth of a spanner. We finally generalize this construction in subsection \ref{subsec:twfractal} and present the proof of Theorem \ref{thm:spannerhigherdim}. 

\subsection{Treewidth and integral dimension}\label{subsec:twintegerdim}
We obtain lower bounds on the treewidth of spanners for pointsets with integral fractal dimension. We will make use of the following Threorem due to Kozawa \etal \cite{Kozawaetal10} for the proofs in this section.
\begin{theorem}\label{thm:gridlowerbound} \textnormal{\cite{Kozawaetal10}}
The treewidth of the $d$-dimensional grid on $n$ vertices is $\Theta(n^{1-1/d})$.
\end{theorem}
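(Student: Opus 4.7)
The plan is to prove matching upper and lower bounds of $\Theta(m^{d-1})$ on the treewidth, where $m=\lceil n^{1/d}\rceil$ so that $m^{d-1}=\Theta(n^{1-1/d})$. Since the $d$-dimensional grid on $n$ vertices is the cube $[m]^{d}$ up to harmless rounding, it suffices to establish $\tw([m]^{d})=\Theta(m^{d-1})$, and I would handle the two directions separately.

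For the upper bound I would exhibit a direct path decomposition. Partition $[m]^{d}$ into $m$ axis-orthogonal slices $S_{1},\ldots,S_{m}$ along (say) the first coordinate, each isomorphic to $[m]^{d-1}$ and of cardinality $m^{d-1}$. Define bags $B_{j}=S_{j}\cup S_{j+1}$ for $j=1,\ldots,m-1$, arranged on a path in order. Each bag has size $2m^{d-1}$; every vertex $v\in S_{j}$ appears exactly in the contiguous block $B_{j-1},B_{j}$; and every grid edge either lies inside some slice (covered by $B_{j-1}$ or $B_{j}$) or crosses between $S_{j}$ and $S_{j+1}$ (covered by $B_{j}$). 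Hence $\pw([m]^{d})\le 2m^{d-1}$, and in particular $\tw([m]^{d})=O(m^{d-1})$.

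For the lower bound I would invoke the well-known characterization that any graph $H$ of treewidth $t$ admits a balanced separator $S\subseteq V(H)$ of size at most $t+1$ whose removal leaves every component of size at most $\tfrac{2}{3}|V(H)|$. Given such an $S$ for $[m]^{d}$, group the components of $[m]^{d}\setminus S$ into two sides $A,B$ with $|A|,|B|\ge n/3$; since $A$ and $B$ are grid-separated by $S$, the outer vertex boundary of $A$ in $[m]^{d}$ is contained in $S$. By the vertex isoperimetric inequality for the grid (derivable from Loomis--Whitney, or the sharper Bollob\'{a}s--Leader estimate), any subset of $[m]^{d}$ with cardinality in $[n/3,2n/3]$ has outer vertex boundary of size $\Omega(m^{d-1})$, yielding $|S|=\Omega(m^{d-1})$ and therefore $\tw([m]^{d})=\Omega(m^{d-1})=\Omega(n^{1-1/d})$.

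The main obstacle is the vertex isoperimetric step. The slickest tool is Loomis--Whitney, namely $|T|^{d-1}\le \prod_{i=1}^{d}|\pi_{i}(T)|$ for $T\subseteq [m]^{d}$: taking $T=A$ of size $\Theta(m^{d})$ forces some axis projection to have size $\Omega(m^{d-1})$, and to convert this into a boundary bound I would count axis-parallel lines cut by $S$. If $|S|=o(m^{d-1})$, then in each of the $d$ axis directions all but $o(m^{d-1})$ lines are $S$-free; a short pigeonhole argument through the resulting $S$-free lines shows $[m]^{d}\setminus S$ would still be connected, contradicting $|A|,|B|\ge n/3$. Managing the $1/3$--$2/3$ balancing constants cleanly, and keeping the dependence on $d$ from degrading the asymptotic bound, is the only real technical care in the proof.
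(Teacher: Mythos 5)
The paper does not prove this statement; it is quoted from~\cite{Kozawaetal10} and used as a black box, so there is no ``paper proof'' to compare against. Judged on its own terms, your sketch follows the standard textbook route: the upper bound via a sliced path decomposition is completely correct (bags $B_j=S_j\cup S_{j+1}$ indeed give $\pw([m]^d)\le 2m^{d-1}$), and the lower bound via a $\tfrac{2}{3}$-balanced separator plus a vertex-isoperimetric inequality for the grid is the right strategy.

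There is, however, a genuine gap in the way you dispose of the isoperimetric step. The specific claim ``if $|S|=o(m^{d-1})$ then \ldots\ $[m]^{d}\setminus S$ would still be connected'' is false: a set $S$ of constant size (e.g.\ the $\le 3^d-1$ grid-neighbours of a corner vertex) already disconnects $[m]^d$. What you actually need is that $[m]^d\setminus S$ has a component of size greater than $\tfrac{2}{3}m^d$, and establishing this from the count of $S$-free lines is not a ``short pigeonhole argument.'' The natural way to make it work is an induction on the dimension: project onto one axis, argue that most $(d-1)$-dimensional slabs contain a small portion of $S$, apply the inductive giant-component statement inside each good slab, and then use $S$-free lines in the projected direction to glue the slab-wise giant components together; this produces a component of size $(1-c_d)m^d$ for a constant $c_d<\tfrac13$ provided $|S|\le c'_d m^{d-1}$. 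Alternatively one can cite the Bollob\'{a}s--Leader vertex-isoperimetric inequality outright, which is what you gesture at. Either way, the Loomis--Whitney estimate alone does not close the argument for general $d$ (the resulting inequality $|S|\ge 2(m^d/3)^{(d-1)/d}-m^{d-1}$ becomes vacuous already at $d=3$), and the balancing/dimension bookkeeping you flag as the ``only real technical care'' is actually secondary to this missing inductive isoperimetry step. Once the giant-component lemma is stated and proved, the rest of your argument goes through.
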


\begin{theorem}\label{thm:ddimensionalgridspanner}
For any integer $d \geq 1$, there exists a set of $n$ points $P \subseteq \mathbb{R}^{d}$ such that for any $c\geq 1$, and for any $c$-spanner $G$ of $P$, $\tw(G) = \Omega \left( \frac{n^{1-1/d}}{c^{d-1}} \right)$. 
\end{theorem}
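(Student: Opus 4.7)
The plan is to take $P = \{1, \ldots, \lceil n^{1/d}\rceil\}^d$, the integer grid on (approximately) $n$ points, and to show that any $c$-spanner $G$ of $P$ contains the $m \times \cdots \times m$ grid as a minor with $m = \Omega(n^{1/d}/c)$. Applying Theorem~\ref{thm:gridlowerbound} to this minor then yields $\tw(G) \geq \Omega(m^{d-1}) = \Omega(n^{1-1/d}/c^{d-1})$, completing the proof. To build this minor, I fix a sublattice spacing $s = \lceil c+1 \rceil$ and consider the induced sublattice $L \subset P$ of size $m^d$ with $m = \lfloor n^{1/d}/s \rfloor = \Omega(n^{1/d}/c)$.

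For each direction $i \in \{1, \ldots, d\}$ and each $(d-1)$-tuple $\vec{a}$ fixing the perpendicular sublattice coordinates, I define $p_{i,\vec{a}}$ to be the simple path in $G$ obtained by concatenating shortest $G$-paths between consecutive integer points along the corresponding axis-parallel line in $P$, deleting any loops. Generalizing the $d=2$ argument sketched in the commented-out portion of the excerpt, the key lemma is that for each fixed $i$ the paths $\{p_{i,\vec{a}}\}_{\vec{a}}$ are pairwise vertex-disjoint: a shared vertex $z$ would lie on two shortest $G$-paths (each of $G$-length at most $c$, by the spanner condition on consecutive integer points at Euclidean distance $1$) between consecutive integer line points on distinct parallel lines, and by a triangle inequality in the $G$-metric at least one pair of endpoints from the two different lines would have $G$-distance, and hence Euclidean distance, at most $c$; this contradicts their Euclidean separation, which is at least $s > c$ because distinct parallel lines in direction $i$ differ by at least $s$ in some perpendicular coordinate.

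To extract the grid minor, I assign each sublattice point $x_{\vec{a}}$ the branching set $B_{\vec{a}}$ defined as its shortest-path Voronoi cell with respect to $L$ in $G$ (breaking ties lexicographically). Because $d_G(x_{\vec{a}}, x_{\vec{b}}) \geq \|x_{\vec{a}} - x_{\vec{b}}\|_2 \geq s$ for distinct sublattice points, the cells are pairwise disjoint; and a standard argument shows that each $B_{\vec{a}}$ is connected in $G$, since for any $v \in B_{\vec{a}}$, every vertex on a shortest $G$-path from $v$ to $x_{\vec{a}}$ also has $x_{\vec{a}}$ as its nearest sublattice point. For each sublattice-adjacent pair $(\vec{a}, \vec{b})$, the sub-path of $p_{i, \vec{a}_{-i}}$ from $x_{\vec{a}}$ to $x_{\vec{b}}$ starts in $B_{\vec{a}}$ and ends in $B_{\vec{b}}$, so it must contain a transition edge witnessing the corresponding edge of the minor. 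The main obstacle I foresee is ensuring that these transitions occur directly between $B_{\vec{a}}$ and $B_{\vec{b}}$ rather than being intercepted by the cell of a third sublattice point whose Voronoi region bulges along the sub-path; this can be handled either by enlarging $s$ to a sufficient constant multiple of $c$ (which still gives $m = \Omega(n^{1/d}/c)$) or by locally refining the branching sets along each line sub-path. With the $m^d$-grid minor in hand, Theorem~\ref{thm:gridlowerbound} gives the claimed treewidth lower bound.
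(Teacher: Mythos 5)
Your pointset, sublattice, path construction (concatenated shortest $G$-paths between consecutive unit-spaced grid points), and the triangle-inequality argument for vertex-disjointness of paths along parallel sublattice lines all match the paper's proof. Where you diverge is in extracting the grid minor: the paper uses the singleton branching sets $B_x=\{x\}$ and directly contracts the sub-paths of the $Q_i$'s between grid-adjacent points of $X$, whereas you propose $G$-metric shortest-path Voronoi cells of the sublattice. This is not a harmless cosmetic change. The ``interception'' obstacle you flag is a genuine gap, and your first proposed fix (enlarge $s$ to a constant multiple of $c$) does not close it. Take a vertex $v$ on the sub-path from $x$ to $y=x+s\,e_i$ lying near an integer point $p_a=x+a\,e_i$ with $a\approx s/2$. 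The only upper bound available on $\min\bigl(d_G(v,x),d_G(v,y)\bigr)$ comes from the spanner guarantee $d_G(p_a,x)\leq c\cdot a$, giving roughly $cs/2$; meanwhile the only lower bound on $d_G(v,z)$ for a third sublattice point $z$ is the Euclidean bound $\|v-z\|\geq s-O(c)$. For $c\geq 2$ we have $cs/2\geq s$, so these bounds never separate, and no choice of $s$ rules out a third cell $B_z$ claiming the middle of the sub-path so that the sub-path never transitions directly from $B_x$ to $B_y$. Your second fix (``locally refining the branching sets'') is too vague to evaluate.

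The paper's contraction route avoids this problem because the sub-paths themselves are essentially internally disjoint: any vertex on a shortest $G$-path of $G$-length $\leq c$ between consecutive unit-spaced integer points is within Euclidean distance $c/2$ of one of its two endpoints, hence within $c/2$ of the underlying unit segment. The segments associated to non-adjacent sublattice pairs are at Euclidean distance at least $c+1$ apart, so their sub-paths cannot share a vertex; the only possible overlaps are between sub-paths sharing an $X$-endpoint, and those shared vertices lie within $O(c)$ of that endpoint and collapse harmlessly into its branching set under contraction. You should replace the Voronoi-cell step with this direct contraction of the sub-paths; the rest of your argument (and the final appeal to Theorem~\ref{thm:gridlowerbound} with $m=\Omega(n^{1/d}/c)$) is correct.
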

\begin{proof}
Fix an integer $d \geq 1$. Let $n \in \mathbf{N}$ be such that $n^{\frac{1}{d}}$ is an integer. Let $P = \{ 1,2, \cdots, n^{\frac{1}{d}} \}^{d}$ and let $G$ be any $c$-spanner of $P$. Let $p = (p_1, \ldots, p_d)$ be a point in $P$. We define $P'$ and $X'$ as follows:
$$P' = \{ p \in P~|~ \exists~ 1 \leq i \leq d \text{ such that}~\forall~ j \neq i, ~(p_j ~\mbox{mod}~ (c+1)) = 1 \}.$$
$$X = \{ p \in P'~|~ \forall~ 1 \leq i \leq d,~  ( p_i ~\mbox{mod}~ (c+1)) = 1 \}.$$ 
Consider the points in $P'$. We call a row of points in $P'$ that is parallel to one of the $d$ axes a \emph{full row} if this row consists of exactly $n^{\frac{1}{d}}$ points with adjacent points unit distance apart. Consider for any $1 \leq i \leq d$, the points of any pair of full rows $R$ and $T$ that are both parallel to the $i$\textsuperscript{th} axis. We have that for any pair of consecutive points $x_{1}, x_{2}$ in $R$ and for any pair of consecutive points $y_{1}, y_{2}$ in $T$, no shortest path in $G$ joining $x_{1}$ and $x_{2}$ can intersect any shortest path in $G$ joining $y_{1}$ and $y_{2}$. Suppose not, then let $z$ be a point of intersection between two such shortest paths. Since $G$ is a $c$-spanner and consecutive points in any row are distance $1$ apart, we have $d_G(x_{1} , z) + d_G(x_{2} , z) \leq c$ and similarly, $d_G(y_{1} , z) + d_G(y_{2} , z) \leq c$. But this implies that at least one of $d_G(x_{1},y_{1})$, $d_G(x_{1},y_{2})$, $d_G(x_{2},y_{1})$ and $d_G(x_{2},y_{2})$ is at most $c$ due to triangle inequality. This is a contradiction because $G$ is non-contracting and the distance between $R$ and $T$ is at least $c+1$ by our choice of $R$ and $T$.

Now if we consider a shortest path between every pair of consecutive points in the row $R$ and concatenate these paths, remove all loops, then we can obtain a path from one end of this row to the other end. Doing the same for all full rows in $P'$, we end up with a set of paths traversing the points in the full rows. Moreover from the earlier argument, it follows that any two such paths obtained from parallel full rows are vertex disjoint. Thus for all $1 \leq i \leq d$, we can obtain a set of vertex disjoint paths $Q_i$ in $G$ that traverse the points in the full rows of $P'$ parallel to the $i$\textsuperscript{th} axis. 

Finally we define a subgraph $H$ of $G$ as follows: $H$ consists of the points in $X$. Furthermore, for any pair of points $p,q \in X$ such that $p$ and $q$ differ only along one coordinate, say the $i$\textsuperscript{th} coordinate, and differ by exactly $c+1$, $H$ also consists of the sub-path between $p$ and $q$ of the corresponding path in $Q_i$ connecting $p$ and $q$. Now contracting these paths in $H$ between adjacent points in $X$ results in a $d$-dimensional grid with $\frac{n}{\lceil (c+1) \rceil^{d}}$ points. Thus, we conclude that $\tw(G) = \Omega \big( \frac{n^{1-1/d}}{(c+1)^{d-1}} \big) = \Omega \big( \frac{n^{1-1/d}}{c^{d-1}} \big)$.  
\end{proof}

\subsection{A first attempt: The \siep~carpet}\label{subsec:sierp}

Consider a set of points $X$ obtained by the following method: start with a $3^{k} \times 3^{k}$ integer grid for some $k \in \mathbb{N}$ and partition it into $9$ subgrids of equal size. We delete all the points in the central subgrid and recurse on the remaining $8$ subgrids. The recursion stops when we arrive at a subgrid containing a single point. This is a natural discrete variant of the \siep~ carpet.
\begin{theorem}\label{thm:fractaldimension}
Let $\delta$ be the fractal dimension of the set of points $X$ obtained above. Then we have $\delta \geq \log_{3} 8$.
\end{theorem}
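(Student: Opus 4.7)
My plan is to exploit the self-similar structure of $X$ to exhibit, at a whole range of scales, balls containing many net points, thereby forcing the exponent in the definition of $\dimF$ to be at least $\log_3 8$.

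First I would unroll the construction by induction on $k$. It is immediate that $|X| = 8^k$, that $X \subset [1,3^k]^2$, and that the minimum pairwise distance in $X$ is at least $1$. The key structural fact I would pull out of the recursion is that, at every intermediate level $j \in \{0,1,\ldots,k\}$, the set $X$ decomposes into $8^{k-j}$ disjoint ``level-$j$ sub-copies'', each being a translated copy of the level-$j$ discrete \siep~carpet, hence consisting of exactly $8^j$ integer points inside an axis-aligned $3^j \times 3^j$ box.

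Next I would apply the definition of fractal dimension. Since all pairwise distances are at least $1$, any choice of $\eps \in (0,1)$, e.g.\ $\eps = 1/2$, makes $N := X$ itself an $\eps$-net of $X$ (the packing condition $\|u-v\| > \eps$ holds, and the covering condition is vacuous). Now for each $j \in \{0,\ldots,k\}$ I would pick $x_j \in \mathbb{R}^2$ to be the center of one of the level-$j$ sub-copies and let $r_j$ be a suitable constant multiple of $3^j$, large enough so that $\ball(x_j,r_j)$ contains that whole sub-copy. Then $r_j \geq 2\eps$, and
\[
|N \cap \ball(x_j,r_j)| \;\geq\; 8^j,\qquad r_j/\eps \;=\; \Theta(3^j).
\]
If $\dimF(X) = \delta$, then by definition there is a constant $C$ with $|N \cap \ball(x_j,r_j)| \leq C (r_j/\eps)^\delta$ for every admissible $(\eps,r_j,N,x_j)$. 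Plugging in the two estimates above gives $8^j \leq C' \cdot 3^{j\delta}$ for every $j \leq k$; letting $j$ (hence $k$) tend to infinity forces $\log 8 \leq \delta \log 3$, i.e.\ $\delta \geq \log_3 8$.

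The only real subtlety is definitional rather than combinatorial: the statement speaks of ``the fractal dimension of $X$'' as if $X$ were a single set, but since a finite pointset has $|N \cap \ball(x,r)| \leq |X|$ a nontrivial lower bound on $\dimF$ is only meaningful when $k$ is taken arbitrarily large, so that the scales $3^j$ used in Step 3 range over an unbounded interval. Once this standard reading is adopted (as is usual when attaching a dimension to an iterated fractal), the rest of the argument is just the self-similar point count, with no real obstacle beyond keeping track of the constants hidden in the $\Theta(\cdot)$ for $r_j/\eps$.
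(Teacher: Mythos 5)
Your proposal is correct and takes essentially the same approach as the paper: both arguments exhibit a ball of radius $\Theta(3^j)$ containing $8^j$ net points and divide by the minimum spacing to force $8^j \lesssim 3^{j\delta}$. The only cosmetic difference is that the paper rescales the whole carpet into a unit square, takes $\eps$ proportional to the final grid spacing $3^{-i}$, and uses a single ball of radius $\sqrt{2}$ covering the entire set (then lets $i \to \infty$ across instances), whereas you keep the integer-scale pointset fixed with $\eps = 1/2$ and instead sweep a family of balls over all levels $j \leq k$ inside a single instance before letting $k \to \infty$; the two are related by a global rescaling and give the same inequality. You also make explicit the definitional subtlety that a lower bound on $\dimF$ of a finite pointset is only meaningful asymptotically over the family $\{X_k\}$, a point the paper's proof leaves implicit. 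No gap.
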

\begin{proof}
We start by recalling the definition of fractal dimension of a set of points $P$. It is the infimum $\delta > 0$ such that for any $\eps > 0$, for any ball $B$ of radius $r \geq 2 \eps$ and for any $\eps$-net $N$, we have $ \vert B \cap N \vert = O((r/ \eps)^{\delta})$. 

As seen in the construction, the width of the grid reduces by $\frac{1}{3}$ at every step of recursion. Thus we may assume that the width is $\frac{1}{3^{i}}$ when we stop. Let $\eps = \frac{1}{2} \cdot \frac{1}{3^{i}}$. Let $N = X$. We have $n = \vert X \vert = 8^{i}$ since every step of recursion is done on the remaining $8$ subgrids. Let $r = \sqrt{2}$. Then $\vert B(x,r) \cap N \vert = n = 8^{i} \leq m \cdot \big( \frac{r}{\eps} \big) ^{\delta}$ for some constant $m > 0$. This gives $8^{i} \leq m \cdot (\sqrt{2} \cdot 2 \cdot 3^{i})^{\delta}$. Taking $\log$ on both sides, we get $ \log 8 \leq \delta \log 3$. Thus, $ \delta \geq \log_{3} 8$.
\end{proof}

\begin{theorem} \label{thm:sierpinski-carpet}
There exists a set of $n$ points $P$ with fractal dimension $\delta \in (1,2)$, and some $c$-spanner $G$ of $P$, where $c < 1+ \sqrt{2}$, such that $\tw(G) = \Theta(n^{1-\frac{1}{\delta}-\eps})$, for some $\eps>0$.
\end{theorem}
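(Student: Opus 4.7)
The plan is to instantiate the theorem using the discrete \siep~carpet. Let $P \subset \mathbb{R}^2$ be the discrete \siep~carpet at level $k$ defined before Theorem \ref{thm:fractaldimension}, so that $n = |P| = 8^k$ and, by Theorem \ref{thm:fractaldimension}, $\dimF(P) \geq \log_3 8 \in (1,2)$; set $\delta = \log_3 8$. Define the spanner $G$ on $P$ to have an edge between every pair of points at Euclidean distance at most $\sqrt{2}$, i.e., the restriction of the King's graph on the ambient grid to $P$.

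The spanner property is verified by induction on $k$. For any two points $u, v \in P$, the shortest spanner path uses axis-aligned and diagonal unit moves. The extremal case for the ratio is a pair separated by a removed central hole of side length $h$; routing around the hole along the first present row or column immediately outside it (using a mixture of unit and diagonal edges) yields a path of length at most $(2 + o(1))\|u-v\|$ as $h \to \infty$. Since nested holes require a detour only around the largest one encountered along the straight-line segment, the overall spanner ratio is bounded by a constant $c^{*} < 1 + \sqrt{2}$.

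For the treewidth upper bound, I would construct a tree decomposition using a recursive ``central cross'' separator. The central row and central column of the level-$k$ carpet each contain exactly $2^k$ points, by the recursion $f(k) = 2 f(k-1)$ with $f(1) = 2$: the central column passes only through the top-middle and bottom-middle sub-carpets (since the middle sub-carpet is removed), each itself a level-$(k-1)$ carpet. Removing the central cross $S_k$ of $|S_k| = 2 \cdot 2^k$ points disconnects $G$ into four quadrants of size $\approx n/4$, each an ``augmented'' variant of a level-$(k-1)$ carpet consisting of a full corner sub-carpet plus half-slices of two adjacent edge sub-carpets. Recursively applying the same decomposition to this augmented family yields $T(k) \leq 2 \cdot 2^k + T(k-1)$, so $T(k) = O(2^k) = O(n^{1/3})$. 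Since $1 - 1/\delta = 1 - \log_8 3 > 1/3$, this yields $\tw(G) = O(n^{1-1/\delta-\eps})$ for $\eps := 1 - 1/\delta - 1/3 > 0$. A matching lower bound $\tw(G) = \Omega(n^{1/3})$ follows by noting that for every $y$ in the discretized Cantor set the full row $\{(x, y) : 1 \leq x \leq 3^k\}$ is contained in $P$ (a point is removed only when both coordinates have the middle base-$3$ digit at some common level), and symmetrically for columns; the $2^k$ Cantor rows and $2^k$ Cantor columns realize a $(2^k)\times(2^k)$-grid minor in $G$, so Theorem \ref{thm:gridlowerbound} gives $\tw(G) = \Omega(2^k) = \Omega(n^{1/3})$.

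The main technical obstacle is the recursive step of the upper bound: a quadrant of the level-$k$ carpet is not itself a sub-carpet, so the argument must be set up for a slightly generalized family of ``augmented carpets'' (a sub-carpet together with attached half-carpet slices), and one must check that applying the central-cross separator to such a piece again yields four pieces in the same family together with an $O(2^{k-1})$-sized separator.
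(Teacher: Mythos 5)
Your proposal follows the same overall strategy as the paper: take a discrete \siep~carpet, build a specific spanner on it, lower-bound its treewidth via a grid minor, and upper-bound it via a recursive balanced-separator tree decomposition, then note $1/3 < 1 - 1/\delta$ for $\delta = \log_3 8$. The differences are in the details. You take $P$ to be the carpet pointset itself and use the restricted King's graph, whereas the paper takes the carpet's grid vertices together with one interior point per unit cell and uses the grid graph plus one diagonal per cell; both are claimed to give $c < 1 + \sqrt{2}$, and neither claim is fully justified (yours by an informal routing-around-the-largest-hole argument, the paper's by a bare assertion). Your lower bound is the cleaner of the two: the $2^k$ full Cantor rows and $2^k$ full Cantor columns directly realize a $2^k \times 2^k$ topological grid minor, with vertex-disjointness away from the crossing points following immediately from the rows and columns being distinct; the paper instead compresses the removed boxes and appeals to a minimum-cut count without actually verifying that the resulting graph is a grid. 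For the upper bound, the obstacle you flag --- that the four quadrants produced by removing the central cross are not themselves sub-carpets, so the recursion must be phrased for an augmented family consisting of a corner sub-carpet together with two half-slices of adjacent sub-carpets --- is genuine, but it is equally present in the paper's proof, which merely asserts that each quadrant contains a copy of the level-$(k-1)$ carpet and concludes that its minimum cut is $\Theta(2^{k-1})$; containment only gives a lower bound on that cut, so the paper's recursion needs exactly the elaboration you identify. In short, you have re-derived the paper's argument with a tighter lower-bound step, and you have correctly identified the soft spot in the upper-bound recursion that the paper itself glosses over.
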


\begin{proof}
We consider the set of points $P$ of the \siep~carpet. For completion, we briefly describe the construction below:

We start with a $[0,3] \times [0,3]$ integer grid. This is a grid with $9$ equal sized boxes. We leave the central box and partition each of the remaining 8 boxes into a $3 \times 3$ grid. We keep recursing on the new boxes obtained, every time leaving the central box. In the end, every box contains a single point. The pointset obtained in the end is a $3^{k} \times 3^{k}$ grid with holes and a point in each of the boxes of the grid. The set of vertices $P$ of this grid form a \siep~carpet.

We now add edges between grid points of $P$ such that the graph obtained is a subgraph of the $3^{k} \times 3^k$ grid graph. Now the points inside the boxes are joined to the grid graph in the following manner: for every box, the point inside it is joined to one of the pairs of diagonally opposite corners of the box. We observe that this is a $c$-spanner with $c < 1+ \sqrt{2}$. We denote this $c$-spanner by $G$. 

Consider the subgraph $H$ of $G$ obtained by removing the points inside the boxes of the $3^{k} \times 3^{k}$ grid and the edges adjacent to them. We now show that $H$ contains a $n^{\frac{1}{3}} \times n^{\frac{1}{3}}$ grid as a minor. The minor is obtained as follows: consider the central boxes (holes) in the grid that are left every time with no further partitioning. We compress each of these boxes to a box of the size of those in the $3^{k} \times 3^{k}$ grid. We denote the resulting graph by $M$.

We claim that $M$ is an $n^{\frac{1}{3}} \times n^{ \frac{1}{3}}$ grid. We observe that after $k$ steps of recursion, the minimum cut in $G$ has $2^{k}$ edges. This is because the minimum cut in $G$ is the one that partitions $G$ into two equal parts (either horizontally or vertically) and passes through the largest empty central box in $G$. Since the size of $M$ is same as the size of the minimum cut in $G$, we get that $M$ is a $2^{k} \times 2^k$ grid. The number of points $n = 8^{k}$. This gives $k = \log_{8} n$ and $2^{k} = n^{\frac{1}{3}}$.

From the graph minor theorem, we have $\tw(G) \geq \tw(H) \geq \tw(M)$. Thus $\tw(G) = \Omega(n^{\frac{1}{3}})$. For $\eps \geq \frac{2- \log 3}{3}$, we get $$n^{\frac{1}{3}} \geq n^{1- \frac{\log 3}{3} - \eps} = n^{1 - \frac{1}{\delta} - \eps}$$ 
Here, $\delta = \log_{3} 8$ from Theorem \ref{thm:fractaldimension}. This gives $\tw(G) = \Omega(n^{1- \frac{1}{\delta} - \eps})$ for some fixed constant $\eps > 0$. 

This establishes a lower bound on $\tw(G)$. We now prove a similar upper bound. This is done by constructing a tree decomposition of the spanner $G$. The construction is done in the following manner:

The root bubble consists of vertices in the two middle rows and two middle columns of the graph $G$. The set of edges between the two middle columns form a minimum cut of $G$. The same is true for the set of edges between the two middle rows. From above, the size of minimum cut in $G$ is $2^{k}$. Thus the number of vertices in the root bubble is $\Theta(2^{k})$. 

The removal of vertices of the root bubble partitions $G$ into $4$ connected components. Let the connected components be $G_{1}, G_{2}, G_{3}$ and $G_{4}$. We now look at the minimum cut for each of these components. As before, we look at the minimum cut in the horizontal direction as well as in the vertical direction. For $i \in [4]$, the minimum cut in $G_{i}$ along both horizontal and vertical directions passes through the largest empty box in $G_{i}$. Note that for all $i \in [4]$, $G_{i}$ contains a copy of the \siep~carpet obtained after $k-1$ levels of recursion. Therefore we get that the size of the minimum cut in $G_{i}$ along both horizontal and vertical directions is $\Theta(2^{k-1})$. For $i \in [4]$, let the minimum cut (in the horizontal as well as vertical direction) in $G_{i}$  be denoted by $S_{i}$ respectively. The root bubble has four children and the $i^{\mathrm{th}} $ child consists of vertices of the root bubble as well as vertices of the set $S_{i}$.

In order to construct the third level of the tree, we look at the graph obtained after removing vertices in the bubbles at the second level. In each of the connected components obtained, we again look at the minimum cut in  horizontal as well as vertical direction. The size of the minimum cut now is $\Theta(2^{k-2})$. Every bubble in the second level has $4$ children. Each of these children consists of vertices corresponding to the minimum cut of a connected component union vertices of all the bubbles in its ancestor path (i.~e.~bubbles occurring in the shortest path to the root bubble). 

We keep constructing more levels of our tree in a similar manner as above. We stop when all edges of $G$ get covered. The tree thus obtained has $k$ levels. Every bubble at level $l$ contains $\Theta(2^{k} + 2^{k-1} + \cdots + 2^{k-l+1})$ vertices.  

We now show that the decomposition obtained above is a valid tree decomposition. By construction, all edges are covered in this decomposition. Every bubble that is not a leaf has $4$ children and contains all vertices belonging to  the bubbles in its ancestor path. Thus, all conditions of a tree decomposition are satisfied. 

The treewidth of this decomposition is $\Theta(2^{k})$. This shows that $\tw(G) = O(2^{k})$. As shown above, $2^{k} = n^{\frac{1}{3}}$ and for $\eps = \frac{2 - \log 3}{3}$, $n^{\frac{1}{3}} = n^{1- \frac{1}{\delta} - \eps}$. Thus combining the lower and upper bounds, we get that $\tw(G) = \Theta(n^{1 - \frac{1}{\delta} - \eps})$.
\end{proof}



\subsection{One treewidth-extremal fractal: A discretized Cantor crossbar}\label{subsec:modifiedsierp}

In this section we describe the construction of a pointset in $\mathbb{R}^2$ with a specific fractal dimension of $\frac{\log 6}{\log 3}$. This pointset is a discretized version of the \emph{Cantor crossbar}. We generalize this construction in the next section to construct pointsets with arbitrary fractal dimension.

\begin{theorem}\label{thm:modifiedsieplowerbound}
Let $\delta = \frac{\log(6)}{\log(3)}$. Then for all $n_0 \in \mathbb{N}$, there exists a set $P\subset \mathbb{R}^2$ of $n \geq n_0$ points of fractal dimension at most $\delta$, such that for any $c \geq 1$, any $c$-spanner $G$ of $P$ has $\tw(G) = \Omega \left(\frac{n^{1-\frac{1}{\delta}}}{c} \right)$.
\end{theorem}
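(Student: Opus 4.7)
The plan is to take $P$ to be a discretized \emph{Cantor crossbar} $P_k$ at scale $3^k$ in $\mathbb{R}^2$, for sufficiently large $k \in \mathbb{N}$, and to verify (i) $|P_k| = \Theta(6^k)$, (ii) $\dimF(P_k) \le \log 6 / \log 3 = \delta$, and (iii) every $c$-spanner $G$ of $P_k$ contains an $|I| \times |I|$ grid as a minor with $|I| = \Omega(2^k/c)$. Combined with Theorem~\ref{thm:gridlowerbound}, (iii) will yield $\tw(G) = \Omega(2^k/c) = \Omega(n^{1-1/\delta}/c)$, since $n^{1-1/\delta} = \Theta(2^k)$. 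Concretely, I would define the discretized Cantor set recursively by $C_0 = \{0\}$ and $C_k = C_{k-1} \cup (C_{k-1} + 2 \cdot 3^{k-1})$, so $|C_k| = 2^k$, and set
\[
P_k = \bigl(C_k \times \{0, \ldots, 3^k-1\}\bigr) \cup \bigl(\{0, \ldots, 3^k-1\} \times C_k\bigr);
\]
an inclusion--exclusion count gives $|P_k| = 2 \cdot 2^k 3^k - 4^k = \Theta(6^k)$, and choosing $k$ large enough makes $|P_k| \ge n_0$.

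To bound the fractal dimension, I would fix $\eps > 0$, $r \ge 2\eps$, an $\eps$-net $N$ of $P_k$, and $x = (x_1, x_2) \in \mathbb{R}^2$, and estimate the contribution of horizontal strips (those of the form $\{0, \ldots, 3^k-1\} \times \{y_0\}$, $y_0 \in C_k$) to $|N \cap \ball(x,r)|$; the vertical case is symmetric. The first step is to partition $[x_2 - r, x_2 + r]$ into $O(r/\eps)$ sub-intervals of length $\eps$; a standard covering estimate for the discretized Cantor set then shows that at most $O((r/\eps)^{\log 2 / \log 3})$ of these sub-intervals meet $C_k$ (for $\eps \ge 1$, sandwich $\eps$ between consecutive powers of $3$ and use the self-similar structure of $C_k$). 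Second, for each such sub-interval $J$, the union of horizontal strips with $y_0 \in J$ meeting $\ball(x,r)$ lies inside a rectangle of dimensions at most $2r \times \eps$, which holds at most $O(r/\eps)$ points of the $\eps$-packing $N$ by a planar packing bound. Multiplying yields $O((r/\eps)^{1 + \log 2 / \log 3}) = O((r/\eps)^\delta)$ horizontal contributions, establishing $\dimF(P_k) \le \delta$.

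For the treewidth lower bound, I would enumerate $C_k = \{a_1 < \cdots < a_{2^k}\}$, set $m = \lceil c+1 \rceil$, and take $I = \{a_{1 + tm} : 0 \le t < \lfloor 2^k / m \rfloor\}$. Since any two distinct elements of $I$ are integers separated by at least $m-1$ other integers, they lie at $\ell_2$-distance at least $m \ge c+1$ apart, and $|I| = \Omega(2^k/c)$. For each $a \in I$, the sets $\{a\} \times \{0, \ldots, 3^k-1\}$ and $\{0, \ldots, 3^k-1\} \times \{a\}$ are full columns and full rows of $P_k$ in the sense of Theorem~\ref{thm:ddimensionalgridspanner}. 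Re-running its spanner-contradiction argument---concatenate $G$-shortest paths between consecutive unit-spaced points along each such full column (resp.\ full row), and observe that a shared vertex between paths from two parallel rows (or columns) would force two points at $\ell_2$-distance $\ge c+1$ to lie at $G$-distance $\le c$---produces $|I|$ pairwise vertex-disjoint row paths and $|I|$ pairwise vertex-disjoint column paths in $G$. Contracting the sub-paths between consecutive grid-corner vertices $\{(a, a') : a, a' \in I\}$ exhibits an $|I| \times |I|$ grid as a minor of $G$, and Theorem~\ref{thm:gridlowerbound} then gives $\tw(G) = \Omega(|I|) = \Omega(2^k/c)$.

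The delicate step will be the fractal dimension bound in the regime $\eps \ge 1$, where $N$ may be a proper subset of $P_k$: a naive count that first picks a strip and then bounds its $\eps$-packing on the strip yields only $O(r^\delta / \eps)$, which exceeds $(r/\eps)^\delta$ as soon as $\eps > 1$. The clustering step above---grouping strips whose Cantor coordinates share a common $\eps$-interval and applying the planar $\eps$-packing bound to each cluster as a whole, rather than summing per strip---is what restores the correct exponent; the remaining steps are direct adaptations of arguments developed earlier in the paper.
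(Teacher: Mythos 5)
Your proposal is correct, and the pointset is in fact the same: unwinding the paper's recursion one checks that $f(k)=g(k)\cup h(k)$, which is exactly your $P_k=\bigl(C_k\times\{0,\dots,3^k-1\}\bigr)\cup\bigl(\{0,\dots,3^k-1\}\times C_k\bigr)$, so your closed-form description is just a non-recursive presentation of the paper's $f^{3,1,2}(k)$. The genuinely different part is the fractal-dimension bound. The paper (Lemma~\ref{lem:fractaldim}) builds a recursive $\eps$-covering $f'(k)$ mirroring the $f,h,g$ recursion, shows $|N\cap\ball(x,r)|\leq 5|f'(t)|$ for the right scale $t$, and then bounds $|f'(t)|$ by the recursion. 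You instead bound the net directly by a two-factor count: $O((r/\eps)^{\log 2/\log 3})$ length-$\eps$ bins in the Cantor direction times an $O(r/\eps)$ planar packing bound per bin. Both are valid; your route avoids introducing the auxiliary covering family and is arguably more transparent, and you correctly flag the one place this could go wrong (summing per strip instead of per $\eps$-bin when $\eps>1$) and fix it. For the treewidth step, you essentially re-run the disjoint-shortest-paths argument of Theorem~\ref{thm:ddimensionalgridspanner} on the $\lceil c+1\rceil$-separated subset $I\subset C_k$, whereas the paper's Lemma~\ref{lem:modified-sierp} groups the $2^k$ full rows into blocks of $c$ and argues each path stays inside its block; your version is a cleaner reuse of the already-proved lemma. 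One caveat you inherit from the paper (so not a gap of yours specifically): the ``contract subpaths to get an $|I|\times|I|$ grid minor'' step in both Theorem~\ref{thm:ddimensionalgridspanner} and your write-up tacitly assumes a row subpath and a column subpath incident to the same grid corner share no interior vertex, which the spanner property alone does not give when the two subpath segments are within distance $c$ of each other near the corner. This is routine to repair---e.g.\ take grid corners $\Theta(c)$ farther apart and absorb a $c$-neighborhood of each corner into its branch set, or replace the minor with the bramble $\{P_i\cup Q_j\}_{i,j}$ of order $|I|$---and it only changes constants, but you should be aware the contraction as literally stated needs that patch.
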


\paragraph{Construction of the discrete Cantor crossbar}.~To prove the above theorem, we consider the set of points obtained as follows. First we define $f(0)$ and $h(0)$ and $g(0)$ to be a single point. Then we inductively define $h(i)$,$g(i)$ and $f(i)$ as follows. To get $h(i)$ we start with a $3^i \times 3^i$ integer grid and subdivide it into nine $3^{i-1} \times 3^{i-1}$ integer grids. Then we remove all $3$ sub-grids in the middle row and replace each of the $6$ remaining sub-grids with copies of $h(i-1)$. To get $g(i)$ we again start with a $3^i \times 3^i$ integer grid and subdivide it into nine $3^{i-1} \times 3^{i-1}$ integer grids. Then we remove all $3$ sub-grids in the middle column and replace each of the $6$ remaining sub-grids with copies of $g(i-1)$. Finally to get $f(i)$ we start with a $3^i \times 3^i$ integer grid and subdivide it into nine $3^{i-1} \times 3^{i-1}$ integer grids. Then we remove the central sub-grid. We then replace the four corner sub-grids with copies of $f(i-1)$. We replace the middle sub-grid in the first and last rows with copies of $h(i-1)$ and we replace the middle sub-grids of the first and last columns with copies of $g(i-1)$ as depicted in Figure \ref{fig:fhg}. The pointset we require is given by $f(k)$ where $k$ is any positive integer. We have the following two lemmas regarding the pointset $f(k)$.

\begin{figure}
\begin{center}
\scalebox{0.07}{\includegraphics{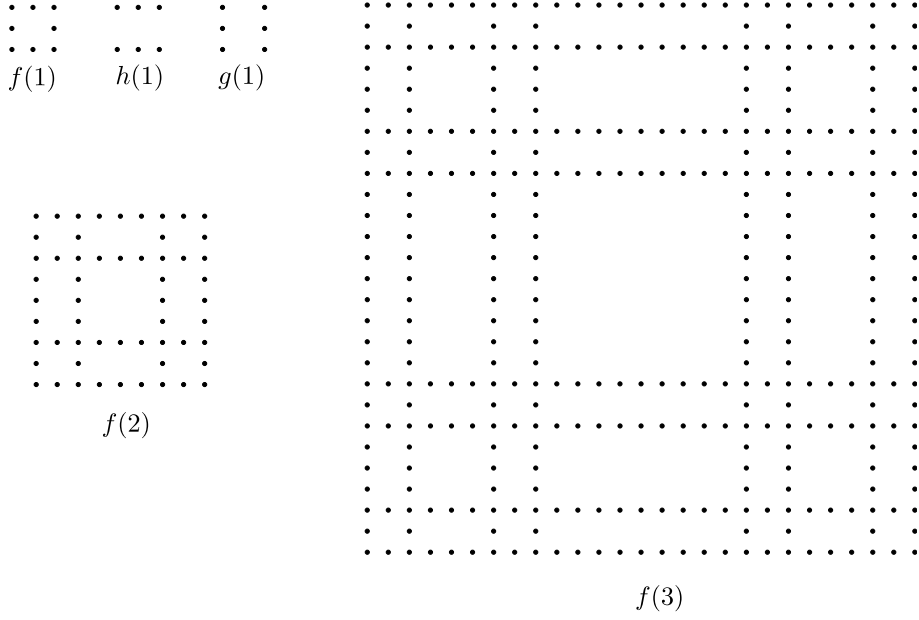}}
\caption{The construction of the discrete Cantor crossbar.}
\label{fig:fhg}
\end{center}
\end{figure}

\begin{lemma}\label{lem:cardinality_bound}
$|f(k)| \leq 2 \cdot 6^k.$
\end{lemma}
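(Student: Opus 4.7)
}

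The plan is a straightforward induction on $k$, after first establishing exact counts for the auxiliary sets $h(k)$ and $g(k)$. First I would observe that, directly from the inductive construction, $h(i)$ is built by replacing $6$ sub-grids by copies of $h(i-1)$, and similarly $g(i)$ is built from $6$ copies of $g(i-1)$. Since $|h(0)|=|g(0)|=1$, a one-line induction yields $|h(k)|=|g(k)|=6^k$ for every $k\geq 0$.

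Next I would set up the recurrence for $|f(k)|$. By construction, $f(i)$ consists of $4$ corner copies of $f(i-1)$, $2$ middle-row copies of $h(i-1)$, and $2$ middle-column copies of $g(i-1)$, with the central sub-grid removed. All of these sub-grids are disjoint, so
\begin{equation*}
|f(i)| \;=\; 4\,|f(i-1)| + 2\,|h(i-1)| + 2\,|g(i-1)| \;=\; 4\,|f(i-1)| + 4\cdot 6^{i-1},
\end{equation*}
with base case $|f(0)|=1\leq 2$.

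Finally, I would close the induction. Assuming $|f(i-1)|\leq 2\cdot 6^{i-1}$ as the inductive hypothesis, the recurrence gives
\begin{equation*}
|f(i)| \;\leq\; 4\cdot 2\cdot 6^{i-1} + 4\cdot 6^{i-1} \;=\; 12\cdot 6^{i-1} \;=\; 2\cdot 6^{i},
\end{equation*}
which is exactly the bound claimed. There is essentially no obstacle here: the only thing to be careful about is the disjointness of the sub-grids used to assemble $f(i)$, $h(i)$, $g(i)$, which is immediate from the inductive construction shown in Figure~\ref{fig:fhg}. (In fact the same argument solves the recurrence exactly as $|f(k)| = 2\cdot 6^k - 4^k$, but the stated upper bound is all that is needed in the sequel.)
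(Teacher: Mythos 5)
Your proof is correct and follows essentially the same route as the paper's: derive $|h(k)|=|g(k)|=6^k$, set up the recurrence $|f(i)|=4|f(i-1)|+4\cdot 6^{i-1}$, and close by induction. The only minor difference is that you state the recurrence as an equality (justified by the disjointness of the sub-grids) and note the exact solution $2\cdot 6^k-4^k$, whereas the paper conservatively uses $\leq$ throughout; both are fine.
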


\begin{proof}
From the construction of $f(k)$ we have that $|f(k)| \leq 4|f(k-1)| + 2|h(k-1)| + 2|g(k-1)|$. By symmetry we have that $|h(k-1)| = |g(k-1)|$ and so this implies that $|f(k)| \leq 4|f(k-1)| + 4|h(k-1)|$. From the construction of $h(k)$ it follows that $|h(k)| = 6 |h(k-1)|$ which implies that $|h(k)| = 6^k$. Substituting this in the previous inequality and solving gives us that $|f(k)| \leq 2 \cdot 6^k$.
\end{proof}

\begin{lemma}\label{lem:fractaldim}
Let $P = f(k)$. Then, $\dimF(P) \leq \frac{\log(6)}{\log(3)}$.
\end{lemma}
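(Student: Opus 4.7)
The plan is to bound $|N \cap B(x,r)|$ by $O((r/\eps)^{\delta})$, for any $\eps$-net $N$ of $P=f(k)$, any $x \in \mathbb{R}^{2}$, and any $r \ge 2\eps$, with $\delta = \log 6/\log 3$. The cornerstone is a box-counting estimate on the natural $3$-adic subdivision of the bounding square $[0,3^{k}]^{2}$. Let $M_{f}(k,\ell)$, $M_{h}(k,\ell)$, $M_{g}(k,\ell)$ denote the numbers of axis-aligned boxes of side $3^{k-\ell}$ obtained after $\ell$ levels of subdivision that contain a point of $f(k)$, $h(k)$, $g(k)$ respectively. The constructions of $h$ and $g$ immediately yield $M_{h}(k,\ell) = M_{g}(k,\ell) = 6^{\ell}$, while the construction of $f$ yields $M_{f}(k,\ell) \le 4\,M_{f}(k-1,\ell-1) + 4 \cdot 6^{\ell-1}$. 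Induction on $\ell$, with base $M_{f}(k,0) = 1$, then establishes $M_{f}(k,\ell) \le 2 \cdot 6^{\ell}$. By self-similarity this bound applies inside any aligned sub-box: a sub-box of side $3^{k-\ell'}$ contains at most $2 \cdot 6^{\ell-\ell'}$ non-empty aligned sub-sub-boxes of side $3^{k-\ell}$, which is precisely $O((3^{\ell-\ell'})^{\delta})$.

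Next I would convert this box count into a bound on $|N \cap B(x,r)|$. Pick the integer $j$ with $3^{j} \le \max(\eps,1) < 3^{j+1}$ and set $s = 3^{j}$, so that aligned boxes of side $s$ have diameter $\Theta(\eps)$ when $\eps \ge 1$, and diameter $O(1)$ with $s=1$ when $\eps < 1$. Round $r$ up to $r' = 3^{k-\ell'}$ with $r \le r' < 3r$; then $B(x,r)$ is contained in the union of at most four aligned boxes of side $r'$, and each of these boxes is empty or is a scaled and possibly rotated copy of $f$, $h$, or $g$. Applying the self-similar box-counting bound inside each of these four enclosing boxes shows that $B(x,r)$ intersects at most $O((r'/s)^{\delta}) = O((r/\eps)^{\delta})$ non-empty aligned boxes of side $s$.

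Finally, since $N$ is $\eps$-separated and each aligned box of side $s$ has diameter $O(\eps)$, a standard Euclidean packing argument bounds the number of points of $N$ in any such box by $O(1)$. Multiplying gives $|N \cap B(x,r)| = O((r/\eps)^{\delta})$, whence $\dimF(P) \le \delta$ by definition. Two degenerate cases require a short separate treatment: when $r \ge 3^{k}$ one uses $|N| \le |P| \le 2 \cdot 6^{k}$ together with $r/\eps \ge 2$; when $\eps < 1$, the minimum pairwise distance $1$ in $P$ forces $N \subseteq P$, and the same box-counting argument at the atomic scale $s=1$ gives the bound. I expect the main technical obstacle to be the misalignment between an arbitrary ball $B(x,r)$ and the $3$-adic subdivision: because $B(x,r)$ is not itself a union of aligned boxes, I must absorb a bounded overhead factor while covering it by aligned boxes at the correct scale, and I must verify that this overhead does not perturb the exponent $\delta$ across all the regimes of $r$, $\eps$, and $3^{k}$.
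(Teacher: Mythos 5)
Your argument is correct and reaches the right exponent, but by $3$-adic box counting rather than the paper's explicit nested covering. You prove $M_f(k,\ell) \leq 2\cdot 6^{\ell}$ by essentially the same recursion the paper uses for $|f(k)| \le 2\cdot 6^k$ (Lemma~\ref{lem:cardinality_bound}), propagate it by self-similarity to every aligned sub-box, cover $B(x,r)$ by $O(1)$ aligned boxes at scale roughly $r$, and bound net points per box of side roughly $\eps$ by packing. The paper instead inductively constructs an $\eps$-covering $f'(k)$ (single points below the scale $3^j \approx \eps$, assembled like $f,h,g$ above it), compares the net against the covering via $|N \cap B(x,r)| \le |f'(k) \cap B(x,r+\eps)|$, covers the enlarged ball by $O(1)$ copies of $f'(t)$ with $3^t \approx r$, and bounds $|f'(t)|$ with the same cardinality recursion. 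Both proofs hinge on the identical self-similar count; the paper's covering comparison sidesteps the ball-versus-grid misalignment you flag, while your box-counting version is more transparent geometrically and makes the exponent $\log 6/\log 3$ visibly a Minkowski-type dimension. Two minor corrections to your sketch that do not affect the exponent: a ball of radius $r \le r'$ can intersect up to nine (not four) aligned boxes of side $r'$, and at the atomic scale $s=1$ a side-$1$ aligned box can contain several integer grid points on its boundary, so you should argue $O(1)$ rather than exactly one net point per box.
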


\begin{proof}
Let $x \in \mathbb{R}^2$, $\eps > 0$ and $r \geq 2 \eps$. Let $j$ be the smallest non-negative integer such that $\eps \leq 3^j$. First we define a family of $\eps$-coverings of $P$. For all integers $i \geq 0$ we define $f'(i)$, $h'(i)$ and $g'(i)$ to be  $\eps$-coverings of the points in $f(i)$, $h(i)$ and $g(i)$ respectively. We will define them inductively similar to the construction described earlier. For all $i < j$ we define $f'(i)$, $h'(i)$ and $g'(i)$ to be a single point picked from $f(i)$, $h(i)$ and $g(i)$ respectively. Since the diameter of $f(i)$ is $\sqrt{2}3^i$ the bottom rightmost point of the top left copy of $f(i-1)$ in $f(i)$ is a point that covers $f(i)$. A similar choice can be made for $h'(i)$ and $g'(i)$. Now for the case where $i \geq j$ we may inductively obtain $f'(i)$, $h'(i)$ and $g'(i)$ in a similar fashion to the construction of $f(i)$, $h(i)$ and $g(i)$ respectively. $f'(i)$ can be obtained by starting with $f(i)$ and replacing the copies of $f(i-1)$, $h(i-1)$ and $g(i-1)$ with copies of $f'(i-1)$, $h'(i-1)$ and $g'(i-1)$ respectively. A similar approach can be used to obtain $h'(i)$ and $g'(i)$. 

Now consider $\ball(x,r)$ and let $N$ be a $2 \eps$-net of $P$ and $C$ be an $\eps$-covering of $P$. We have that $|N \cap \ball(x,r)| \leq |C \cap \ball(x,r + \eps)|$. This is because every point in $N$ is covered by a unique point in $C$ since $C$ is an $\eps$-covering and $N$ is $2 \eps$-packing. Now we may set $C$ to be $f'(k)$ which is an $\eps$-covering of $P = f(k)$. Let $t \geq 0$ be the smallest integer such that $3^t \geq 2r$. Now we have that $|f'(k) \cap \ball(x,r + \eps)| \leq 5 |f'(t)|$ since the diameter of $\ball(x,r + \eps)$ is at most $3r$. Thus we have that $|N \cap \ball(x,r)| \leq  5 |f'(t)|$. Using an argument similar to that of lemma \ref{lem:cardinality_bound} it follows that $5 |f'(t)| \leq 5 |f'(j-1)| 2 \cdot 6^{t-j +1} \leq 10 \left(\frac{3^{j}}{\eps} \right)^{\frac{\log{(6)}}{\log{(3)}}} \cdot 3^{(t-j +1) \left(\frac{\log{(6)}}{\log{(3)}} \right)} \leq 10 \left(\frac{3^{t+1}}{\eps} \right)^{\frac{\log{(6)}}{\log{(3)}}} \leq 60 \left(\frac{3r}{\eps} \right)^{\frac{\log{(6)}}{\log{(3)}}} \leq 558 \left(\frac{r}{2 \eps} \right)^{\frac{\log{(6)}}{\log{(3)}}}$. But from the definition of fractal dimension and the fact that we have $|N \cap \ball(x,r)| = O\left( \left(\frac{r}{2\eps} \right)^{\delta} \right)$, it follows that $\dimF(P) \leq \frac{\log(6)}{\log(3)}$. 
\end{proof}

\begin{lemma}\label{lem:modified-sierp}
Let $G$ be a $c$-spanner of $P$, where $c \geq 1$. Then, $\tw(G) = \Omega \left( \frac{2^{k}}{c} \right)$.
\end{lemma}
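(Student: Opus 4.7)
The plan is to exhibit a $\Theta(2^k/c) \times \Theta(2^k/c)$ grid as a minor in any $c$-spanner $G$ of $P = f(k)$, and then invoke Theorem \ref{thm:gridlowerbound} to conclude $\tw(G) = \Omega(2^k/c)$. The argument follows the same template as the integral-dimension proof in Theorem \ref{thm:ddimensionalgridspanner}, with the work concentrated on uncovering a sufficiently rich grid substructure inside $f(k)$.

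The first step is to derive an explicit characterization of $f(k)$. By induction on $k$, I would verify that
\[
f(k) = \{(x,y) \in \{0,1,\ldots,3^k-1\}^2 \colon x \in C_k \text{ or } y \in C_k\},
\]
where $C_k \subseteq \{0,1,\ldots,3^k-1\}$ is the discrete Cantor set obtained by iteratively keeping the top and bottom thirds (so $|C_k| = 2^k$, and consecutive elements of $C_k$ are at distance at least $2$). In parallel, the induction shows $h(k) = \{0,\ldots,3^k-1\} \times C_k$ and $g(k) = C_k \times \{0,\ldots,3^k-1\}$. The induction step is immediate from the way $f(i)$ is assembled out of four copies of $f(i-1)$, two of $h(i-1)$, and two of $g(i-1)$, together with the fact that the central subgrid is deleted. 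Consequently, for every $y_0 \in C_k$ the entire horizontal line $\{(x,y_0) \colon 0\leq x\leq 3^k-1\}$ lies in $P$ (a \emph{full row}), and symmetrically for every $x_0\in C_k$ we get a \emph{full column}. Along any full row or column, consecutive points are at Euclidean distance $1$.

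Next, to accommodate the spanner stretch, I would pick every $\lceil (c+1)/2\rceil$-th element of $C_k$ in sorted order to obtain a subset $C' \subseteq C_k$ of size $\Theta(2^k/c)$ whose consecutive elements are at distance at least $c+1$. Call the corresponding full rows and columns \emph{selected}. The non-crossing shortest-path argument from Theorem \ref{thm:ddimensionalgridspanner} then transfers directly: for any two selected full rows $R,T$ at vertical distance at least $c+1$, shortest $G$-paths between consecutive points in $R$ cannot share a vertex with shortest $G$-paths between consecutive points in $T$, since any common vertex $z$ would force some pair of endpoints (one in $R$, one in $T$) to be at $d_G$-distance at most $c$, contradicting the non-contracting property of $G$. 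The same reasoning applies to selected columns. Concatenating shortest paths between consecutive integer points along each selected full row (resp.\ column) yields $\Theta(2^k/c)$ pairwise vertex-disjoint horizontal paths and $\Theta(2^k/c)$ pairwise vertex-disjoint vertical paths in $G$. These paths meet at the points $(x_0,y_0)$ with $x_0,y_0 \in C'$; all such points lie in $P$ because $x_0,y_0 \in C_k$. Contracting the sub-paths between consecutive meeting points produces a $\Theta(2^k/c) \times \Theta(2^k/c)$ grid minor of $G$, and Theorem \ref{thm:gridlowerbound} then yields $\tw(G) = \Omega(2^k/c)$.

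The only genuine obstacle is the structural Step 1: showing that $f(k)$ is exactly the set of lattice points with at least one coordinate in the Cantor set $C_k$, so that the whole construction admits clean axis-aligned full rows and full columns at every scale. Once this is in place, the spanner treewidth lower bound is essentially the grid argument of Theorem \ref{thm:ddimensionalgridspanner}, with the bookkeeping made transparent by the inequality $|C_k|=2^k$ and the minimum gap $\geq 2$ between consecutive elements of $C_k$.
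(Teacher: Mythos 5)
Your argument is correct and uses the same underlying idea as the paper's: locate $\Theta(2^k/c)$ full rows and full columns, separate them by at least $c+1$, run the non-crossing shortest-path argument of Theorem~\ref{thm:ddimensionalgridspanner}, and contract to a $\Theta(2^k/c)\times\Theta(2^k/c)$ grid minor, invoking Theorem~\ref{thm:gridlowerbound}.

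The genuine difference is that you first prove a clean structural lemma identifying $f(k)$ exactly as $\{(x,y)\in\{0,\ldots,3^k-1\}^2 \,:\, x\in C_k \text{ or } y\in C_k\}$, where $C_k$ is the discrete ternary Cantor set with $|C_k|=2^k$ and minimum gap $2$ (and in parallel $h(k)=\{0,\ldots,3^k-1\}\times C_k$, $g(k)=C_k\times\{0,\ldots,3^k-1\}$). The paper never states this; it simply asserts, with a one-line justification by reference to $h(k-1)$, that $f(k)$ has $2^k$ full rows. Your inductive characterization proves the same count but also makes visible exactly where the full rows and columns sit, which lets you apply the argument of Theorem~\ref{thm:ddimensionalgridspanner} almost verbatim. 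You also differ in a small detail of the pruning step: the paper groups the full rows into $\approx 2^k/c$ bands each containing $c$ consecutive full rows and traces the middle full row of each band, arguing somewhat loosely that the connecting shortest paths ``lie entirely within'' each band, whereas you thin $C_k$ to a subset $C'$ of size $\Theta(2^k/c)$ with pairwise gaps at least $c+1$ and apply the crossing argument directly to the selected rows and columns. Both routes produce the same $\Theta(2^k/c)\times\Theta(2^k/c)$ grid minor and the same bound; yours is a bit more precise and self-contained, at the modest cost of the extra inductive lemma.
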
 

\begin{proof}
We observe that $f(k)$ has $2^{k}$ rows with the property that any two consecutive points in the same row are at distance $1$ from each other. This is because the number of such rows in $h(k-1)$ is $2^{k-1}$. We refer to such a row as \textit{full row}. Similarly, the number of columns in $f(k)$ with the property that any two consecutive points in the same column are at distance $1$ from each other is $2^{k}$. We refer to such a column as \textit{full column}. 

Now we partition the rows of $f(k)$ into $\frac{2^{k}}{c}$ sets with each set containing $c$ consecutive full rows. Precisely, the first set consists of the first $c$ full rows and all other rows in between them, the second set starts from the next  uncovered row and extends till $c$ full rows are covered and so on. In the end, we get a grouping of rows of $f(k)$ with each set (maybe except the last one) containing $c$ consecutive full rows. We label these sets as $S_{1}, S_{2}, \ldots$ in the order of construction. The number of sets might be less than $ \frac{2^{k}}{c} $. We do a similar grouping for the columns and label the sets formed as $T_{1}, T_{2}, \ldots$ in the order of construction.

Consider a set $S_{i}$ and take the middle full row of $S_{i}$. For any pair of consecutive points $x_{1}, x_{2}$ in this row, the path joining $x_{1}$ and $x_{2}$ lies entirely within $S_{i}$. This is because $G$ is a $c$-spanner of $P$. By concatenating these paths and removing all loops, we can obtain a path from one end of this row to the other end that lies entirely within $S_{i}$. Let this path be $P_{S_{i}}$. Such a path exists inside every set $S_{i}$. In a similar manner, every set $T_{i}$ of columns contains a path $P_{T_{i}}$ from the top to the bottom of the column that lies entirely within $T_{i}$.

We now look at paths $ P_{S_{1}}, P_{S_{2}}, \ldots, P_{T_{1}}, P_{T_{2}}, \ldots $. Let the point of intersection of $P_{S_{i}}$ and $P_{T_{j}}$ be $x_{i,j}$. Then, the graph consisting of all $x_{i,j}$ and paths joining $x_{i,j}, x_{i,j+1}$ and $x_{i,j}, x_{i+1,j}$ for all $i,j$ is a subgraph of $G$. All these paths are disjoint (except at the end points) since they belong to different sets. Moreover, if each of these paths is contracted to an edge, then we obtain a $ \frac{2^{k}}{c} \times \frac{2^{k}}{c} $ grid. Thus, we conclude that $\tw(G) = \Omega \left( \frac{2^{k}}{c} \right)$.  
\end{proof}

\begin{proof}[Proof of Theorem \ref{thm:modifiedsieplowerbound}]
Using Lemma \ref{lem:fractaldim} and Lemma \ref{lem:modified-sierp}, we get that $\tw(G) = \Omega \left(\frac{(3^{k \delta})^{1-\frac{1}{\delta}}}{c} \right) = \Omega \left( \frac{n^{1-\frac{1}{\delta}}}{c} \right)$.
\end{proof}

\subsection{A family of treewidth-extremal fractals for all dimensions}\label{subsec:twfractal}

We can now generalize the ideas from the previous section to obtain a family of pointsets that allow us to a get a lower bound on the treewidth of spanners for any given choice of fractal dimension.

\paragraph{Construction of treewidth-extremal fractal pointsets}.~Consider the family of pointsets defined as follows: For all integers $d>0$, we name the dimensions from $\{1,2, \ldots, d \}$ in an arbitrary manner. For all odd integers $l$ and $v$ such that $l > v$, we define each of $f^{l,v,d}(0), h^{l,v,d}_{1}(0), \ldots, h^{l,v,d}_{d}(0)$ to be a single point. We inductively define $f^{l,v,d}(i), h^{l,v,d}_{1}(i), \ldots, h^{l,v,d}_{d}(i)$ as follows: For $h^{l,v,d}_{1}(i)$, we start with a $d$-dimensional $l^{i}$ integer grid and subdivide it to get $l^{d}$ identical $l^{i-1}$ $d$-dimensional integer subgrids. Now, along every dimension $j \in \{2, \ldots, d \}$, we remove all $l^{d-1}v$ subgrids in the middle $v$ rows of the subgrids. We then replace each of the remaining $l(l-v)^{d-1}$ subgrids with copies of $h^{l,v,d}_{1}(i-1)$. The pointset obtained is $h^{l,v,d}_{1}(i)$. In general for any $m \in [d]$, we construct $h^{l,v,d}_{m}(i)$ as follows: we start with a $d$-dimensional $l^{i}$ integer grid and subdivide it to get $l^{d}$ identical $l^{i-1}$ $d$-dimensional integer subgrids. Then, along every dimension $j \neq m$, we remove all $l^{d-1}v$ subgrids in the middle $v$ rows of the subgrids. We replace each of the remaining $l(l-v)^{d-1}$ subgrids with copies of $h^{l,v,d}_{m}(i-1)$. The pointset thus obtained is $h^{l,v,d}_{m}(i)$. In order to construct $f^{l,v,d}(i)$, we start with a $d$-dimensional $l^{i}$ integer grid and subdivide it into $l^{d}$ identical $l^{i-1}$ $d$-dimensional integer subgrids. Let $S$ denote the set of the central $v^{d}$ subgrids. Note that $S$ is a $d$-dimensional grid with side length $v(l^{i-1})$. Now along every dimension $m \in [d]$, there are $\frac{(l-v)v^{d-1}}{2}$ subgrids on each side of $S$. We remove $S$ as well as all such $\frac{(l-v)v^{d-1}}{2}$ subgrids lying on either side of $S$ along every dimension. We replace each of the $\left( \frac{l-v}{2} \right)^{d}$ sub-grids in the $2^{d}$ corners with copies of $f^{l,v,d}(i-1)$. Then along every dimension $m$, we replace each of the remaining $\left(\frac{l-v}{2} \right)^{d-1} v 2^{d-1}$ subgrids with copies of $h^{l,v,d}_{m}(i-1)$. The pointset thus obtained is $f(i)$. To generate the pointset $P$ mentioned in the statement of Theorem \ref{thm:spannerhigherdim} we pick $l$ and $v$ such that $\left| \delta - \frac{\log (l(l-v)^{d-1})}{\log l} \right| \leq \eps$. Such a pair of odd numbers always exists. Let $\delta' = \frac{(\log l(l-v)^{d-1})}{\log l}$. Then, we set $P$ to be $f^{l,v,d}(k)$, where $k$ is any positive integer. We have the following lemmas regarding pointset P.

\begin{lemma} \label{lem:fractalhigherdimmodular}
For all odd integers $l$ and $v$ such that $l>v$, and for all positive integers $k>0$, we have that $\mathrm{dim_f}(f^{l,v,d}(k)) \leq \frac{(\log l(l-v)^{d-1})}{\log l}$.
\end{lemma}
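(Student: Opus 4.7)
The plan is to extend the proof of Lemma~\ref{lem:fractaldim} to arbitrary $d$, $l$, and $v$. Write $\delta' = \frac{\log(l(l-v)^{d-1})}{\log l}$, and for a fixed $\eps>0$ let $f'(i)$ and $h'_{m}(i)$ denote $\eps$-coverings of $f^{l,v,d}(i)$ and $h^{l,v,d}_m(i)$ respectively, to be constructed inductively in parallel with the original construction.

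Let $j$ be the smallest non-negative integer with $\sqrt{d}\,l^j > \eps$, so that $f^{l,v,d}(i)$ and $h^{l,v,d}_m(i)$ each have diameter at most $\eps$ for $i < j$. For $i < j$, I would take $f'(i)$ and each $h'_m(i)$ to be a single arbitrary point of the respective set, which is a valid $\eps$-covering by the diameter bound. For $i \geq j$, define $f'(i)$ and $h'_m(i)$ by mimicking the inductive construction, substituting each embedded copy of $f^{l,v,d}(i-1)$ (respectively $h^{l,v,d}_m(i-1)$) with the corresponding $\eps$-covering $f'(i-1)$ (resp.\ $h'_m(i-1)$). That this yields a valid $\eps$-covering follows by induction on $i$.

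By direct induction one gets $|h'_m(i)| \leq (l(l-v)^{d-1})^{\max(i-j+1,\,0)}$, and unrolling the recurrence $|f'(i)| \leq (l-v)^d\,|f'(i-1)| + d\,v\,(l-v)^{d-1}\,|h'_m(i-1)|$ together with the inequality $(l-v)^d \leq l(l-v)^{d-1}$ yields $|f'(i)| = O\!\left((l(l-v)^{d-1})^{i-j}\right)$ for $i \geq j$. Given an $\eps$-net $N$ of $P = f^{l,v,d}(k)$ and a ball $\ball(x,r)$ with $r \geq 2\eps$, I would apply the same packing/covering inequality as in Lemma~\ref{lem:fractaldim}: $|N \cap \ball(x,r)| \leq |f'(k) \cap \ball(x, r+\eps)|$. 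Letting $t$ be the smallest integer with $\sqrt{d}\,l^t \geq 2(r+\eps)$, the ball meets only $O(1)$ of the cells of side $l^t$ in the recursive decomposition of $f^{l,v,d}(k)$, each of which holds either a copy of $f^{l,v,d}(t)$ or of some $h^{l,v,d}_m(t)$. Thus $|f'(k) \cap \ball(x, r+\eps)| = O\!\left((l(l-v)^{d-1})^{t-j}\right) = O\!\left(l^{(t-j)\delta'}\right) = O\!\left((r/\eps)^{\delta'}\right)$, which gives $\dimF(P) \leq \delta'$.

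The main obstacle will be the $d$-dimensional bookkeeping: first, verifying that a ball of radius $r+\eps$ intersects only $O(1)$ of the scale-$l^t$ cells (which amounts to observing that a $d$-cube of diameter $\Theta(l^t)$ meets at most a constant number, depending on $d$, of axis-aligned cells of side $l^t$); and second, checking that the recurrence for $|f'(i)|$ is genuinely dominated by the $h'_m$ contribution and does not acquire an extra dimensional or $i$-dependent factor. Both are local and should succumb to a careful induction on $i$, but they represent the place where the argument genuinely departs from the two-dimensional case of Lemma~\ref{lem:fractaldim}, since the construction of $f^{l,v,d}(i)$ mixes $f$-copies with $h_m$-copies across all $m \in [d]$.
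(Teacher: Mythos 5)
Your proposal is correct and follows essentially the same route as the paper's proof: both construct the $\eps$-coverings $f'(i),h'_m(i)$ inductively in parallel with the fractal, bound $|f'(i)|$ by unrolling the recurrence against $|h'_m(i)|$, and then use a packing/covering comparison to reduce $|N\cap\ball(x,r)|$ to the size of the covering restricted to $O(1)$ cells of side $l^t$. The one detail to carry over faithfully from Lemma~\ref{lem:fractaldim} is the parameter mismatch between the net and the covering (the paper pairs a $2\eps$-net with an $\eps$-covering so that distinct net points are covered by distinct covering points), but since you explicitly invoke that lemma's inequality, this is just a matter of transcribing it precisely.
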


\begin{proof}
Let $\eps > 0$ and $r \geq 2 \eps$. Let $j$ be the smallest non-negative integer such that $ \eps \leq (d-1)l^{j}$. We first define a family of $\eps$ coverings of $P$. For all integers $i \geq 0$, we define $f'(i), h'_{1}(i), \ldots, h'_{d}(i)$ to be $\eps$-coverings of points $f^{l,v,d}(i), h^{l,v,d}_{1}(i), \ldots, h^{l,v,d}_{d}(i)$ respectively. For $i \leq j-1$, we define $f'(i), h'{1}(i), \ldots, h'_{d}(i)$ to be a single point picked from $f^{l,v,d}(i), h^{l,v,d}_{1}(i), \ldots, h^{l,v,d}_{d}(i)$ respectively. This is because diameter of any of these sets is $\sqrt{d}l^{i}$. Thus, the choice of $\eps$ ensures that any single point forms an $\eps$-covering of its respective set. For $i \geq j$, we define $f'(i), h'_{1}(i), \ldots, h'_{d}(i)$ inductively as done in the construction of $f^{l,v,d}(i), h^{l,v,d}_{1}(i), \ldots, h^{l,v,d}_{d}(i)$. We obtain $f'(i)$ by starting with $f^{l,v,d}(i)$ and replacing copies of $f^{l,v,d}(i-1), h^{l,v,d}_{1}(i-1), \ldots, h^{l,v,d}_{d}(i-1)$ with $f'(i-1), h'_{1}(i-1), \ldots, h'_{d}(i-1)$ respectively. Similarly, we construct $h'_{1}(i), \ldots, h'_{d}(i)$. 

For a fixed $x \in \mathbf{R}^{d}$, consider the $\mathrm{ball}(x,r)$. Let $N$ be a $2 \eps$-net of $P$ and $C$ be an $\eps$-covering of $P$. We have $\vert N \cap \mathrm{ball}(x,r) \vert \leq |C \cap \mathrm{ball}(x, r + \eps)|$. This is because every point in $N$ is covered by a unique point of $C$, since $N$ is a $2 \eps$-net and $C$ is an $\eps$-covering of $P$. We may set $C$ to be $f'(k)$, since $f'(k)$ is an $\eps$-covering of $P$. Let $t \geq 0$ be the smallest integer such that $l^{t} \geq 2r$. We have $\vert f'(k) \cap \mathrm{ball}(x, r + \eps) \vert \leq (2d+1) \vert f'(t) \vert $ since the diameter of $\mathrm{ball}(x, r+ \eps)$ is at most $3r$. Thus, we have $\vert N \cap \mathrm{ball}(x,r) \vert \leq (2d+1) \vert f'(t) \vert \leq (2d+1) \vert f'(j-1) \vert d \cdot l^{t-j+1} (l-v)^{(d-1)(t-j+1)} $. The second inequality holds because $\vert f'(k) \vert \leq dl^{k} (l-v)^{k(d-1)}$. Thus,
\begin{align*}
\vert N \cap ball(x,r) \vert &\leq (2d+1)d \vert f'(j-1) \vert \cdot l^{(t-j+1) (\log_{l} l(l-v)^{d-1})} \\
&\leq (2d+1)d \left(\frac{(d-1)l^{j}}{\eps} \right)^{\log_{l} l(l-v)^{d-1}} \cdot l^{(t-j+1) (\log_{l} l(l-v)^{d-1})} \\
&= (2d+1)d \left( \frac{(d-1)l^{t+1}}{\eps} \right)^{\log_{l} l(l-v)^{d-1}} \\
&< (2d+1)d \left( \frac{(d-1)2rl^{2}}{\eps} \right)^{\log_{l} l(l-v)^{d-1}} \\
&= (2d+1)d(4(d-1)l^{2})^{\log_{l} l(l-v)^{d-1}} \cdot \left( \frac{r}{2 \eps} \right)^{\log_{l} l(l-v)^{d-1}}. 
\end{align*}
Now, from the definition of fractal dimension and the fact that we have $ \vert N \cap ball(x,r) \vert = O\left( \left( \frac{r}{2 \eps} \right)^{\frac{\log l(l-v)^{d-1}}{\log l}} \right)$, it follows that $\mathrm{dim}_{f}(f^{l,v,d}(k)) \leq \frac{\log l(l-v)^{d-1}}{\log l}$.
\end{proof}

\begin{lemma} \label{lem:fractalhigherdim}
$\mathrm{dim}_{f}(P) \leq \delta'$.
\end{lemma}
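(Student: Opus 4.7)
My plan is to derive this lemma as an immediate corollary of Lemma \ref{lem:fractalhigherdimmodular}, which is the substantive statement where all the net-counting work was done. By the construction immediately preceding the statement, the pointset $P$ is defined to be $f^{l,v,d}(k)$ for a pair of odd integers $l > v$ chosen so that $\left|\delta - \frac{\log(l(l-v)^{d-1})}{\log l}\right| \leq \eps$, and $\delta'$ is defined to be exactly $\frac{\log(l(l-v)^{d-1})}{\log l}$ for this same pair. Since Lemma \ref{lem:fractalhigherdimmodular} asserts $\dimF(f^{l,v,d}(k)) \leq \frac{\log(l(l-v)^{d-1})}{\log l}$ for every choice of odd $l > v$ and every positive integer $k$, plugging in our specific $l, v, k$ yields $\dimF(P) \leq \delta'$.

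The only point that in principle deserves verification is the assertion (made in passing in the construction) that odd integers $l, v$ with the required closeness property actually exist. I would dispose of this with a short density argument: rewrite the exponent as $1 + (d-1)\frac{\log(l-v)}{\log l}$, and note that as $l$ ranges over large odd integers and $v$ ranges over odd integers in $(0, l)$, the ratio $\frac{\log(l-v)}{\log l}$ becomes dense in $[0, 1]$. Hence the exponent can be driven within $\eps$ of any target $\delta \in (1, d]$. This is elementary and does not interact with any geometric content.

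I do not anticipate any real obstacle here: Lemma \ref{lem:fractalhigherdim} is essentially a bookkeeping step that records the consequence of Lemma \ref{lem:fractalhigherdimmodular} in the notation used by Theorem \ref{thm:spannerhigherdim}. The entire proof should therefore be one short paragraph, with at most a brief parenthetical remark justifying the existence of the odd parameters. The genuinely nontrivial work, namely the construction of the $\eps$-covers $f'(i), h_1'(i), \ldots, h_d'(i)$ and the resulting bound on $|N \cap \ball(x,r)|$, has already been carried out in the preceding lemma.
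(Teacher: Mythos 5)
Your proposal matches the paper's proof exactly: the paper also disposes of this lemma in one sentence, stating that it follows from Lemma \ref{lem:fractalhigherdimmodular} applied to the chosen $l$, $v$, and $P$. Your additional density remark on the existence of suitable odd $l,v$ is sound but not spelled out in the paper either.
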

\begin{proof}
This follows from Lemma \ref{lem:fractalhigherdimmodular} when applied to our choice of $l$,$v$ and $P$.
\end{proof}

\begin{lemma} \label{lem:twhigherdim}
Let $G$ be a $c$-spanner of $P$, where $c \geq 1$. Then, $\tw(G) = \Omega \left(  \frac{(l-v)^{k(d-1)}}{c^{d-1}} \right)$.
\end{lemma}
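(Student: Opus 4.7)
The plan is to adapt the 2D argument of Lemma \ref{lem:modified-sierp} to $d$ dimensions by exhibiting a $d$-dimensional grid minor of side $\Omega((l-v)^k/c)$ inside $G$ and then invoking Theorem \ref{thm:gridlowerbound}.

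First I would count the axis-aligned \emph{full lines} of $P = f^{l,v,d}(k)$, that is, lines of $l^k$ consecutive integer points of $P$ parallel to some coordinate axis. A straightforward induction on $k$ using the self-similar construction shows that $h^{l,v,d}_j(k)$ contains exactly $(l-v)^{(d-1)k}$ full lines along axis $j$, because at each recursion level only the $(l-v)^{d-1}$ transverse slabs with all coarse perpendicular coordinates outer survive. The same count holds for $f^{l,v,d}(k)$: a full line along axis $j$ at perpendicular tuple $y_{-j}$ forces every coarse perpendicular coordinate of $y_{-j}$ to be outer (otherwise the column passes either through a removed cell with two or more central coarse coordinates, or through a sub-copy of $h^{l,v,d}_m(k-1)$ for some $m \neq j$, which contains no full line along $j$), and the sub-copies encountered along $j$---alternating $f^{l,v,d}(k-1)$ at outer $x_j$ and $h^{l,v,d}_j(k-1)$ at central $x_j$---must each carry a full line at the induced perpendicular position. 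Moreover, the perpendicular tuples indexing such full lines form a $(d-1)$-dimensional Cantor dust whose projection to each perpendicular axis gives exactly $(l-v)^k$ \emph{valid} integer positions.

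Next, on each axis I would partition the $(l-v)^k$ valid positions into $\Omega((l-v)^k/c)$ consecutive groups of size $O(c)$ and mark the middle valid position of each group; since valid positions are distinct integers, consecutive marked positions on any single axis are separated by more than $c$ integer units (a constant-factor enlargement of the group size absorbs any borderline case). This designates $\Omega\bigl(((l-v)^k/c)^d\bigr)$ \emph{selected grid points} of $P$. For every axis $j$ and every $(d-1)$-tuple $I_{-j}$ of marked perpendicular positions, the full line of $P$ along $j$ at perpendicular coordinates $I_{-j}$ exists, and concatenating shortest $G$-paths between its successive unit-distance integer points (each such sub-path has $G$-length at most $c$ since $G$ is a $c$-spanner) yields a single path in $G$ traversing the line. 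Splitting this path at the selected grid points produces the candidate edges of the grid minor.

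It remains to verify internal vertex-disjointness of these candidate edges, which is the main technical step. Each candidate edge is contained in a Euclidean $c$-tube about its full-line segment, and the $c$-spanner triangle-inequality argument of Theorem \ref{thm:ddimensionalgridspanner} rules out the two conceivable collisions: (i) two candidate edges sharing the free axis $j$ but at distinct marked perpendicular tuples cannot share an interior vertex, because their tuples differ by more than $c$ in some perpendicular direction, and a shared $z$ would force, via $d_G(x_1,z)+d_G(x_2,z)\le c$ and the analogous bound on the other line, two points on opposite full lines at Euclidean distance greater than $c$ to sit at $G$-distance at most $c$, contradicting non-contraction; (ii) two candidate edges with distinct free axes $j_1 \neq j_2$ can only meet at a common selected grid point, since their underlying full lines either differ by more than $c$ in some commonly-fixed perpendicular coordinate (falling under (i)) or intersect at a point whose $j_1$- and $j_2$-coordinates are already marked, hence an endpoint of both candidate edges rather than an interior vertex. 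Contracting each resulting sub-path to an edge yields a $d$-dimensional grid minor of $G$ on $\Omega\bigl(((l-v)^k/c)^d\bigr)$ vertices, so Theorem \ref{thm:gridlowerbound} gives $\tw(G) = \Omega\bigl(((l-v)^k/c)^{d-1}\bigr) = \Omega\bigl((l-v)^{k(d-1)}/c^{d-1}\bigr)$. The main obstacle is executing (i) and (ii) carefully in $d$ dimensions, since the tubes now have $(d-1)$-dimensional cross sections and can interact in more complex patterns than $c$-wide strips in 2D; nonetheless, the triangle-inequality argument extends essentially unchanged when applied one perpendicular coordinate at a time.
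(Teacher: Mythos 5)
Your proof takes essentially the same route as the paper's: count the $(l-v)^{k(d-1)}$ full lines per axis, select a sub-family of marked positions spaced more than $c$ apart, form a $d$-dimensional grid minor from spanner sub-paths between consecutive selected grid points using the triangle-inequality disjointness argument of Theorem \ref{thm:ddimensionalgridspanner}, and conclude via Theorem \ref{thm:gridlowerbound}. The only notable difference is that you make explicit the Cantor-dust product structure of the perpendicular tuples indexing full lines (which is what lets marked positions be chosen independently on each axis and still yield genuine intersection points) and you explicitly discuss the cross-axis collision case (ii); the paper leaves both of these implicit but relies on them in the same way.
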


\begin{proof}
We calculate the number of full rows along an arbitrary dimension $i$ since the number of full rows along every dimension is the same. We recall that a full row in $G$ is such that consecutive points in the row are at distance $1$ from each other. We observe that for any $i \in [d]$, the number of full rows in $h^{l,v,d}_{i}(k)$ is equal to $(l-v)^{k(d-1)}$. This is because the total number of points in $h^{l,v,d}_{i}(k)$ is $l^{k} (l-v)^{k(d-1)}$ and every point of $h^{l,v,d}_{i}(k)$ belongs to a unique full row. Note that number of points in a full row of $h^{l,v,d}_{i}(k)$ is $l^{k}$. We observe that the number of full rows in $f^{l,v,d}(k)$ is equal to the number of full rows in $h^{l,v,d}_{i}(k)$. Therefore, the number of full rows in $f^{l,v,d}(k)$ along any dimension $i$ is equal to $(l-v)^{k(d-1)}$.

Next we use a similar argument to that of theorem \ref{thm:ddimensionalgridspanner}. Let us denote by $S'_i$ the set of full rows of $f^{l,v,d}(k)$ parallel to the i\textsuperscript{th} axis. Further for all $j \neq i$ we denote by $r^i(j,m)$ the $m$\textsuperscript{th} set of rows in $S'_i$ along the $j$\textsuperscript{th} axis. We next pick for all $i \in [d]$ a subset of rows in $S'_i$ $S_i = \{ r^i(j,m) : j \neq i \text{ and } ( m \mod (c+1)) = 1 \}$. Let $X = S_1 \cap S_2 \cap \ldots \cap S_d$. Consider for all $i \in [d]$ the points of any pair of full rows $R,T \in S'_i$. Similar to the argument used in theorem \ref{thm:ddimensionalgridspanner} we have that for any pair of consecutive points $x_{1}, x_{2}$ in $R$ and for any pair of consecutive points $y_{1}, y_{2}$ in $T$, no shortest path in $G$ joining $x_{1}$ and $x_{2}$ can intersect any shortest path in $G$ joining $y_{1}$ and $y_{2}$. Thus like in the earlier proof for all $1 \leq i \leq d$ we can obtain a set of vertex disjoint paths $Q_i$ in $G$ that traverse the points in $S'_i$. Finally just as in theorem \ref{thm:ddimensionalgridspanner} we again consider $H$ a subgraph of $G$ as follows. $H$ consists of the points in $X$ and for any pair of adjacent points $p,q \in X$ such that $p$ and $q$ differ only along say the $i$\textsuperscript{th} coordinate $H$ also consists of the sub-path of the corresponding path in $Q_i$ connecting $p$ and $q$. Now contracting these paths in $H$ to get edges between adjacent points in $X$ results in a $d$-dimensional grid with side $ \frac{(l-v)^{k}}{c} $. Thus, we conclude that $\tw(G) = \Omega \left( \left( \frac{(l-v)^{kd}}{c^{d}} \right)^{\frac{d-1}{d}} \right) = \Omega \left( \frac{(l-v)^{k(d-1)}}{c^{d-1}} \right)$.  
\end{proof}
 
\begin{proof}[Proof of Theorem \ref{thm:spannerhigherdim}]
Using Lemma \ref{lem:twhigherdim} and Lemma \ref{lem:fractalhigherdim}, we get that 
$$\tw(G) = \Omega \left( \frac{(l-v)^{k(d-1)}}{c^{d-1}} \right) = \Omega \left( \frac{l^{k \delta' \left(1- \frac{1}{\delta'}\right)}}{c^{d-1}} \right) = \Omega \left( \frac{n^{1- \frac{1}{\delta'}}}{c^{d-1}} \right) .$$ 
This combined with the result of Lemma \ref{lem:fractalhigherdim} proves the statement of the theorem.
\end{proof}

\section{Running time lower bound for Independent Set of Unit Balls} \label{sec:independent-balls}
In this section, we present the proof of Theorem \ref{thm:independent_balls}. Our argument uses a reduction from a type of \csp{} called the Geometric \csp{}. The definitions in this section are taken from \cite{MS14}.
\begin{definition}[\textbf{The \csp}] \cite{MS14}
The input instance $I$ of a constraint satisfaction problem is a triple $(V,D,C)$, where $V$ is a set of variables that can take values in the domain $D$, and $C$ is a set of constraints, with each constraint being a pair $ \langle s_i,R_i \rangle$ such that:
\begin{itemize}
\item $s_i$ is a tuple of variables of size $m_i$.
\item $R_i$ is an $m_i$-ary relation over $D$.
\end{itemize}
A valid solution to the problem is an assignment of values from $D$ to each of the variables in $V$ such that for all constraints $ \langle s_i,R_i \rangle$, the assignment for each tuple $s_i$ is in $R_i$.
\end{definition}

For our purposes, we only need to consider the case where the constraints are binary, or in other words, for all $i$, we have that $m_i = 2$. We may assume that the input size $|I|$ of a binary CSP instance is a polynomial in $|V|$ and $|D|$. 
 
For an instance $I$ of the \csp{}, the \emph{primal graph} is a graph $G$ with vertex set $V$ and an edge between $u,w \in V$ if and only if there exists a constraint $ \langle s_i,R_i \rangle \in C$, such that $s_i = (u,w)$. 

Let $R[n,d]$ denote the $d$-dimensional grid with vertex set $[n]^d$ and let ${\cal{R}}_{d}$ denote the set of graphs $R[n,d]$ for all $n \geq 1$. 

\begin{definition}[\textbf{The $\leq$-CSP}] \cite{MS14}
A $d$-dimensional geometric $\leq$-CSP is a constraint satisfaction problem of the following form: The set of variables $V$ is a subset of vertices of $R[n,d]$ for some $n$ and the primal graph is an induced subgraph of $R[n,d]$. The domain is $[\Delta]^{d}$ for some integer $\Delta \geq 1$. The instance can contain arbitrary unary constraints but the binary constraints are of a special form. A geometric constraint is a constraint $\left\langle \left( \mathbf{a}, \mathbf{a'} \right),R \right\rangle$ with $\mathbf{a'} = \mathbf{a} + \mathbf{e}_{i}$ such that
$$ R = \{((x_{1}, \ldots, x_{d}), (y_{1}, \ldots, y_{d})) ~|~ x_{i} \leq y_{i} \} $$
This means that, if variables $\mathbf{a}$ and $\mathbf{a'}$ are adjacent with $\mathbf{a'}$ being larger by one in the $i$-th coordinate, then the $i$-th coordinate of the value of $\mathbf{a}$ is at most as large as the $i$-th coordinate of the value of $\mathbf{a'}$.
\end{definition}

We will use the following theorem from \cite{MS14} in the proof of Theorem \ref{thm:independent_balls}.
We remark that the condition $|V|=\Theta(n^d)$ is implicit in \cite{MS14}.

\begin{theorem} \label{thm:geometric-csp} \textnormal{[\cite{MS14}, Theorem 2.20]}
If for some fixed $d \geq 1$, there is an $f(\vert V \vert)n^{o(\vert V \vert^{1-1/d})}$ time algorithm for $d$-dimensional geometric $\leq$-CSP for some function $f$, where $|V|=\Theta(n^d)$, then ETH fails.
\end{theorem}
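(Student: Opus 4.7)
The plan is to derive this lower bound by a chain of reductions from $3$-SAT, exploiting ETH. First, by the Sparsification Lemma, one may assume the input $3$-SAT instance $\varphi$ has $N$ variables and $O(N)$ clauses, so that ETH rules out any $2^{o(N)}$-time algorithm for $\varphi$.

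The second step would be to reduce $\varphi$ to a grid-structured intermediate problem --- a $d$-dimensional Grid Tiling variant --- whose primal graph is the full $k^d$ grid and whose complexity lower bound matches what we need. The target parameters are $k$ and a domain parameter $n$ chosen so that $k^{d-1} = \Theta(N)$; this way $|V| = k^d$ satisfies $|V|^{1-1/d} = k^{d-1} = \Theta(N)$, so that a hypothetical $f(|V|)\, n^{o(|V|^{1-1/d})}$ algorithm becomes $2^{o(N)}$-time (after absorbing the $\log n$ factor via appropriate choice of $n$), contradicting ETH. The reduction arranges the clauses of $\varphi$ into ``strips'' along a chosen axis, assigning each cell in a strip the role of certifying consistency of a partial assignment with the choices of its axis-adjacent neighbors; the domain $[\Delta]^d$ carries both the current partial-assignment state and a ``hand-off'' of state to the next cell.

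The third step is to convert the grid-structured CSP into a geometric $\leq$-CSP, where the only binary constraints permitted are of the form $x_i(\mathbf{a}) \leq x_i(\mathbf{a}')$ whenever $\mathbf{a}' = \mathbf{a} + \mathbf{e}_i$. The idea is to replace each general compatibility relation between adjacent cells by a small gadget consisting of a constant number of auxiliary cells in the grid, using the freedom that the primal graph may be any induced subgraph of $R[n,d]$. The key trick is that a cycle of $\leq$-constraints, combined with carefully chosen unary constraints that pin the extremes of the cycle, forces equality of the relevant coordinate across the cycle; this suffices to simulate matching-of-coordinate constraints, which in turn can encode arbitrary pairwise constraints using the $d$-coordinate domain as a lookup table. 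These gadgets must inflate $|V|$ by only a constant factor so as to preserve $|V| = \Theta(n^d)$.

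The main obstacle is this last step. The monotonicity constraint is extremely rigid: it constrains only the single coordinate of the value matching the axis of the primal edge, and always in a prescribed direction. Simulating even a simple equality constraint therefore requires routing information across multiple axes and exploiting the multi-coordinate structure of $[\Delta]^d$ to carry auxiliary state. Moreover, one must verify that all gadget vertices fit as an induced subgraph of $R[n,d]$ while respecting the directional orientation of the $\leq$ constraints along each axis, which significantly constrains how the auxiliary vertices can be embedded. Once these gadgets are built and the parameter trade-off in the first reduction is made tight, the composition of the three reductions gives the claimed ETH-based lower bound.
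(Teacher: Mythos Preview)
The paper does not prove this theorem at all: it is quoted verbatim from \cite{MS14} (with the added remark that $|V|=\Theta(n^d)$ is implicit there) and used as a black box in the reduction for Theorem~\ref{thm:independent_balls}. So there is no ``paper's own proof'' to compare your proposal against.

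That said, your sketch is a reasonable high-level reconstruction of how the argument in \cite{MS14} actually proceeds: one starts from an ETH-hard problem, passes through a grid-structured CSP whose primal graph is $R[k,d]$ with $k^{d-1}=\Theta(N)$, and then converts arbitrary binary constraints into the restricted $\leq$-constraints via local gadgets. Your identification of the main difficulty---simulating equality (and hence arbitrary pairwise relations) using only directional monotone constraints on a single coordinate---is on target, and the idea of closing a cycle of $\leq$-constraints together with pinning unary constraints to force equality is indeed the core mechanism. What your outline leaves genuinely unspecified is the concrete gadget: in \cite{MS14} this is done by first reducing to a CSP with only \emph{projection} constraints (each binary relation is the graph of a function in one direction), and then realizing projection constraints by a short axis-aligned ``staircase'' of auxiliary variables whose unary constraints encode the function; the $\leq$-constraints along the staircase together with the bounded domain force the intended value. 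Without that explicit construction your third step remains a plan rather than a proof, but the plan is the right one.
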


\begin{figure}
\begin{center}
\scalebox{0.15}{\includegraphics{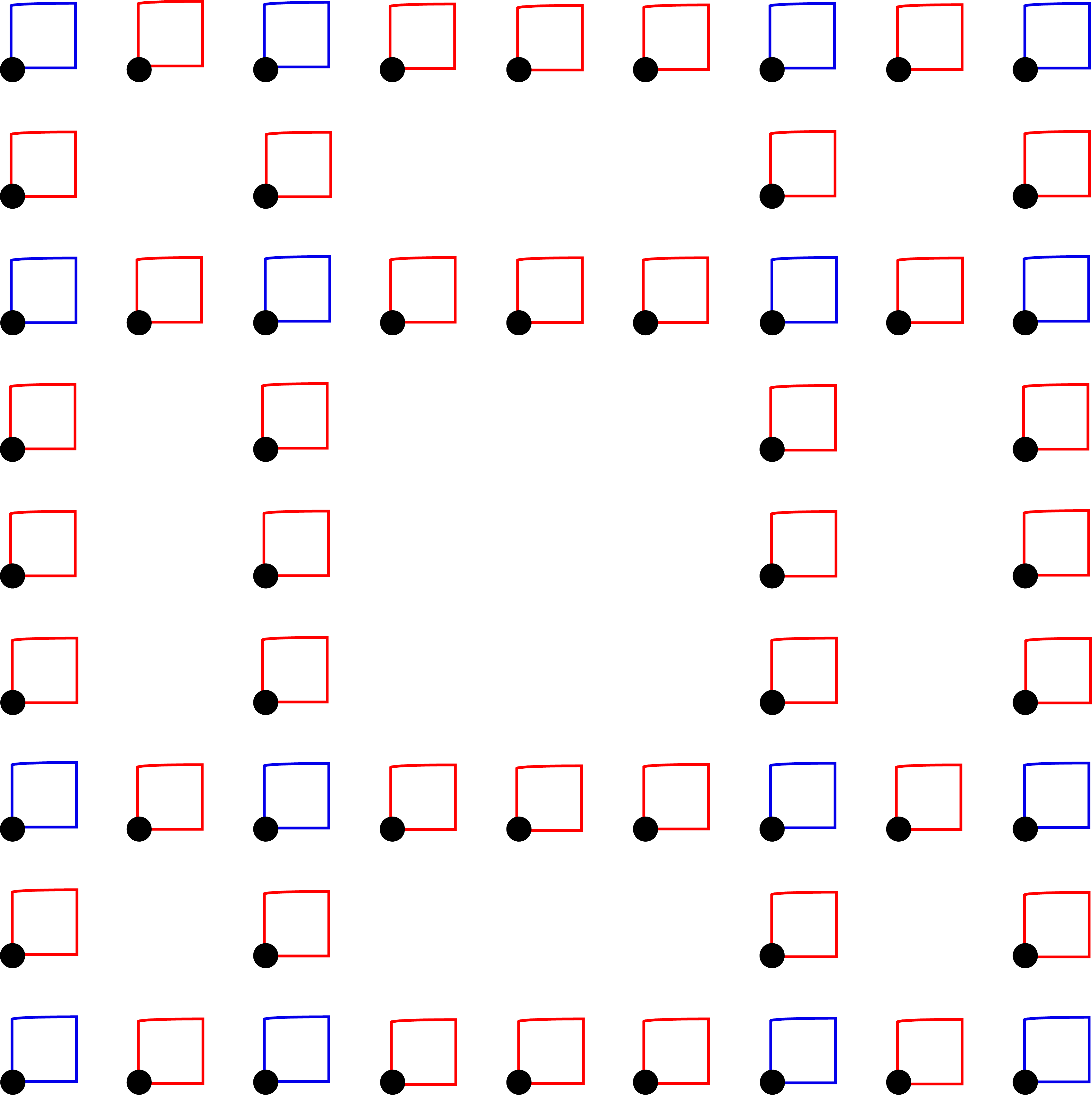}}
\caption{The case $d=2,~ l=3,~ v=1$ and $m=2$. Boxes contain centers of balls in $B$. The blue boxes are obtained from variables in $I$ and the red boxes are obtained from variables in $I' \setminus I$. }
\label{fig:independent_disks}
\end{center}
\end{figure}

\paragraph{Construction of $\leq$-CSP $I'$}.~Given $\delta \in (1,d)$ and $\eps' > 0$, we can find odd integers $l$ and $v$ such that $\left\vert \delta - \frac{\log l(l-v)^{d-1}}{\log l} \right\vert \leq \eps'$. Let $\delta' = \frac{\log l(l-v)^{d-1}}{\log l}$. Let $I$ be a $d$-dimensional $\leq$-CSP instance with variables $V$ and domain $[\Delta]^{d}$, where $\Delta$ is any positive integer. Let the primal graph of $I$ be $R[n,d]$. We now define a new $d$-dimensional $\leq$-CSP $I'$, with variables $V'$ and domain $[\Delta]^d$, such that $|V'| = O \left(\vert V \vert^{\frac{d-1}{d} \cdot \frac{\delta'}{\delta' - 1}} \right)$, and the primal graph of $I'$ is $R[n_{new},d]$, where $n_{new} = n^{O(1)}$. Let $m > 0$ be the smallest integer such that $ l^{\frac{(m-1)(\delta' - 1)}{d-1}} < n \leq l^{\frac{m(\delta' - 1)}{d-1}}$.  We construct $f^{l,v,d}(m)$, as in the proof of Theorem \ref{thm:spannerhigherdim}. From Lemma \ref{lem:twhigherdim}, we know that $f^{l,v,d}(m)$ contains a subset of points that form a $d$-dimensional grid of side length $l^{\frac{m(\delta' - 1)}{d-1}}$. Let $M$ denote this grid contained in $f^{l,v,d}(m)$. Since the variables $V$ lie on the grid $R[n,d]$ and $n^d \leq l^{\frac{md(\delta' - 1)}{d-1}}$, we can place the variables $V$ on grid $M$ such that their position relative to each other in $M$ is the same as in $R[n,d]$. By abuse of notation, we refer to the subset of $M$ containing variables $V$ as $V$. We refer to the pointset $f^{l,v,d}(m) \setminus M$ as $C$. We observe that the points in $C$ connect adjacent points of $M$. We now define a new set of variables $V' = V \cup C$. If $\mathbf{a} \in V$, then the set of unary constraints for $\mathbf{a}$ is $R_{\mathbf{a}}$. If $\mathbf{a} \in C$, then $\mathbf{a}$ belongs to a chain of points that connects two points of $M$ along some dimension $i$, where $i \in [d]$. For an $\mathbf{a} \in C$ connecting two points of $M$ along dimension $i$, we define $R_{\mathbf{a}}$ as follows: 
$$R_{\mathbf{a}} = \{(a_1, \ldots, a_{i-1}, a_i, a_{i+1} \ldots,a_d) ~|~ a_j = 0 ~ \forall ~ j \neq i ~\mathrm{and}~ a_i \in [\Delta]\}.$$
Thus for every $\mathbf{a} \in C$, we have $|R_{\mathbf{a}}| = \Delta$.
We define $R_{\mathbf{a}}$ for all variables $\mathbf{a} \in C$ in a similar manner. The binary constraints on variables $V'$ are described as follows: a binary constraint is a constraint $\langle(\mathbf{a}, \mathbf{a'}),R \rangle$, with $\mathbf{a'} = \mathbf{a} + \mathbf{e}_i$ in $f^{l,v,d}(m)$ such that
$$ R = \{((x_1, \ldots, x_d),(y_1, \ldots, y_d))~|~x_i \leq y_i \}. $$
Let $I'$ denote the new $\leq$-geometric CSP with variables $V'$, and unary and binary constraints as defined above. From the choice of $m$, we have that $|V'| \leq l^{\delta'} \cdot n^{(d-1) \cdot \frac{\delta'}{\delta' - 1}}$. Thus $|V'| = O\left(|V|^{\frac{d-1}{d} \cdot \frac{\delta'}{\delta' - 1}}\right)$,
where we use the fact that $\delta$, $\delta'$, and $l$ are fixed constants.
Similarly, we get that $ n_{new} = n^{O(1)}$.

\begin{lemma} \label{lem:newcsp}
$I'$ is satisfiable if and only if $I$ is satisfiable. 
\end{lemma}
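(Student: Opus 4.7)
I intend to prove the equivalence by exhibiting an explicit assignment extension in one direction and a restriction in the other. The chain points in $C$ act as a faithful ``repeater'' for the $i$-th coordinate along a single dimension while being forced to value $0$ in all other coordinates, so the $\leq$-constraint between two $M$-adjacent variables in $I$ will be transported through an entire chain in $I'$ by transitivity.

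\textbf{Forward direction ($I$ satisfiable $\Rightarrow$ $I'$ satisfiable).} Given $\sigma : V \to [\Delta]^{d}$ satisfying $I$, I define $\sigma' : V' \to [\Delta]^{d}$ as follows. On $V$, set $\sigma'(\mathbf{a}) = \sigma(\mathbf{a})$. For $\mathbf{a} \in C$ lying on a chain along dimension $i$ connecting two consecutive $M$-points $\mathbf{p}, \mathbf{q}$ (with $\mathbf{p}$ preceding $\mathbf{q}$ along $\mathbf{e}_i$), put $\sigma'(\mathbf{a})_i = \sigma(\mathbf{p})_i$ and $\sigma'(\mathbf{a})_j = 0$ for $j \neq i$. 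Unary constraints are then immediate: on $V$ they coincide with those of $I$, and on $C$ they are satisfied by construction of $R_{\mathbf{a}}$. For binary constraints, edges internal to a chain along $i$ impose $\sigma(\mathbf{p})_i \leq \sigma(\mathbf{p})_i$, the terminal edge from the last chain point to $\mathbf{q}$ imposes exactly $\sigma(\mathbf{p})_i \leq \sigma(\mathbf{q})_i$ (the original $\leq$-constraint of $I$ between the $R[n,d]$-adjacent variables $\mathbf{p},\mathbf{q}$), and any edge of $f^{l,v,d}(m)$ perpendicular to a chain's own dimension has $0$ on at least one endpoint, so the constraint $x_j \leq y_j$ holds because all $[\Delta]$-values are non-negative.

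\textbf{Backward direction ($I'$ satisfiable $\Rightarrow$ $I$ satisfiable).} Given $\sigma'$ satisfying $I'$, set $\sigma = \sigma'|_V$. Unary constraints on $V$ are identical in $I$ and $I'$, so they are preserved. For a binary $\leq$-constraint of $I$ between $\mathbf{a}$ and $\mathbf{a}' = \mathbf{a} + \mathbf{e}_i$ in $R[n,d]$, the corresponding pair in $M$ is joined inside $f^{l,v,d}(m)$ by a straight chain of $C$-points along dimension $i$; concatenating the $I'$-constraints along that chain gives $\sigma'(\mathbf{a})_i \leq \sigma'(\mathbf{c}_1)_i \leq \cdots \leq \sigma'(\mathbf{c}_k)_i \leq \sigma'(\mathbf{a}')_i$, and transitivity yields $\sigma(\mathbf{a})_i \leq \sigma(\mathbf{a}')_i$.

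\textbf{Main obstacle.} The step requiring the most care is the geometric claim implicit in the above argument: in $f^{l,v,d}(m)$, each pair of $M$-adjacent variables of $V$ is joined by a unique axis-aligned path of $C$-points, so the dimension $i$ associated to each $\mathbf{a} \in C$ is unambiguous, and no $C$-point is adjacent in $f^{l,v,d}(m)$ to an $M$-point or to a $C$-point of a different chain across a perpendicular direction in a way that would force its constrained $0$-value to upper-bound a strictly positive $[\Delta]$-value. This structural fact is exactly the full-row/full-column property of $f^{l,v,d}(m)$ used in Lemma \ref{lem:twhigherdim}, combined with the order-preserving embedding of $V$ into $M$ that matches the relative position of variables in $R[n,d]$. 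Once this is made precise, both directions of the equivalence follow immediately from the coordinate-wise bookkeeping above.
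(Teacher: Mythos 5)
Your proof is correct and follows essentially the same route as the paper's: restriction plus transitivity of $\leq$ along each chain for one direction, and extending the assignment by propagating the source endpoint's $i$-th coordinate (zeroing all others) along each chain for the other. The structural fact you flag in your ``Main obstacle'' paragraph — that each $C$-point lies on a unique axis-aligned chain and has no perpendicular neighbors in $f^{l,v,d}(m)$ that could force a $0$ to dominate a positive value — is exactly what the paper treats as implicit in its ``It is straightforward to see'' remark, so your explicit acknowledgment of it is a welcome refinement rather than a divergence.
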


\begin{proof}
Let $h$ be a satisfying assignment for $I'$. This means that for every $\mathbf{a} \in V'$, $h(\mathbf{a}) \in R_{\mathbf{a}}$ and if $\mathbf{a}, \mathbf{a'} \in V'$ are such that $\mathbf{a'} = \mathbf{a} + \mathbf{e}_{i}$ in $f^{l,v,d}(m)$, then the $i$-th coordinate of $h(\mathbf{a})$ is at most the $i$-th coordinate of $h(\mathbf{a'})$. Let $\mathbf{a}, \mathbf{a'} \in V$ be such that $\mathbf{a'} = \mathbf{a} + \mathbf{e}_{i}$ in $R[n,d]$. Now $\mathbf{a}$ and $\mathbf{a'}$ are not adjacent in $f^{l,v,d}(m)$, but there is a chain of variables $\mathbf{a^{1}}, \mathbf{a^2}, \ldots \in C$ connecting $\mathbf{a}$ and $\mathbf{a'}$ along dimension $i$. Since $\mathbf{a^{1}} = \mathbf{a} + e_{i}$ in $f^{l,v,d}(m)$, the $i$-th coordinate of $h(\mathbf{a})$ is at most the $i$-th coordinate of $h(\mathbf{a^1})$. Similarly, we have $\mathbf{a^{2}} = \mathbf{a^{1}} + e_{i}$ in $f^{l,v,d}(m)$ and therefore the $i$-th coordinate of $h(\mathbf{a^1})$ is at most the $i$-th coordinate of $h(\mathbf{a^2})$. For every pair of consecutive variables $\mathbf{a^{j}}, \mathbf{a^{j+1}}$ in this chain, we have $\mathbf{a^{j+1}} = \mathbf{a^{j}} + e_i$ in $f^{l,v,d}(m)$ and therefore the $i$-th coordinate of $h(\mathbf{a^{j}})$ is at most the $i$-th coordinate of $h(\mathbf{a^{j+1}})$. Since $\leq$ is a transitive relation, we get that the $i$-th coordinate of $h(\mathbf{a})$ is at most the $i$-th coordinate of $h(\mathbf{a'})$. Therefore $h$ is a satisfying assignment for $I$. 

Let $h$ be a satisfying assignment for $I$. This means that for every $\mathbf{a} \in V$, $h(\mathbf{a}) \in R_{\mathbf{a}}$ and if $\mathbf{a}, \mathbf{a'} \in V$ are such that $\mathbf{a'} = \mathbf{a} + \mathbf{e}_{i}$ in $R[n,d]$, then the $i$-th coordinate of $h(\mathbf{a})$ is at most the $i$-th coordinate of $h(\mathbf{a'})$. Let $\mathbf{a^{1}}, \mathbf{a^2}, \ldots \in C$ be a chain of variables connecting $\mathbf{a}$ and $\mathbf{a'}$ in $f^{l,v,d}(m)$ along dimension $i$. For every variable $\mathbf{a^j}$ in this chain, define $h(\mathbf{a^j}) = (0, \ldots, 0, x_i,0,\ldots,0)$, where $x_{i}$ is equal to the $i$-th coordinate of $h(\mathbf{a})$. It can be checked that $h(\mathbf{a^j}) \in R(\mathbf{a^j})$. Since every variable in $V' \setminus V$ connects some pair of adjacent variables of $I$, we define $h$ on $V' \setminus V$ as above. It is straightforward to see that $h$ is a satisfying assignment for $I'$. 
\end{proof}

\begin{proof}[Proof of Theorem \ref{thm:independent_balls}]
We describe the construction of the set of balls $B$. We work with open balls of diameter $1$. Then, we have that two balls are non-intersecting if and only if the distance between their centers is at least $1$. Let $\alpha = \frac{1}{d \Delta^{2}}$. Let $\mathbf{a} = (a_{1}, \ldots, a_{d})$ be a variable in $V'$ and let $ \langle (\mathbf{a}), R_{\mathbf{a}} \rangle $ be the unary constraint on $\mathbf{a}$. For every constraint $\mathbf{x} = (x_{1}, \ldots, x_{d}) \in R_{\mathbf{a}}$, we consider the point $\mathbf{a} + \alpha \mathbf{x} = (a_{1} + \alpha x_{1}, \ldots, a_{d} + \alpha x_{d})$ and add an open ball of radius $\frac{1}{2}$ centered at $\mathbf{a} + \alpha \mathbf{x}$ to $B$. We do this for every variable $\mathbf{a} \in V'$. We have $\dimF(f^{l,v,d}(m)) = \delta'$, for every $\mathbf{a} \in V'$,  $|R_{\mathbf{a}}| \leq \Delta^{d}$ and for every $\mathbf{x} \in R_{\mathbf{a}}$, $d(\mathbf{a}, \mathbf{a} + \alpha \mathbf{x}) \leq \frac{1}{\sqrt{2}}$. Thus using Lemma \ref{lem:substitution}, we get that the fractal dimension of the set of centers of the balls in $B$ is $\delta'$. We refer the reader to Figure \ref{fig:independent_disks} for a visualization of the set of centers.

Let $B_{\mathbf{a}}$ denote the set of balls obtained from the unary constraints of $\mathbf{a}$. Note that all the balls in $B_{\mathbf{a}}$ intersect each other. Therefore if we find a collection $B' \subseteq B$ of pairwise non-intersecting balls, then $|B'| \leq |V'|$ and $|B'| = |V'|$ is possible if $B'$ contains exactly one ball from every $B_{\mathbf{a}}$. 

Let $B_{1}$ be the ball centered at $\mathbf{a} + \alpha \mathbf{x}$ and $B_{2}$ be the ball centered at $\mathbf{a} + \mathbf{e}_{i} + \alpha \mathbf{x'}$ for some $\mathbf{a} \in V'$, $i \in [d]$, $\mathbf{x} = (x_1, \ldots, x_d) \in [\Delta]^d$ and $\mathbf{x'} = (x'_1, \ldots, x'_d) \in [\Delta]^d$. We claim that $B_{1}$ and $B_{2}$ are non-intersecting if and only if $x_{i} \leq x'_{i}$. We have that $B_{1}$ and $B_{2}$ intersect if and only if the distance between their centers is less than $1$. The distance between their centers is given by
$$ \sum_{j=1}^{i-1} \alpha^2 (x'_{j} - x_{j})^{2} + (1+ \alpha(x'_i - x_i))^{2} + \sum_{j=i+1}^{d} \alpha^2 (x'_{j} - x_{j})^2 \leq (d-1) \alpha^2 \Delta^2 + (1+ \alpha(x'_i - x_i))^{2}. $$
The above inequality is because for every $j \in [d]$, we have $(x'_j - x_j)^2 \leq \Delta^2$. If $x_i > x'_i$, then we get that $(1+ \alpha(x'_i - x_i))^{2} < (1- \alpha)^2$. This gives 
$$ (d-1) \alpha^2 \Delta^2 + (1+ \alpha(x'_i - x_i))^{2} < d \alpha^2 \Delta^2 + (1- \alpha)^2 \leq \alpha + (1- \alpha)^2 = 1 - \alpha + \alpha^2 < 1. $$
The first inequality holds because $\alpha = \frac{1}{d \Delta^2}$. Thus we have shown that if $B_{1}$ and $B_{2}$ are non-intersecting, then $x_{i} \leq x'_{i}$. On the other hand, if $x_{i} \leq x'_{i}$ then $(1+ \alpha(x'_i - x_i))^{2} \geq 1$. Since $B_{1}$ and $B_{2}$ are open balls, they are non-intersecting.

If $\mathbf{a}$ and $\mathbf{a'}$ are not adjacent in $f^{l,v,d}(m)$, then for any $\mathbf{x}, \mathbf{x'} \in [\Delta]^d$ the balls centered at $\mathbf{a} + \alpha \mathbf{x}$ and $\mathbf{a'} + \alpha \mathbf{x'}$ cannot intersect because the square of the distance between the centers is at least $2(1 - \alpha \Delta)^2 > 1$.

Let $g$ be a satisfying assignment for $I'$. For every variable $\mathbf{a} \in V'$, we select the ball $\mathbf{a} + \alpha g(\mathbf{a}) \in B_{\mathbf{a}} $. If $\mathbf{a}$ and $\mathbf{a'}$ are not adjacent in $f^{l,v,d}(m)$, then the corresponding balls $\mathbf{a} + \alpha g(\mathbf{a})$ and $\mathbf{a'} + \alpha g(\mathbf{a'})$ do not intersect. If $\mathbf{a}$ and $\mathbf{a'}$ are adjacent in $f^{l,v,d}(m)$, then there is a geometric binary constraint on $\mathbf{a}$ and $\mathbf{a'}$. Therefore, if say $\mathbf{a'} = \mathbf{a} + \mathbf{e}_{i}$, then the binary constraint ensures that the $i$-th coordinate of $g(\mathbf{a})$ is at most the $i$-th coordinate of $g(\mathbf{a'})$. Thus, the balls centered at $\mathbf{a} + \alpha g(\mathbf{a})$ and $\mathbf{a'} + \alpha g(\mathbf{a'})$ do not intersect. Therefore the balls with centers $\mathbf{a} + \alpha g(\mathbf{a}),~ \mathbf{a} \in V'$ form a set of $|V'|$ pairwise non-intersecting balls. 

Conversely, let $B_{0} \subseteq B$ be a set of $\vert V' \vert$ pairwise non-intersecting balls. This is possible only if for every $\mathbf{a} \in V'$, the set $B_{0}$ contains a single ball from $B_{\mathbf{a}}$. Suppose the unique ball in $B_{\mathbf{a}} \bigcap B_{0}$ is centered at $\mathbf{a} + \alpha g(\mathbf{a})$, for some $g(\mathbf{a}) \in [\Delta]^{d}$. We claim that $g$ is a satisfying assignment for $I'$. It satisfies the unary constraints because $\mathbf{a} + \alpha g(\mathbf{a}) \in B_{\mathbf{a}}$ implies that $g(\mathbf{a}) \in R_{\mathbf{a}}$. Now let $\mathbf{a}$ and $\mathbf{a'} = \mathbf{a} + \mathbf{e}_{i}$ be adjacent variables in $f^{l,v,d}(m)$. Since the balls centered at $\mathbf{a} + \alpha g(\mathbf{a})$ and $\mathbf{a'} + \alpha g(\mathbf{a'})$ do not intersect, we get that the $i$-th coordinate of $g(\mathbf{a})$ is at most the $i$-th coordinate of $g(\mathbf{a'})$. Thus the geometric binary constraint on $\mathbf{a}$ and $\mathbf{a'}$ is satisfied. 

\end{proof}

\begin{Remark}
We can similarly show that assuming Exponential Time Hypothesis, for any $\delta \in (1,d)$ and any $\eps > 0$, the problem of finding $k$-pairwise non-intersecting $d$-dimensional axis parallel unit cubes in a collection of $n$ cubes with centers having fractal dimension at most $\delta$ cannot be solved in time $f(k) n^{O(k^{1-1/(\delta - \eps)})}$, for any computable function $f$. Given a $\leq$-CSP instance $I$, we reduce $I$ to another $\leq$-CSP instance $I'$ as explained in the beginning of this section. We then use the construction and analysis as in the proof of Theorem $3.2$ of \cite{MS14}, replacing $I$ with $I'$. Using Theorem \ref{thm:geometric-csp} and the proof of Theorem \ref{thm:independent_balls}, we get the desired result. 
\end{Remark}

\section{Running time lower bounds for TSP} \label{sec:TSP}
In this section we provide a proof of Theorem \ref{thm:tsplowerboundmain}. We consider the case of $d = 2$ in subsection \ref{subsec:lowerboundtsp} and $d>2$ in \ref{subsec:lowerboundtsparbitdim}. Our analysis uses reductions of the \ecp{} and the \csp{} for the two cases respectively.

\subsection{Lower bound for TSP in $\mathbb{R}^2$}\label{subsec:lowerboundtsp}
We show a running time lower bound for TSP, on pointsets of arbitrary fractal dimension in $\mathbf{R}^2$.
\begin{theorem}\label{thm:tsplowerbound}
For all $\delta\in (1,2)$ and for all $\delta'<\delta$, if Euclidean TSP in $\mathbb{R}^2$ on all pointsets of size $n$ and fractal dimension at most $\delta$ can be solved in time $2^{O(n^{1-1/\delta'})}$, then ETH fails.
\end{theorem}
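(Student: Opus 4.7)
My plan is to reduce from the \ecp{} (ECP), whose ETH-based hardness rules out $2^{o(N)}$-time algorithms on instances of size $N$. Given $\delta\in(1,2)$ and $\delta'<\delta$, I would first choose odd integers $l>v>0$ such that $\delta'':=\log(l(l-v))/\log(l)$ lies in $(\delta',\delta)$; this is possible because the set $\{\log(l(l-v))/\log(l):l,v\text{ odd},\,l>v\}$ is dense in $(1,2)$. The base pointset is the Cantor crossbar $f^{l,v,2}(k)$ from the construction of Theorem \ref{thm:spannerhigherdim}, where $k$ is the least integer with $(l-v)^k\geq N$, so that $(l-v)^k=\Theta(N)$.

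Exploiting the structure used in the proof of Lemma \ref{lem:twhigherdim}, the pointset $f^{l,v,2}(k)$ contains a subset of $\Theta(N^2)$ points forming a logical $(l-v)^k\times(l-v)^k$ grid, where every pair of logically adjacent gridpoints is joined by a unit-distance ``full row/column'' path of length $l^k$ through the fractal. I would implement the standard MS14-style planar TSP reduction from ECP by placing each gadget at one logical gridpoint and routing the wires between adjacent gadgets along the corresponding full-row/column paths. After first uniformly scaling $f^{l,v,2}(k)$ by a large constant using Lemma \ref{lem:scaling} in order to make room for bounded-diameter gadgets, Lemma \ref{lem:substitution} certifies that the resulting augmented pointset still has fractal dimension at most $\dimF(f^{l,v,2}(k))\leq\delta''\leq\delta$.

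For the running-time bookkeeping, the total number of points is $n=\Theta(|f^{l,v,2}(k)|)=\Theta(l^{k\delta''})$. Using the identity $l^{\delta''-1}=l-v$ together with $(l-v)^k=\Theta(N)$, this rewrites as $n=\Theta(N^{\delta''/(\delta''-1)})$. A hypothetical algorithm solving TSP on $n$ points in time $2^{O(n^{1-1/\delta'})}$ would then solve the ECP instance in time $2^{O(N^{c})}$, where $c=\frac{\delta''(\delta'-1)}{\delta'(\delta''-1)}$. A short computation gives $c<1\Leftrightarrow\delta''>\delta'$, which holds by our choice of $\delta''$. Hence ECP would be solvable in $2^{o(N)}$, contradicting ETH.

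The main obstacle is adapting the MS14 gadgetry from short unit wires to the long fractal wires of our construction: each wire between adjacent logical gridpoints is now a path of up to $l^k$ fractal points rather than a single unit edge. I expect to handle this by augmenting each wire with small ``chain'' sub-gadgets that force the optimal TSP tour to traverse the wire in a canonical direction and make the wire's cost contribution depend only on the ECP-assigned values at its two endpoint gadgets, up to a fixed global additive constant independent of the ECP assignment. This is consistent with the gadget-along-a-grid philosophy alluded to in the paper's overview of techniques, adapted from a genuine grid to the skeleton of the Cantor crossbar.
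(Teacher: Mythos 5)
Your plan follows the paper's proof of Theorem \ref{thm:tsplowerbound} almost exactly: both reduce from the \ecp{}, pick odd $l>v$ with $\delta''=\log(l(l-v))/\log l\in(\delta',\delta]$, lay Papadimitriou-style gadgets (configurations $H$, $A$, $B$ and $1$-, $2$-, and vertical chains) onto the scaled Cantor crossbar $f^{l,v,2}(k)$, control the fractal dimension via Lemmas \ref{lem:scaling} and \ref{lem:substitution}, and perform the same bookkeeping $n=\Theta(m^{\delta''/(\delta''-1)})$ to conclude that a $2^{O(n^{1-1/\delta'})}$ TSP algorithm would solve ECP in $2^{o(m)}$ time. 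The ``chain sub-gadgets'' you hope to place along the long fractal wires are precisely the paper's $Y$-cells and $Z$-cells (a $2$-chain, resp.\ a pair of vertical chains) substituted at the $h^{l,v,2}(0)$- and $g^{l,v,2}(0)$-points, so the obstacle you flag is resolved exactly as you anticipate; the remaining step to make it rigorous is to verify, as the paper does via Papadimitriou's technical Lemma \ref{lem:papadimitrioutechnical}, that the $H$ configurations stay $a$-components and the rest of the stretched construction stays $a$-compact, which is why the paper fixes the cell size $2a+11$ and adjusts the top row with extra $1$-chains.
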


We use an argument inspired by the NP-hardness proof of Euclidean TSP in the plane due to Papadimitriou \cite{papadimitriou1977euclidean}. The proof involves a reduction of an instance of the \ecp{} to a TSP instance. We reuse the following gadgets and definitions introduced by Papadimitriou in our construction. Each gadget (called configuration in \cite{papadimitriou1977euclidean}) is a specific arrangement of points.

\paragraph{1-chain}.~It is a configuration consisting of a set of points $p_1, \ldots , p_k \in \mathbb{R}^2$ where for all $i$, $ \lVert p_{i+1} - p_i \rVert = 1$. Furthermore, $p_i$ and $p_{i+1}$ differ in a single coordinate. This configuration is depicted in Figure \ref{fig:1-chain}-a (along with a schematic abbreviation in Figure \ref{fig:1-chain}-b).

\begin{figure}[H]
\begin{center}
\scalebox{0.10}{\includegraphics{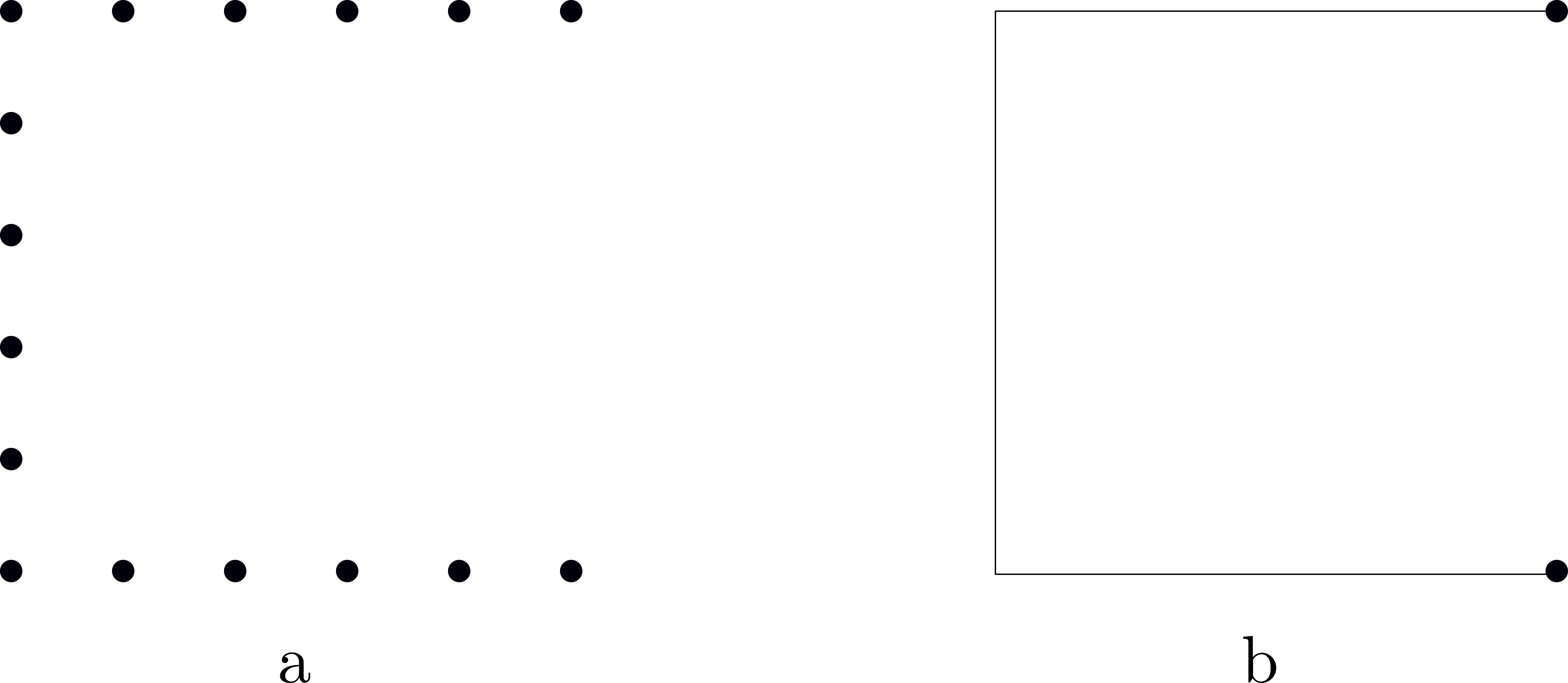}}
\caption{1-chain}
\label{fig:1-chain}
\end{center}
\end{figure}

\paragraph{2-chain}.~This configuration consists of two parallel rows of points as depicted in Figure \ref{fig:2-chain}-a. The distance between adjacent points in each row is 2 and the distance between the two rows is 1. A schematic abbreviation of a 2-chain is depicted in Figure \ref{fig:2-chain}-b. This configuration can be traversed optimally in two modes referred to as mode 2 and mode 1 which are depicted in Figure \ref{fig:2-chain}-c and \ref{fig:2-chain}-d.

\begin{figure}[H]
\begin{center}
\scalebox{0.10}{\includegraphics{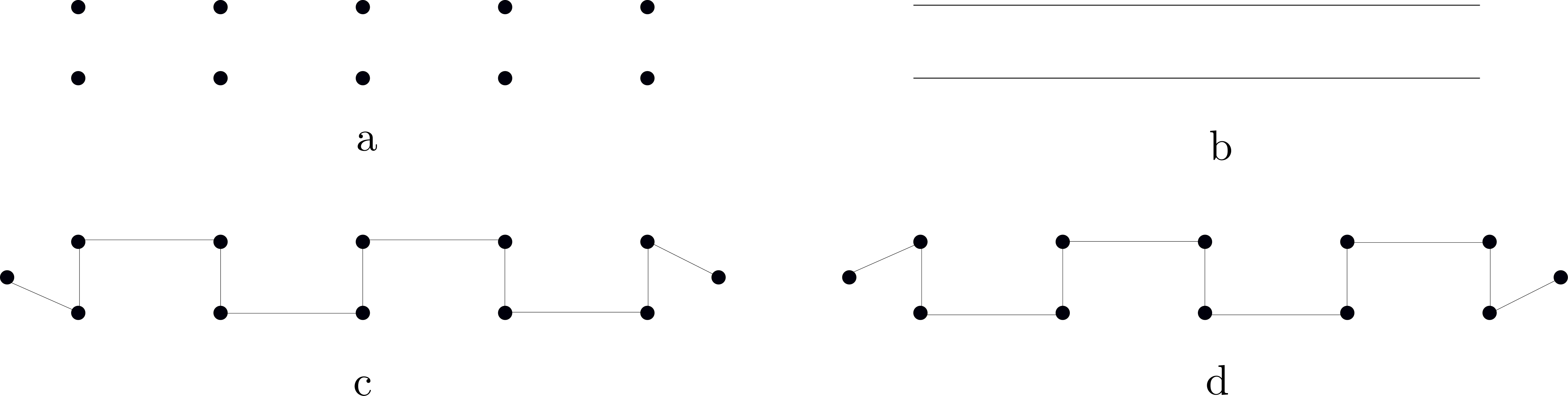}}
\caption{2-chain}
\label{fig:2-chain}
\end{center}
\end{figure}

\paragraph{configuration - H}.~This configuration is depicted in Figure \ref{fig:config_H}-a along with an abbreviation in Figure \ref{fig:config_H}-b. The width of the configuration is $8$ and the height is $7$. It can be traversed optimally in 4 ways two of which are depicted in Figure \ref{fig:config_H}-c and Figure \ref{fig:config_H}-d (the other two are variations starting and ending on points from the top). 

\begin{figure}[H]
\begin{center}
\scalebox{0.10}{\includegraphics{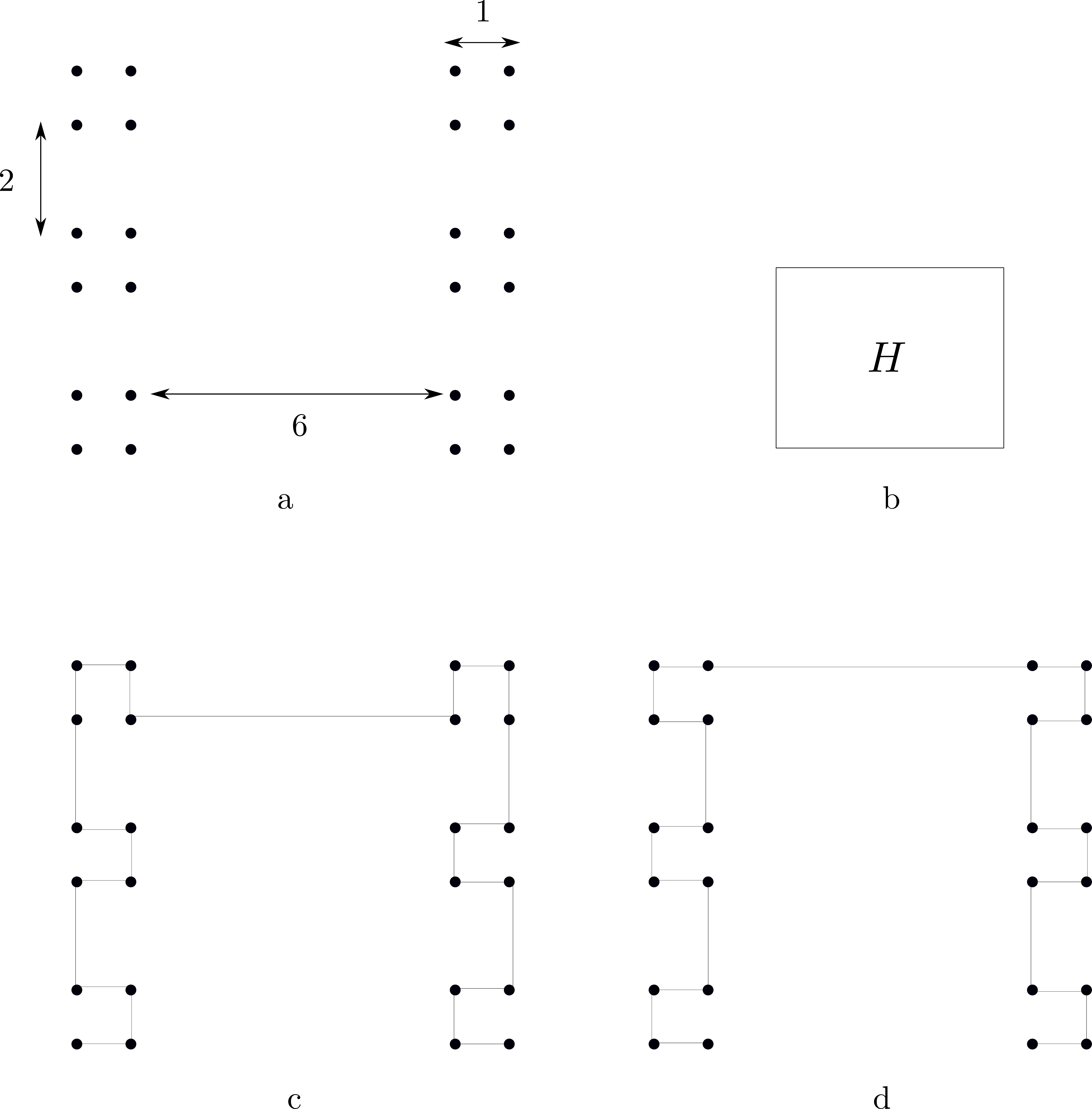}}
\caption{configuration H}
\label{fig:config_H}
\end{center}
\end{figure}

\paragraph{configuration - A}.~This configuration is depicted in Figure \ref{fig:config_A}-a (abbreviation in Figure \ref{fig:config_A}-b). It has height $4$ and width $8$. Two of the modes in which it can be traversed are depicted in 
Figure \ref{fig:config_A}-c and Figure \ref{fig:config_A}-d.
\begin{figure}[H]
\begin{center}
\scalebox{.10}{\includegraphics{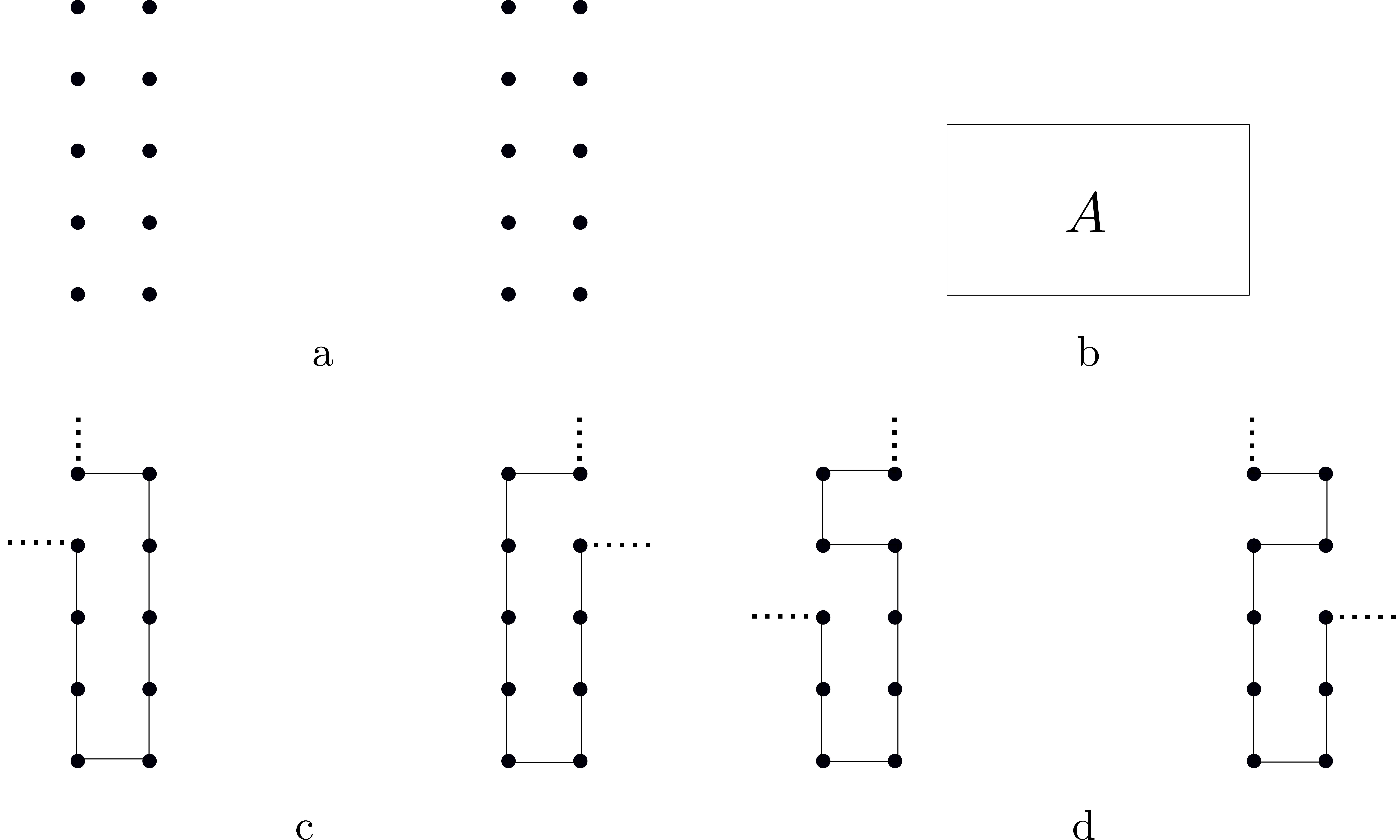}}
\caption{configuration A}
\label{fig:config_A}
\end{center}
\end{figure}

\paragraph{configuration - B}.~This configuration is depicted in Figure \ref{fig:config_B}-a (abbreviation in Figure \ref{fig:config_B}-b). It has height $4$ and width $8$. If it is entered via a 2-chain traversed in mode-2, then it can only be traversed optimally if the tour traverses it as depicted in 
Figure \ref{fig:config_B}-c. On the other hand if it is entered via a 2-chain traversed in mode-1, then it can visit points above or below it while traversing  optimally as depicted in 
Figure \ref{fig:config_B}-d.

\paragraph{$b$-component}.~Let $P$ be a set of points supporting a distance function $d$. We call $S \subset P$ a $b$-component if $S$ is maximal with respect to the following properties: 
\begin{description}
\item{1.} For all $u \in S$, $\min \{ d(u,v)| v \not\in S \} \geq b.$ 
\item{2.} For all $u \in S$, $\min \{ d(u,v)| v \in S \} < b.$ 
\end{description}

\paragraph{$k$-path}.~A $k$-path of $P$ is a set of $k$ vertex disjoint, not-closed paths that cover $P$.

\paragraph{$b$-compact}.~$S \subset P$ is $b$-compact if for all positive integers $k$, there exists a $k$-path of $S$ that has length less than $b$ plus the length of an optimal $k + 1$-path.

\begin{figure}[H]
\begin{center}
\scalebox{0.10}{\includegraphics{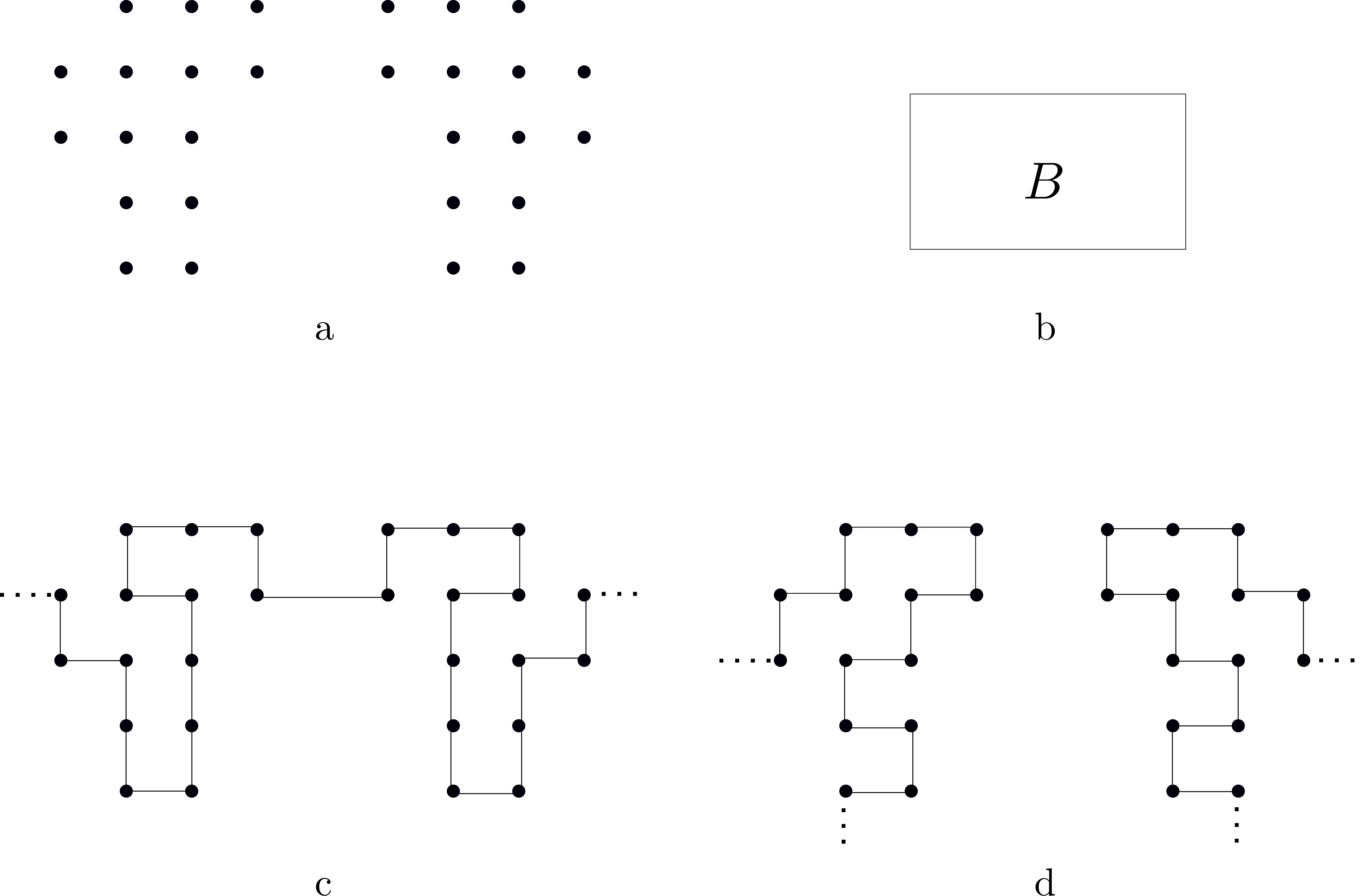}}
\caption{configuration B}
\label{fig:config_B}
\end{center}
\end{figure}

\paragraph{vertical-chain}.~This configuration consists of two parallel columns of points. The distance between adjacent points in each column is 1 and the distance between the two columns is 1. This configuration is depicted in Figure \ref{fig:vertical-chain}-a (abbreviation in Figure \ref{fig:vertical-chain}-b).

\begin{figure}[h]
\begin{center}
\scalebox{0.10}{\includegraphics{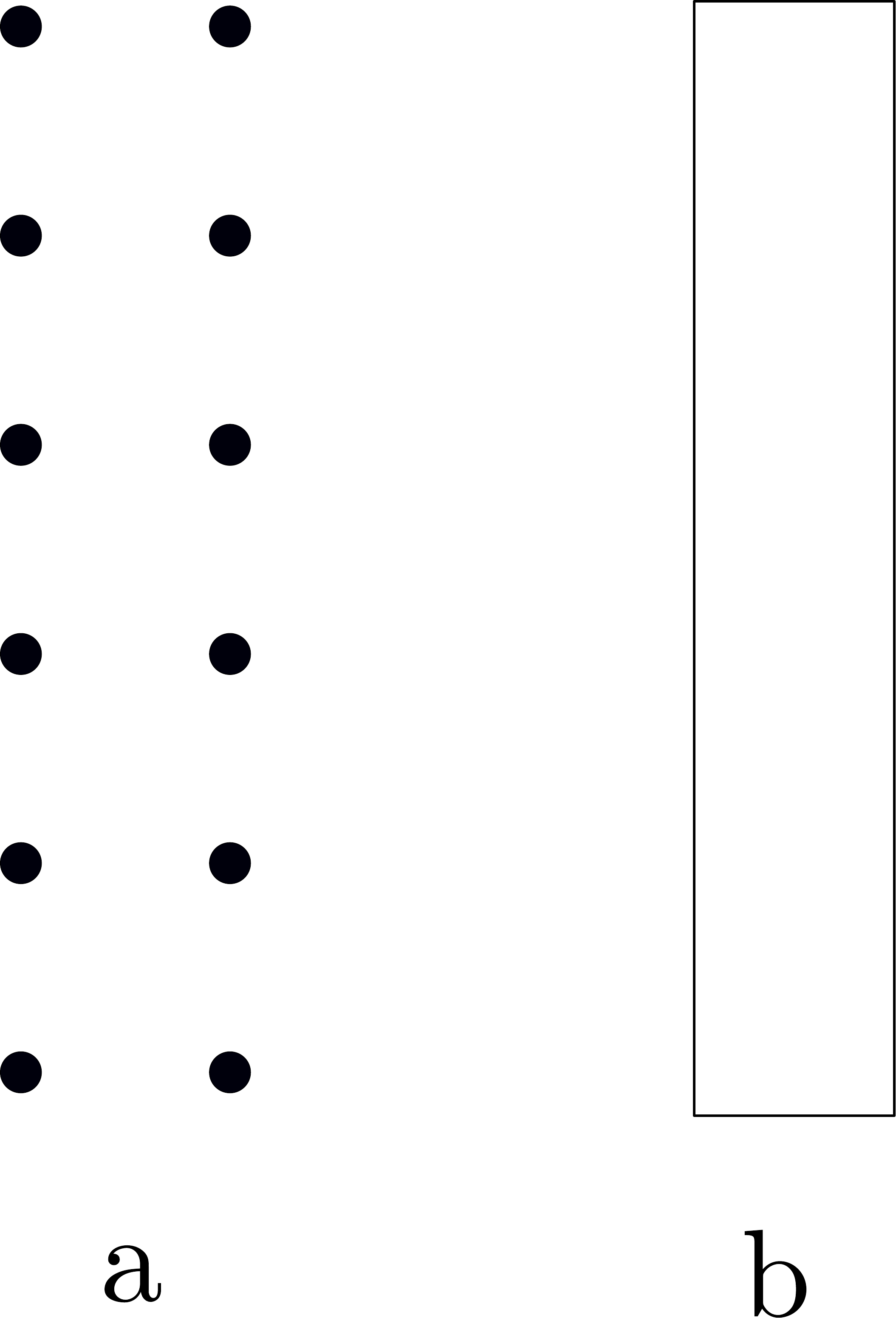}}
\caption{vertical-chain}
\label{fig:vertical-chain}
\end{center}
\end{figure}

\begin{definition}[The \ecp]
In this problem, we are given a set $U$ and a collection $S$ of subsets of $U$. The objective is to find a subset of elements of $S$ that are pairwise disjoint and that cover all the elements of $U$. This problem is one of the 21 problems shown by Karp to be NP-complete \cite{karp1972reducibility}. In fact the reduction from 3-SAT to the \ecp{} given by Karp also implies that this problem is ETH-hard.
\end{definition}

\subsubsection{The construction}
First, we define a \emph{cell} to be any square in $\mathbb{R}^2$ of side $2a+11$, where $a = 20$. Next we define the following point configurations within a cell: let an $X$-cell consist of a $H$ gadget, an $A$ gadget and $2$-chains, let a $Y$-cell consist of a $2$-chain and let $Z$-cells and $Z'$-cells consist of two parallel vertical-chains as depicted in figure \ref{fig:fhg_tsp}.

We are now ready to describe the construction. We start with the pointset described in \ref{subsec:twfractal} and then replace each point with carefully chosen combinations of the above defined gadgets. Since we are in the plane, we have $d=2$. We pick two odd integers $l$ and $v$ such that $l > v$ and $\delta \geq \frac{\log(l(l-v))}{\log(l)} > \delta'$. We can always find such a pair of odd integers. Let $\delta'' = \frac{\log(l(l-v))}{\log(l)}$. Now for all integers $i \geq 0$, consider the pointset $f^{l,v,2}(i)$. We can modify the pointset as follows: we scale all the points about the origin by a factor of $2a + 11$. Then we replace each $f^{l,v,2}(0)$ point by an $X$-cell, each $h^{l,v,2}(0)$ point by a $Y$-cell, and each $g^{l,v,2}(0)$ point by a $Z$-cell. We denote the resulting pointset by $F^{l,v,2}(i)$.

Given an instance of the \ecp{} with a universe $U =\{ u_1,\ldots , u_m \}$ and a collection $S = \{T_1, \ldots ,T_m \}$ of subsets of $U$ such that $|S| = |U| = m$, we reduce it into an instance of TSP on a set of points $P \subset \mathbb{R}^2$ with $|P| = n = O \left(m^{\frac{\delta''}{\delta'' - 1}} \right)$ and $\dimF(P) = \delta''$. The required pointset is obtained by taking the pointset $F^{l,v,2} \left( \lc \frac{\log(m)}{\log(l-v)} \rc \right)$, removing the first row of $H$ gadgets and introducing $1$-chains connecting the rows of $2$-chains as depicted in Figure \ref{fig:f_2_tsp}. From the construction, $F^{l,v,2}( \lc \frac{\log(m)}{\log(l-v)} \rc)$ consists of $m$ rows of $2$-chains $R_1, R_2, \ldots , R_m$. These correspond to the sets in $S$ and the columns correspond to the elements of $U$. 

Finally, just as in the construction by Papadimitriou, we replace the first $A$ gadget above each $H$ gadget with a copy of gadget $B$ if the element of $U$ corresponding to that column is in the set of $S$ corresponding to the row of the chosen $H$ gadget. Furthermore we also replace the copies of $Z$-cells between the $H$ gadget and the newly added $B$ gadget with copies of $Z'$-cells. This gives us our final pointset $P$. We will denote by $H_{i,j}$ the $j$\textsuperscript{th} $H$ gadget from the left in the $i$\textsuperscript{th} row of $H$ gadgets.

\begin{lemma}\label{lem:fractaldim3}
$\dimF(P) \leq \delta''$.
\end{lemma}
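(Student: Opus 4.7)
The plan is to derive $\dimF(P) \leq \delta''$ from the bound on the underlying Cantor-crossbar pointset by combining the invariance lemmas from Section \ref{subsec:preliminaries} with the fact that all the gadgets used in the construction are of constant size. Lemma \ref{lem:fractalhigherdimmodular} already yields $\dimF(f^{l,v,2}(k)) \leq \delta''$ for the value $k = \lceil \log(m)/\log(l-v) \rceil$. Since $2a+11$ is a fixed constant, Lemma \ref{lem:scaling} gives the same fractal dimension bound for the uniformly scaled copy of $f^{l,v,2}(k)$ about which the cells of $F^{l,v,2}(k)$ are built.

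Next I would view $P$ as obtained from this scaled copy by attaching, at every point $p$, a bounded set $S_p$ consisting of the gadget points that lie inside the cell centered at $p$: this includes the points of the appropriate $X$-cell, $Y$-cell, $Z$-cell, or $Z'$-cell, together with the $H$ gadget and whichever of $A$ or $B$ is used in an $X$-cell. Each $S_p$ contains only $O(a^2) = O(1)$ points, and each of these points lies within a constant distance (bounded by the cell diameter) of $p$. The minimum pairwise distance between distinct points of the scaled fractal is $2a+11$; after replacing the scaling factor by a sufficiently large constant multiple (which changes no other quantity in the argument and preserves fractal dimension by Lemma \ref{lem:scaling} once more), this distance exceeds $4c$, where $c$ is the largest distance from a cell center to any point in its gadget. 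Lemma \ref{lem:substitution} then yields $\dimF(P) \leq \delta''$.

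The main bookkeeping obstacle I expect is the treatment of the $1$-chains introduced between adjacent rows of $2$-chains. These connecting chains live between two neighboring cell centers, so their points are not automatically associated with a single $p$. I would handle this by assigning each such $1$-chain point to the nearer of the two cell centers it joins and folding it into that $S_p$. Since each $1$-chain has constant length $O(a)$, only $O(1)$ extra points are appended to any given $S_p$, and each still lies within a constant distance of $p$. Thus the hypotheses of Lemma \ref{lem:substitution} remain satisfied even after this reassignment, and the bound $\dimF(P) \leq \delta''$ follows.
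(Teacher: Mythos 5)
Your proof takes essentially the same route as the paper's one-paragraph argument: bound $\dimF(f^{l,v,2}(k))$ via Lemma \ref{lem:fractalhigherdimmodular}, then transfer the bound through Lemma \ref{lem:scaling} (for the cell-side scaling by $2a+11$) and Lemma \ref{lem:substitution} (for the bounded-size gadget substitution at each fractal point). You are in fact more careful than the paper, which glosses over two points you address explicitly — the assignment of inter-cell $1$-chain points to the nearer cell's $S_p$, and the $d(u,v) > 4c$ hypothesis of Lemma \ref{lem:substitution} — though note that your proposed fix for the latter needs scrutiny: rescaling the underlying fractal by a larger constant does not, by itself, improve the ratio of $c$ to the inter-cell distance, because the $2$-chains must still reach the cell boundaries to link adjacent cells, so $c$ scales with the cell side; genuinely satisfying the hypothesis requires either a relaxed constant in Lemma \ref{lem:substitution} or a direct net-counting argument, a gap the paper's terse proof also leaves implicit.
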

\begin{proof}
We observe that $P$ is obtained by merely scaling the pointset $f^{l,v,2}(i)$ and replacing each point by at most $k$ points all within a distance of $c$ of the original point, where $k$ and $c$ are constants. Since $f^{l,v,2}(i)$ has fractal dimension $\delta''$ for our choice of $l$ and $v$, applying Lemmas \ref{lem:scaling} and \ref{lem:substitution} gives us the required bound on the fractal dimension of $P$.
\end{proof}

Before we prove the main result we restate the following lemma 
from \cite{papadimitriou1977euclidean} which we will reuse in our analysis.

\begin{lemma}[\cite{papadimitriou1977euclidean}]\label{lem:papadimitrioutechnical}
Suppose that in an instance $E$ of the TSP we have $N$ $a$-components $G_1,G_2,\ldots,G_N$, such that the distance between any two components is at least $2a$, and $G_0$, the remaining part of $E$, is $a$-compact. Suppose that any optimal Traveling Salesman path of $E$ has its endpoints on $G_0$, and that they do not contain links between any two $a$-components of $E$. Let $L_1, \ldots,L_N$ be the lengths of the optimal $1$-paths of $G_1, \ldots, G_N$ and $L_0$ the length of the optimal $(N+1)$-path of $G_0$. If there is a $1$-path $P$ of $E$ consisting of the union of an optimal $(N+1)$-path of $G_0$, $N$ optimal $1$-paths of $G_1, \ldots, G_N$ and $2N$ edges of length $a$ connecting $a$-components to $G_0$, then $P$ is optimal. If no such $1$-path exists, the optimal $1$-path of $E$ has length greater than $L = L_0 + L_1 + \ldots + L_N + 2Na$.   
\end{lemma}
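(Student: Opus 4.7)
The strategy is to give a sharp lower bound on the length of any 1-path of $E$ whose endpoints lie in $G_0$, and to show the bound is attained exactly by 1-paths of the form described. Since the optimal 1-path of $E$ is assumed to have endpoints on $G_0$, this simultaneously establishes both conclusions of the lemma: when a 1-path of the stated form exists it is optimal, and otherwise the optimum strictly exceeds $L$.

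The heart of the proof is a decomposition argument. For any 1-path $\pi$ of $E$ with endpoints in $G_0$, the hypothesis that no link exists between two $a$-components forces every visit of $\pi$ to a component $G_i$ to be entered from and exit into $G_0$. Writing $k_i \geq 1$ for the number of maximal sub-paths of $\pi$ contained in $G_i$, and setting $K := \sum_{i=1}^{N} k_i$, the portion of $\pi$ inside $G_i$ is a $k_i$-path of $G_i$, the portion inside $G_0$ is a $(K+1)$-path of $G_0$, and there are exactly $2K$ edges connecting the $a$-components to $G_0$, each of length at least $a$ by the defining property of an $a$-component.

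Assembling these bounds in the baseline case $K = N$ (i.e.~$k_1 = \cdots = k_N = 1$) gives
\[ \mathrm{length}(\pi) \geq L_0 + L_1 + \cdots + L_N + 2Na = L, \]
with equality if and only if every sub-path is optimal and every connecting edge has length exactly $a$, which is precisely the structure described in the statement. When $K > N$, the $2(K-N)$ additional connecting edges contribute at least $2(K-N)a$ extra length, while the strict $a$-compactness of $G_0$ recoups strictly less than $(K-N)a$, namely $L_0^{(K+1)} > L_0 - (K-N)a$. Granting the analogous bound $L_i^{(k_i)} > L_i - (k_i-1)a$ on the $G_i$ side, summing yields $\mathrm{length}(\pi) > L_0 + L_1 + \cdots + L_N + 2Na = L$, so the inequality is strict.

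The main obstacle is justifying the $G_i$ counterpart of $G_0$'s $a$-compactness. Unlike $G_0$, the $G_i$'s are only assumed to be $a$-components, so one must stitch the $k_i$ pieces of any $k_i$-path back into a single path at total additive cost strictly below $(k_i-1)a$, exploiting that each point of $G_i$ has an internal neighbor at distance strictly less than $a$. Once this property is in place, every deviation from the baseline $k_i = 1$ configuration strictly increases the total length, and the two conclusions of the lemma follow at once.
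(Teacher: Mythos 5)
This lemma is not proved in the paper; it is quoted verbatim from Papadimitriou's 1977 paper and used as a black box, so there is no ``paper's own proof'' to compare against. Evaluating your argument on its own terms, the overall decomposition is the right one and matches what Papadimitriou does: any admissible $1$-path with endpoints in $G_0$ and no links between $a$-components splits into a $(K+1)$-path of $G_0$, a $k_i$-path of each $G_i$, and $2K$ connecting edges of length at least $a$, giving $\mathrm{length}(\pi)\ge L_0^{(K+1)}+\sum_i L_i^{(k_i)}+2Ka$, and the $a$-compactness of $G_0$ controls $L_0^{(K+1)}$. All of that is sound.

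The problem is exactly the step you flag as ``the main obstacle'' and then do not close: you need $L_i^{(k_i)} > L_i - (k_i-1)a$ for every $G_i$, and you propose to derive it from the fact that each point of an $a$-component has an internal neighbor at distance strictly less than $a$. That property is not strong enough. Consider an $a$-component $G_i$ consisting of a hub $c$ and three tips $p_1,p_2,p_3$ with $d(c,p_j)=0.9a$ and the tips placed at $120^\circ$ so that $d(p_j,p_\ell)\approx 1.56a$. Every point has an internal neighbor at distance $0.9a<a$, so the $a$-component property holds, yet $L_i\approx 3.36a$ (e.g.\ $p_3-p_1-c-p_2$) while the optimal $2$-path $\{p_1-c-p_2\},\{p_3\}$ has $L_i^{(2)}\approx 1.8a$, so $L_i - L_i^{(2)}\approx 1.56a > a$. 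The reason the internal-neighbor property fails to help is that what you actually need is a short edge between \emph{endpoints} of the pieces of a $k_i$-path, and nothing in the definition of $a$-component guarantees that such a merging edge of length below $a$ exists. So the bound you need is a genuine additional structural hypothesis on the $G_i$'s; Papadimitriou's argument relies on verifying it for the specific configuration-$H$ gadgets (and the paper does the analogous verification implicitly when it applies the lemma), rather than deducing it from $a$-componenthood alone. As written, your proof has a real gap, and the proposed route to fill it does not work.
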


\begin{proof}[Proof of Theorem \ref{thm:tsplowerbound}]
We reduce the \ecp{} to TSP. The pointset required is $P$. Just as in the proof of Papadimitriou \cite{papadimitriou1977euclidean}(Theorem 2 ), we have that whenever a TSP path traverses an $A$ gadget it can also optimally visit a $H$ gadget above or below $A$. However when the path traverses a $B$ gadget it optimally visits a $H$ gadget above or below $B$ if and only if the corresponding $2$-chain is traversed in mode 1. The $m(m-1)$ $H$ gadgets are $a$-components. The remaining points in $P$ are $a$-compact as there is a path traversing them such that every edge has length at most $8$. This implies that we may apply Papadimitriou's technical Lemma \ref{lem:papadimitrioutechnical} from \cite{papadimitriou1977euclidean}. We have that $N = m(m-1)$, and $L_1 = \ldots = L_m = 32$. The shortest $1$-path of the $a$-compact portion of $P$ obtained after removing the $H$-gadgets has length $L_0 = m^2 (3a + 21) + m(4a + 13 + \sqrt{2}) + 2m -2a + 11$. Suppose an optimal path $Q$ as described in Lemma \ref{lem:papadimitrioutechnical} exists. Then $Q$ must visit each $2$-chain of $P$ in either mode $1$ or mode $2$. Furthermore $Q$ must visit a configuration $H$ from every configuration $A$ to maintain optimality. If $Q$ visits a configuration $B$ in mode $2$ then it can not visit a vertically adjacent configuration $H$. Since there are $m-1$ rows of configuration $H$ and $n$ rows of configurations $A$ and $B$ it must be that $Q$ traverses exactly one configuration $B$ in mode $2$ for every column. This implies that the sets corresponding to the rows of $2$-chains that are traversed in mode $2$ cover $U$ exactly. In the other direction if a solution to the instance of \ecp{} exists then we may traverse the rows corresponding to the sets in the solution in mode $2$ to obtain an optimal path as described in Lemma 2 of \cite{papadimitriou1977euclidean}. This implies that if the Euclidean TSP path instance can be solved in time $2^{O \left(n^{ \left(1- \frac{1}{\delta'} \right)} \right)} = 2^{o \left(n^{ \left(1- \frac{1}{\delta''} \right)} \right)}$ then the corresponding instance of the \ecp{} can be solved in time $2^{o(m)}$ and ETH fails.
\end{proof}

\begin{figure}[H]
\begin{center}
\scalebox{0.12}{\includegraphics{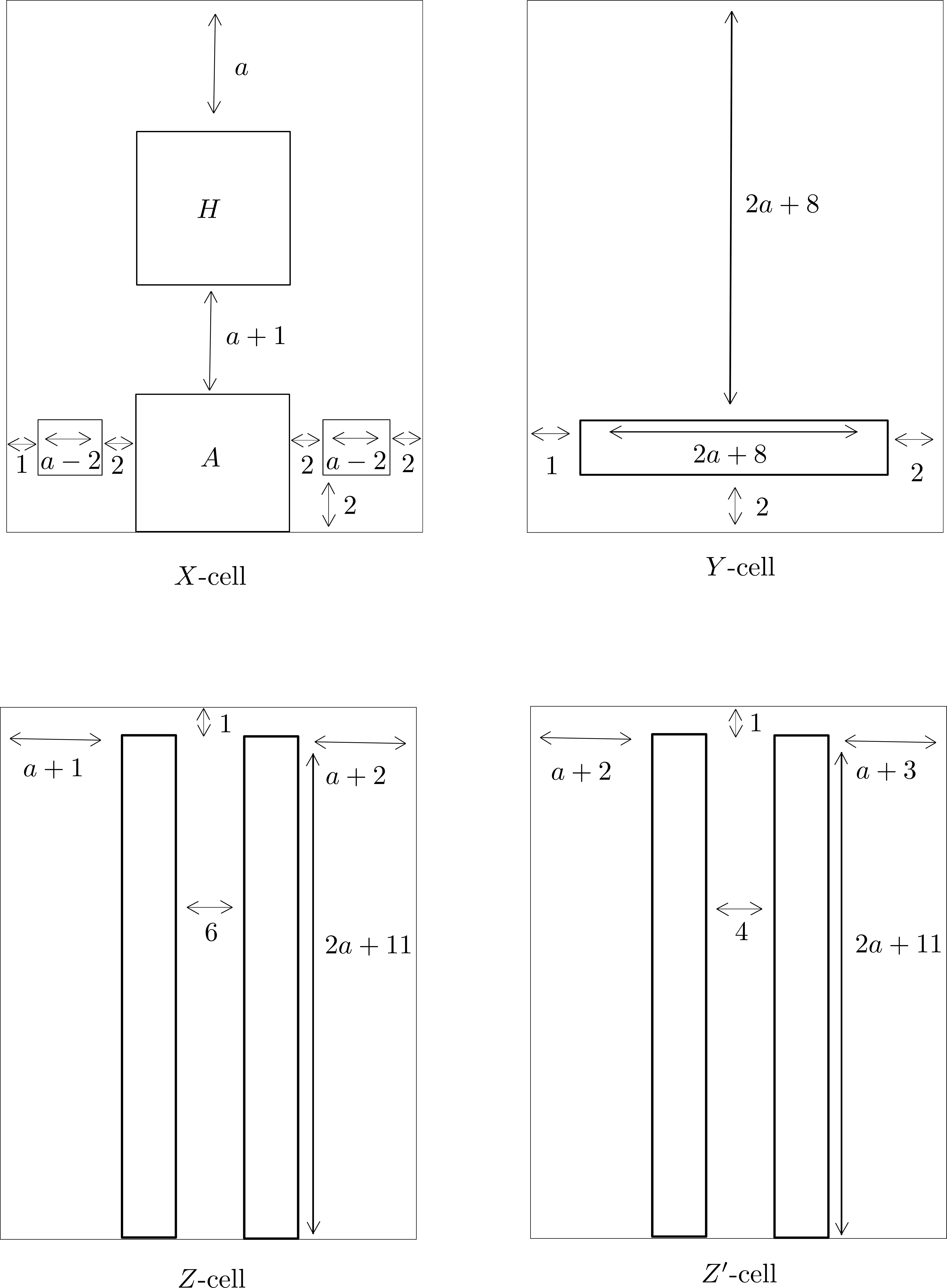}}
\caption{The different types of cells used in the construction}
\label{fig:fhg_tsp}
\end{center}
\end{figure}

\begin{figure}[H]
\begin{center}
\scalebox{0.10}{\includegraphics{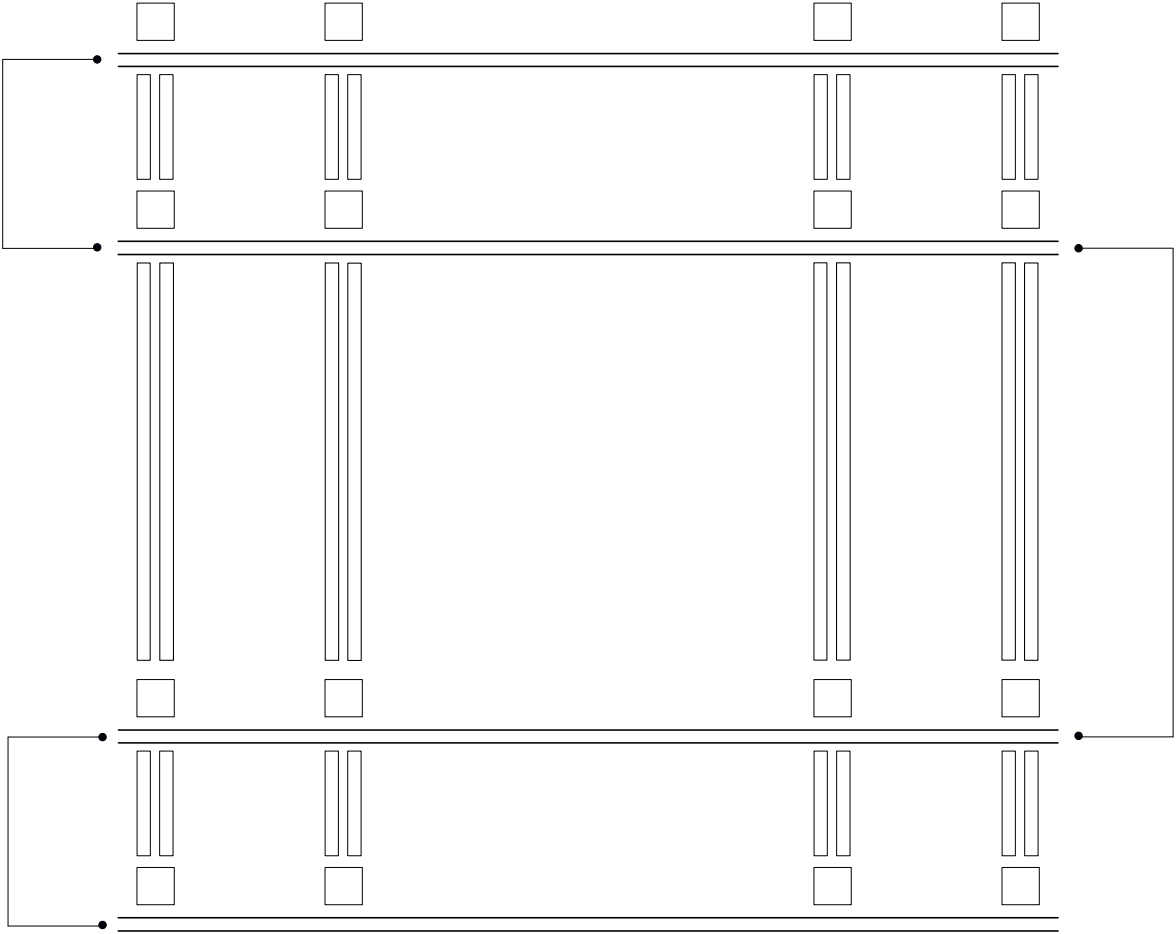}}
\caption{The pointset obtained from $F^{3,1,2}(2)$}
\label{fig:f_2_tsp}
\end{center}
\end{figure}

\subsection{Lower bound for TSP in $\mathbb{R}^d$}\label{subsec:lowerboundtsparbitdim}
In this Subsection we show a running time lower bound for TSP, on pointsets of arbitrary fractal dimension in $\mathbf{R}^d$, for $d \geq 3$.

\begin{theorem}\label{thm:tsplowerboundarbitrarydim}
Let $d \geq 3$ be some integer. Then, for all $\delta \in (2,d)$, for all $\delta'<\delta$ and for all $n_0 \in \mathbb{N}$, if there exists $n \geq n_{0}$ such that Euclidean TSP in $\mathbb{R}^d$ on all pointsets of size $n$ and fractal dimension at most $\delta$ can be solved in time $2^{O(n^{1-1/\delta'})}$, then ETH fails.
\end{theorem}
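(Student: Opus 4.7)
The plan is to extend Theorem~\ref{thm:tsplowerbound} to $d\geq 3$ by replacing the reduction from the \ecp{} with a reduction from the $d$-dimensional geometric $\leq$-CSP of Theorem~\ref{thm:geometric-csp}, and by arranging higher-dimensional TSP gadgets along a scaled copy of the Cantor crossbar $f^{l,v,d}(m)$ from Subsection~\ref{subsec:twfractal} in place of $F^{l,v,2}(m)$.

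First I would fix odd integers $l>v$ with $\delta'':=\log(l(l-v)^{d-1})/\log l\in(\delta',\delta)$; such a pair exists by density. Given a $d$-dimensional geometric $\leq$-CSP instance $I$ on $R[N,d]$ with variables $V$ (so that $|V|=\Theta(N^d)$) and domain $[\Delta]^d$, I would take $m$ to be the smallest positive integer with $N\leq l^{m(\delta''-1)/(d-1)}$ and embed $V$ into the $d$-dimensional subgrid $M\subset f^{l,v,d}(m)$ of side $l^{m(\delta''-1)/(d-1)}$, exactly as in the proof of Theorem~\ref{thm:independent_balls}; the leftover points $C=f^{l,v,d}(m)\setminus M$ then form axis-aligned chains connecting grid-adjacent embedded variables.

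After scaling by a constant factor chosen to accommodate the gadgets, I would replace each point of $M$ by a \emph{variable gadget} admitting $\Delta^d$ optimal traversal modes (one per element of the domain) and each point of $C$ by a \emph{chain gadget} oriented along the appropriate coordinate axis. Lemmas~\ref{lem:scaling} and~\ref{lem:substitution} then guarantee that the resulting pointset $P$ has $\dimF(P)\leq\delta''<\delta$. The hard part is the gadget design in $\mathbb{R}^d$: one needs $d$-dimensional analogues of Papadimitriou's $H$/$A$/$B$/$2$-chain constructions that (i)~force an optimal TSP tour to pick exactly one mode per variable gadget, (ii)~propagate the choice along a chain oriented along $e_i$ so as to enforce the $\leq$-relation on the $i$-th coordinate of the modes of the two endpoint variables, and (iii)~fit locally inside the cells of the scaled Cantor crossbar so that the fractal-dimension bound is preserved. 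Combined with a suitable generalization of Papadimitriou's technical lemma (Lemma~\ref{lem:papadimitrioutechnical}), this yields a threshold length $L$ such that $P$ admits a Hamiltonian path of length at most $L$ if and only if $I$ is satisfiable.

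Finally, $|P|=n=\Theta(|V'|)=O\bigl(N^{(d-1)\delta''/(\delta''-1)}\bigr)$, and writing $\alpha=\frac{\delta''}{\delta''-1}(1-1/\delta')$, we have $\alpha<1$ since $\delta'<\delta''$. Hence a $2^{O(n^{1-1/\delta'})}$-time algorithm for TSP on $P$ would decide $I$ in time
\[
2^{O(N^{\alpha(d-1)})}=2^{o(N^{d-1})}=2^{o(|V|^{1-1/d})},
\]
contradicting Theorem~\ref{thm:geometric-csp}, and hence ETH.
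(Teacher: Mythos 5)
Your high-level plan — embed Papadimitriou-style gadgets along (a rescaled copy of) the Cantor crossbar $f^{l,v,d}(\cdot)$ and invoke Lemmas~\ref{lem:scaling} and~\ref{lem:substitution} to control the fractal dimension — matches the paper's strategy, and your exponent bookkeeping ($\alpha = \frac{\delta''}{\delta''-1}(1-1/\delta') < 1$ when $\delta' < \delta''$) is fine. However, you diverge from the paper on the crucial point of \emph{which} hard problem you reduce from, and that choice is where the gap lies. You propose to reduce from the $d$-dimensional geometric $\leq$-CSP (Theorem~\ref{thm:geometric-csp}), which would require a variable gadget with $\Delta^d$ optimal traversal modes and a ``propagation'' gadget that carries the coordinate-wise $\leq$-relation along a chain parallel to $e_i$. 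You explicitly acknowledge this as ``the hard part'' and do not construct such gadgets or argue that they exist; but this is precisely the technical heart of the proof, not a routine step. It is in particular unclear how to build a single TSP configuration with $\Delta^d$ optimal traversal modes (Papadimitriou's $2$-chain has two; stacking $|D|$ of them with configurations~$H$ gives $|D|$ effective states, but enforcing the \emph{monotone} relation $\leq$ across a chain, rather than excluding finitely many arbitrary forbidden pairs, is a different design problem that neither Papadimitriou nor \cite{MS14} address).

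The paper instead reduces from the \emph{general} binary CSP on grid primal graphs via Theorem~\ref{lem:cspeth}, which lets it reuse the gadget machinery of \cite{MS14}'s TSP reduction essentially verbatim: for each variable $u$ and each $j\in[|D|]$ one places $2$-chains $\Gamma(u,j,\cdot)$ and configurations~$H$ forcing exactly one $j$ to be in mode~$2$; for each forbidden pair $(a,b)\notin R$ one adds flexible $2$-chains $\Gamma(u,a,v,b)$ with configurations~$B$ and~$H$ in a common $2$-plane; and the connecting points of $f^{l,v,d}(k)$ carry $4|D|$ parallel co-planar $2$-chains that merely transmit state between adjacent variable boxes (Lemmas~\ref{lem:reductioncsptotsp} and~\ref{lem:csptotspfractal}). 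Constraints are thus enforced \emph{locally at the variable boxes}, not propagated along chains. This is both simpler and already known to work; your route would require inventing new gadgetry and proving a new analogue of Lemma~\ref{lem:papadimitrioutechnical}.

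One further inaccuracy: $|P|$ is not $\Theta(|V'|)$ but $\Theta(|V'|\cdot\Delta^{O(d)})$, since each box is populated with $\Delta^{O(d)}$ gadget points. The paper's own Lemma~\ref{lem:reductioncsptotsp} records the corresponding bound as $|X|\leq N^{(d-1)\delta/(\delta-1)}|D|^{O(1)}$. This factor has to be tracked when comparing the TSP running time against the CSP lower bound (and is also relevant when invoking Lemma~\ref{lem:substitution}, whose hypothesis requires the per-point gadget size $k$ to be a constant, so $\Delta$ must be controlled).
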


We prove the above theorem using a similar argument to that of Marx and Sidiropoulos \cite{MS14} where they obtain a lower bound for the running time of Euclidean TSP in $\mathbb{R}^d$. The proof uses a reduction of an instance of the \csp{} to a TSP instance. Again we use the earlier mentioned gadgets introduced by Papadimitriou \cite{papadimitriou1977euclidean} in our construction. We will also use the following theorem from \cite{MS14}.

\begin{theorem}\label{lem:cspeth}
Let $I$ be an instance of the \csp{}. For every fixed $d \geq 2$, there is no $f(|V|)|I|^{o(|V|^{1-1/d})}$ algorithm for CSP$({\cal{R}}_{d})$ for any function $f$, unless ETH fails.
\end{theorem}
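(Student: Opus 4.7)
My plan is to reduce from the $k$-clique problem on $N$-vertex graphs, which under ETH cannot be solved in time $f(k) \cdot N^{o(k)}$ for any computable $f$ (Chen--Huang--Kanj--Xia). The reduction will target CSP on the grid $R[n, d]$ with $n = \lceil k^{1/(d-1)} \rceil$, so that $|V| = n^d = \Theta(k^{d/(d-1)})$ and $|V|^{1-1/d} = n^{d-1} = \Theta(k)$. With $|I| = \operatorname{poly}(N, n)$, a hypothetical $f(|V|) \cdot |I|^{o(|V|^{1-1/d})}$ algorithm would yield a $g(k) \cdot N^{o(k)}$ algorithm for $k$-clique, contradicting ETH; since $|V|$ is a function of $k$ alone and $\log |I| = O(\log N + \log k)$, all the constants line up cleanly in the final arithmetic.

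For the planar case $d=2$, I would place variables on a $k \times k$ grid. The diagonal variables $(i,i)$ have domain $V(G)$, representing the $i$-th chosen clique vertex. The off-diagonal variables $(i,j)$, $i \neq j$, have domain $\{(u,v) \in V(G)^2 : \{u,v\} \in E(G)\}$. Binary constraints between grid-adjacent variables enforce propagation: along row $i$, the first coordinate of the pair must stay constant and agree with the value at $(i,i)$; along column $j$, the second coordinate must stay constant and agree with the value at $(j,j)$. All these constraints are between grid-adjacent pairs, so the primal graph is exactly $R[k, 2]$. A satisfying assignment then selects vertices $v_1, \ldots, v_k$ with $\{v_i, v_j\} \in E(G)$ for all $i \neq j$, i.e., a $k$-clique.

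For $d \geq 3$, I would extend the construction by designating one axis of $R[n, d]$ as the ``slot axis'': the $n^{d-1}$ lines parallel to this axis are identified with the $k$ clique slots. Along each slot line, equality constraints propagate the chosen clique vertex. The $\binom{k}{2}$ pairwise edge constraints are then distributed across the grid using tuple-valued domains, generalizing the 2D ``off-diagonal'' encoding, with each grid variable simultaneously propagating information along multiple axis directions and carrying the edge check for an assigned batch of slot-pairs. The main obstacle is precisely this distribution step: for $d \geq 3$ there are $\Theta(n^{2(d-1)})$ edge checks but only $n^d$ grid variables, so each variable must batch multiple pair-checks, and one must balance how much is packed into each domain against the resulting blow-up in $|I|$, ensuring that $\log|I|$ remains small enough that $|I|^{o(k)} = N^{o(k)}$. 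A clean alternative that avoids this balancing entirely is to reduce from partitioned subgraph isomorphism with a carefully chosen sparse pattern graph embedded into the grid, invoking Marx's ETH-hardness for partitioned subgraph isomorphism as a black box; this yields the stated bound and is likely the intended route.
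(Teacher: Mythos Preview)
The paper does not prove this theorem at all: it is quoted verbatim as a known result from \cite{MS14} and used as a black box in the subsequent TSP reduction. So there is no ``paper's own proof'' to compare against; any proof you write is going beyond what the paper itself provides.

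On the substance of your proposal: the $d=2$ construction is correct and is essentially the standard grid-tiling/clique reduction. For $d\geq 3$ you have correctly identified the genuine obstacle---the number of pairwise edge checks is $\Theta(n^{2(d-1)})$ while the grid has only $n^d$ cells, so a naive extension of the planar encoding does not fit. Your first suggestion (batching several pair-checks per variable) would blow up the domain size and hence $|I|$ in a way that is delicate to control; you would need to argue carefully that the exponent of $N$ in the resulting running time is still $o(k)$, and this is not obviously true without further work. Your second suggestion---reducing from partitioned subgraph isomorphism with a pattern graph embedded in the $d$-dimensional grid, and invoking Marx's ETH lower bound for that problem---is indeed the clean route and is essentially how this result is established in \cite{MS14} (building on Marx's earlier treewidth-based lower bounds). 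If you want a self-contained proof rather than a citation, that second route is the one to flesh out.
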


\subsubsection{Reduction of CSP to TSP}
We now show that for all $\delta \in (2,d)$ and for all $\delta' < \delta$, it is possible to pick a corresponding $ \delta' < \delta'' < \delta$ and obtain a reduction of an instance of the \csp{} to an instance of TSP on a pointset with fractal dimension $\delta''$. Let $I = (V,D,C)$ be an instance of the \csp{} with  primal graph $R[N,d]$. We arbitrarily pick a pair of positive integers $l$ and $v$ such that $\delta' < \frac{\log l(l-v)^{d-1}}{\log l} < \delta$. Let $ \delta'' = \frac{\log l(l-v)^{d-1}}{\log l}$. Let $k = \log_{l-v}  N $. We can always assume that $k$ is an integer because otherwise we can just add more variables and constraints that will not affect the solution of $I$, until $N$ is large enough. This will only change the size of the instance by at most a factor of $(l-v)$ which is a constant.

\paragraph{The base pointset}.~Having fixed our choices for $l$, $v$ and $k$ we now refer to the construction used to prove theorem \ref{thm:spannerhigherdim}. First we define two classes of points. We call the point of $f^{l,v,d}(0)$ a \emph{grid point}. Also for all $i \in \{1, \ldots , d \}$ we call the point of $h^{l,v,d}_i(0)$ a \emph{connection point}. Since for all $i$, the pointset $f^{l,v,d}(i)$ is constructed inductively we have that every point of $f^{l,v,d}(i)$ is either a grid point or a connection point. Let $P$ be a rescaling of the points of $f^{l,v,d}(k)$ by a factor of $\gamma |D|$ where $\gamma$ will be determined later. Let $P_g \subset P$ be the set of grid points of $P$ and $P_c \subset P$ be the set of connecting points of $P$. For all $i \in \{1, \ldots , d \}$, let $P_i \in P_c$ be the set of $h^{l,v,d}_i(0)$ points in $P_c$. For all $x \in P$, we denote by $\dbox(x)$, the $d$-dimensional hypercube of side length $\gamma |D|$ centered at $x$. We will also define a spanner $G$ with vertex set $V(G) = P$. For all $i \in \{1, \ldots , d \}$ and all $u,v \in P_i$ such that the coordinates of $u$ and $v$ differ only along the $i$ \textsuperscript{th} dimension by exactly $\gamma |D|$ we have that the edge $(u,v) \in E(G)$. For all $u \in P_g$ and $v \in P$ such that the coordinates of $u$ and $v$ differ only along a single dimension by exactly $\gamma |D|$ we have that the edge $(u,v) \in E(G)$. Similar to the proof of Theorem \ref{thm:spannerhigherdim}, we observe that contracting all edges of $G$ incident to vertices in $P_c$ gives a graph $G'$ which is an $(l-v)^{kd}$ or $N^d$ $d$-dimensional grid.

\paragraph{Substituting gadgets}.~Next for all $u \in P$, we consider $\dbox(u)$ and inside this $d$-dimensional hypercube we replace $u$ with points corresponding to certain gadgets to get a TSP instance $(X, \alpha)$ similar to the approach used in subsection \ref{subsec:lowerboundtsp}. Intuitively these gadgets will be used to represent the variables and constraints of the \csp{} instance $I$. We will also reuse the gadgets configurations A, B and H described in the proof of theorem \ref{thm:tsplowerbound}. We now describe the following generalized version of $1$-chains for higher dimensions and a variant of $2$-chains called flexible $2$-chains that we will need for our construction. In what follows we use $\lVert \cdot \rVert_{1}$ to denote the $\ell_{1}$ distance.

\begin{definition}[$1$-chain]
It consists of a sequence of points $\{ x_1,x_2, \ldots, x_k \}$ in $\mathbb{R}^d$ where for all $i \in [k-1]$ $x_i$ and $x_{i+1}$ differ in exactly one coordinate and $\lVert x_i - x_{i+1} \rVert_{1} = 1$. Moreover for all $i,j \in [k]$ such that $|i-j| \leq 20$, we have $\lVert x_i - x_j \rVert_{1} \geq |i-j|$.
\end{definition}

\begin{definition}[Flexible $2$-chain]
It is obtained by starting with a set of point in $\mathbb{R}^d$ that form a $2$-chain $C$ 
with parallel rows $\{ x_1, \ldots, x_k \} $ and $ \{ y_1, \ldots, y_k \}$, 
and then applying any number of the following modification operations to it, while maintaining the invariant that for all $i,j \in [k]$ such that $|i-j| \leq 20$, we have $\lVert x_i - x_j \rVert_{1} \geq \lVert i -  j \rVert$ and $\lVert y_i - y_j \rVert_{1} \geq \lVert i -  j \rVert$.
\begin{enumerate}
\item Pick some $i \in [k]$ and rotate $\{ x_i, \ldots, x_k \}$
and $\{ y_i, \ldots, y_k \}$ by $\frac{\pi}{2}$ along the axis through $x_i$ and $y_i$.
\item Pick some $i \in [k]$ and rotate $\{ x_i, \ldots, x_k \}$
and $\{ y_i, \ldots, y_k \}$ by $\frac{\pi}{100}$ along the axis through the midpoints of the segments $x_iy_i$ and $x_{i-1}y_{i-1}$.
\end{enumerate}
\end{definition}

\begin{definition}[Linking $2$-chains]
We say that two flexible $2$-chains \emph{link} if their union results in a longer flexible $2$-chain.
\end{definition} 

Now we add gadgets in a similar manner to the construction used in section $4$ of \cite{MS14}. First we represent the variables. For each variable in $V$, there exists a corresponding point in the primal graph $R[N,d]$. Combined with the earlier observation that $G$ contains an $N^d$ grid this implies a natural mapping from $V$ to $P_g$. We identify the variables $V$ with the corresponding points in $P_g$. Henceforth we use the terms variable, point and vertex interchangeably to refer to the elements of $V$, $P$ and $V(G)$. Each variable $u \in V$ has at most $2d$ variables that it shares constraints with given by its neighbors in $G'$. We denote this set of variables by $N(u)$. Similarly we denote by $S(u)$ the set of vertices in $G$ that share an edge with $u$. Note that $N(u)$ need not be equal to $S(u)$. For all $i \in \{1, \ldots , d \} $, and for all $v \in S(u)$ such that the line passing through $u$ and $v$ is parallel to dimension $i$ we define $\ifac(u,v) = \ifac(v,u)$ to be the line segment of length $\gamma |D|$  parallel to dimension $(i \mod d) + 1$ passing through the center of the $(d-1)$-dimensional hypercube $\dbox(u) \cap \dbox(v)$.

For all $u \in V$ and for all $j \in \{1, \ldots, |D| \}$, we will add the following gadgets to $\dbox(u)$. We add two 2-chains $\Gamma(u,j,1)$ and $\Gamma(u,j,2)$. 
For each neighbor $v \in N(u)$, we add three flexible 2-chains $\Gamma(u,j,v,1)$, $\Gamma(u,j,v,2)$ and $\Gamma(v,j,u)$. We also ensure that for all $v \in N(u)$ and for all $j \in \{1, \ldots , |D| \}$ a pair of end points of the flexible 2-chains $\Gamma(u,j,v,1)$ and $\Gamma(u,j,v,2)$, and both pairs of end points of $\Gamma(v,j,u)$ lie on $\ifac(u,r)$ evenly spaced apart, where $r \in S(u)$ is adjacent to $u$ and lies on the shortest path from $v$ to $u$ in $G$. This is depicted in figure \ref{fig:interface2} for $d=3$. We then fix an arbitrary ordering $\pi : \{ 1, \ldots , 2d \} \to N(u)$ and for all $i <2d$ link the non-interface ends of the flexible 2-chains $\Gamma(u,j,\pi(i),2)$ and $\Gamma(u,j,\pi(i+1),1)$. Next, we link one end of $\Gamma(u,j,1)$ to $\Gamma(u,j,\pi(1),1)$ and link one end of $\Gamma(u,j,2)$ to $\Gamma(u,j,\pi(2d),2)$. We also add $|D| - 1$ copies of configuration $H$ and ensure that there exists a $2$-plane $h(u)$ such that for all $j \in \{1, \ldots , |D| \}$, subsets of $\Gamma(u,j,1)$ are arranged with appended copies of configuration $B$ and alongside copies of configuration $H$ as depicted in figure \ref{fig:vars_temp}. This will ensure that in the optimal solution, $\Gamma(u,j,1)$ is traversed in mode $2$ for exactly one value of $j$ corresponding to a choice of assignment for the variable $u$. 

\begin{figure}[H]
\begin{center}
\scalebox{0.25}{\includegraphics{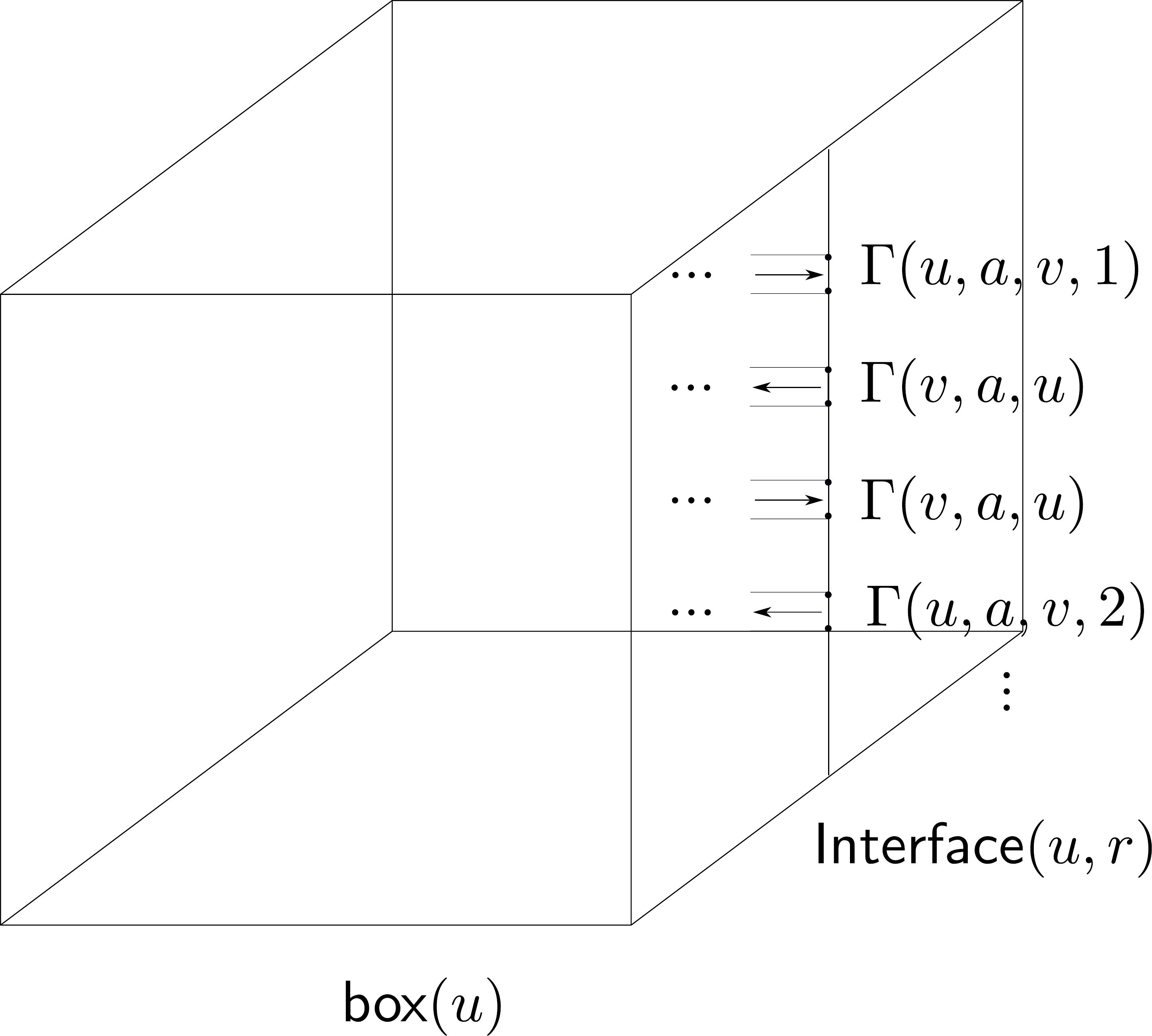}}
\caption{Interface in $\dbox(u)$ for $u,v \in V$ when $d = 3$ }
\label{fig:interface2}
\end{center}
\end{figure}

\begin{figure}[H]
\begin{center}
\scalebox{0.25}{\includegraphics{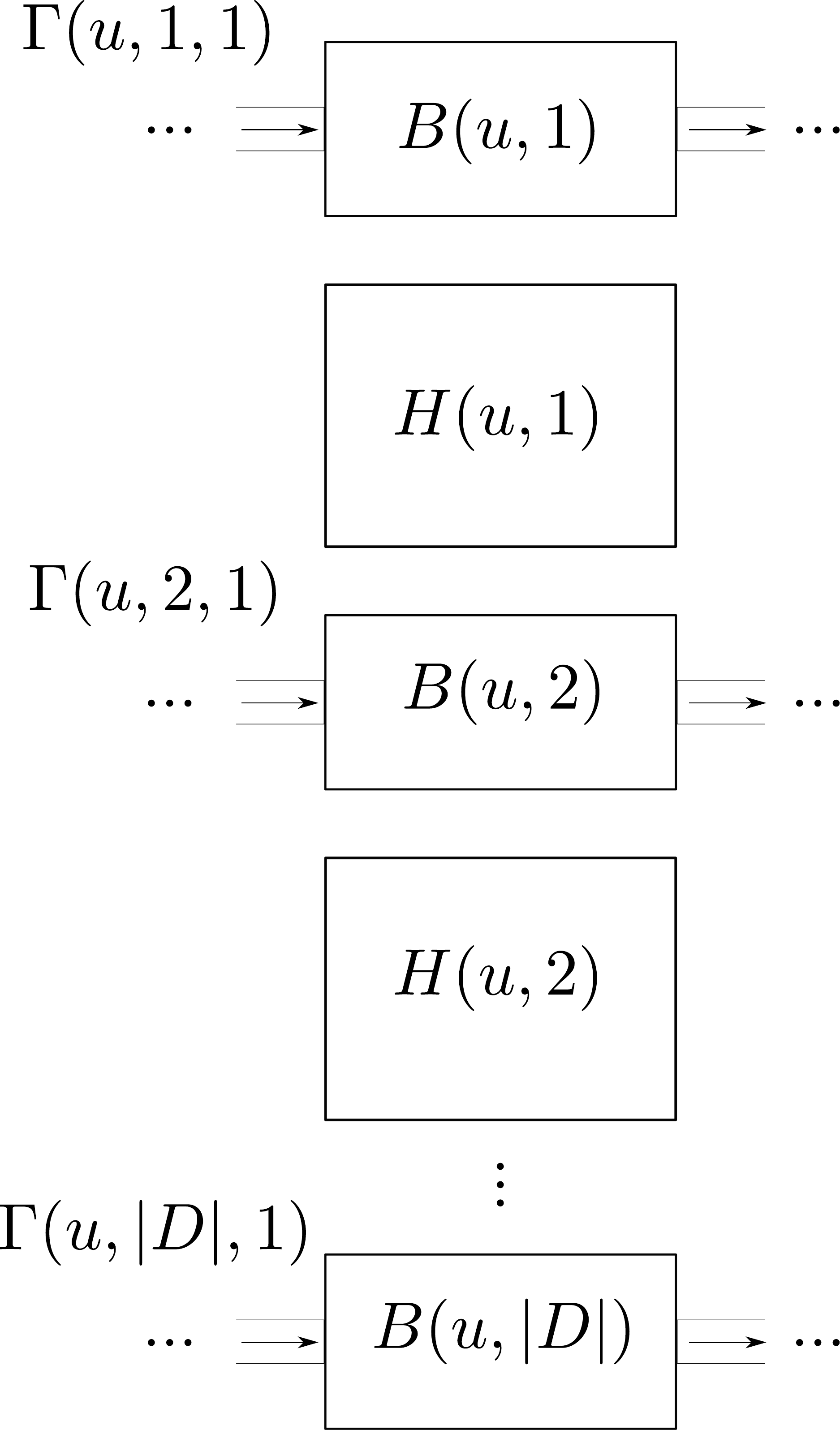}}
\caption{Gadgets corresponding to the variables of the CSP}
\label{fig:vars_temp}
\end{center}
\end{figure}

Next for each constraint $\langle (u,v) , R \rangle$ and each $(a,b) \in D^2$ such that $(a,b) \not \in R$, we add to $\dbox(u)$, two new vertices $x(u,a,v,b)$ and $y(u,a,v,b)$ and a new flexible $2$-chain $\Gamma(u,a,v,b)$ from $x(u,a,v,b)$ to $y(u,a,v,b)$. We next introduce three $B$ configurations $B(u,a,v,b,1)$, $B(u,a,v,b,2)$ and $B(u,a,v,b,3)$. Finally we add two copies of configuration-$H$ $H(u,a,v,b,1)$ and $H(u,a,v,b,2)$. We arrange these newly added gadgets along with subsets of the flexible 2-chains $\Gamma(u,a,v,1)$ and $\Gamma(v,b,u)$ such that they are in a plane $h(u,v,a,b)$ as in figure \ref{fig:constraints_temp}. This ensures that in any optimal solution at most one of $\Gamma(u,a,1)$ and $\Gamma(v,b,1)$ may be traversed in mode 2. Note that the shared constraints are re-encoded into similar gadgets in $\dbox(v)$ swapping the parameters $u$ and $a$ for $v$ and $b$ respectively when naming them. This is done merely for ease of presentation even though it is redundant in terms of the actual reduction argument.

\begin{figure}[H]
\begin{center}
\scalebox{0.25}{\includegraphics{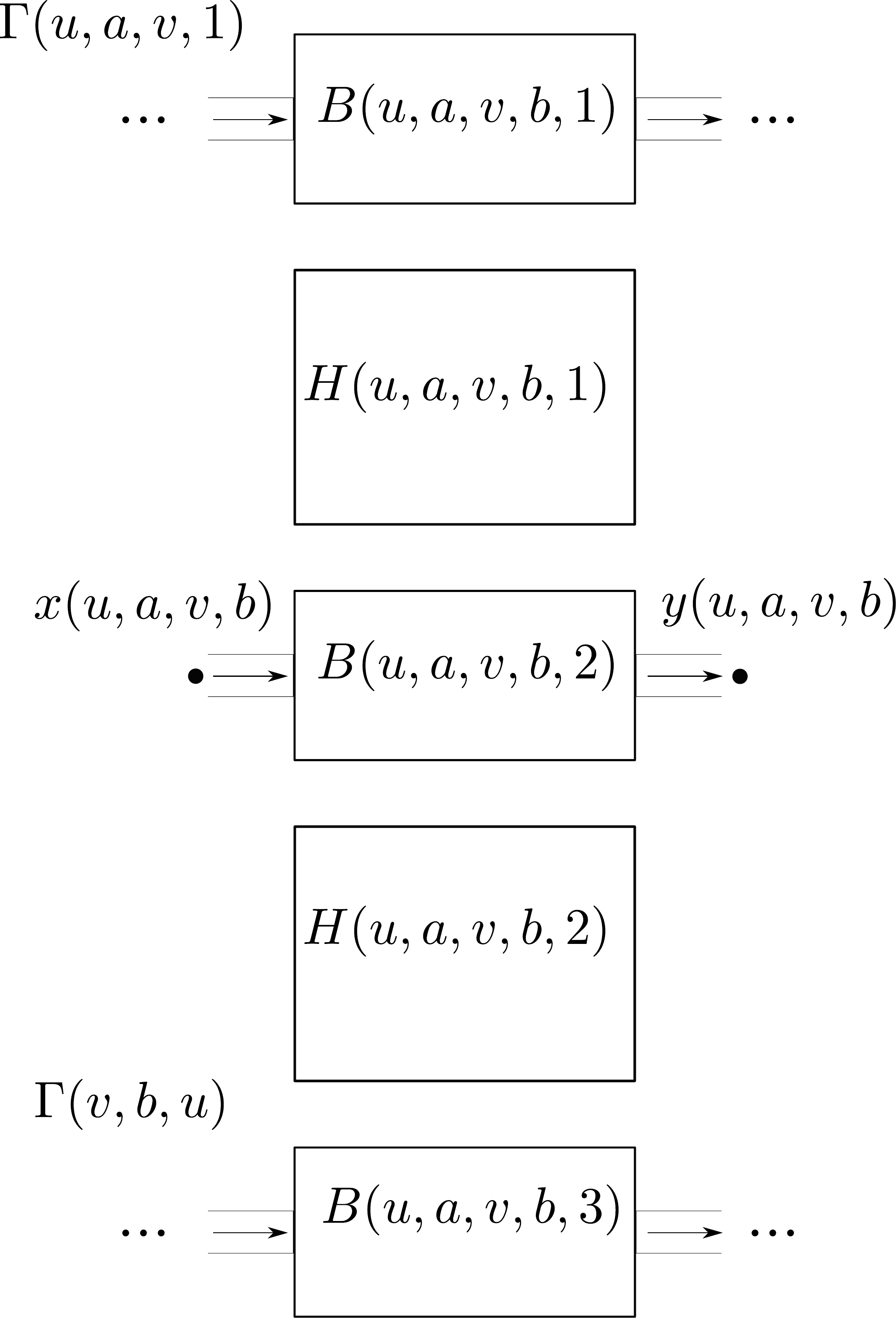}}
\caption{Gadgets corresponding to the constraints of the CSP}
\label{fig:constraints_temp}
\end{center}
\end{figure}

Furthermore, for all $i \in \{1, \ldots , d \} $ and for every point $u \in P_i$ with $S(u) = \{l,r\}$, we will populate $\dbox(u)$ with $4 |D|$ parallel co-planar $2$-chains that run from $\ifac(u,l)$ to $\ifac(u,r)$. For all adjacent $u,v \in V$ that differ in coordinate $i$ and for all $a \in \{1, \ldots, |D| \}$ these $2$-chains will ensure that the ends of $\Gamma(u,a,v)$ in $\dbox(v)$ connect to $\Gamma(u,a,v,1)$ and $\Gamma(u,a,v,2)$ in $\dbox(u)$. When $l,r \in P_i$, these $2$-chains will link with the corresponding 2-chains in $\dbox(l)$ and $\dbox(r)$ at $\ifac(u,l)$ and $\ifac(u,r)$ respectively. This is depicted in figure \ref{fig:interface_temp} when $d = 3$.

\begin{figure}[H]
\begin{center}
\scalebox{0.25}{\includegraphics{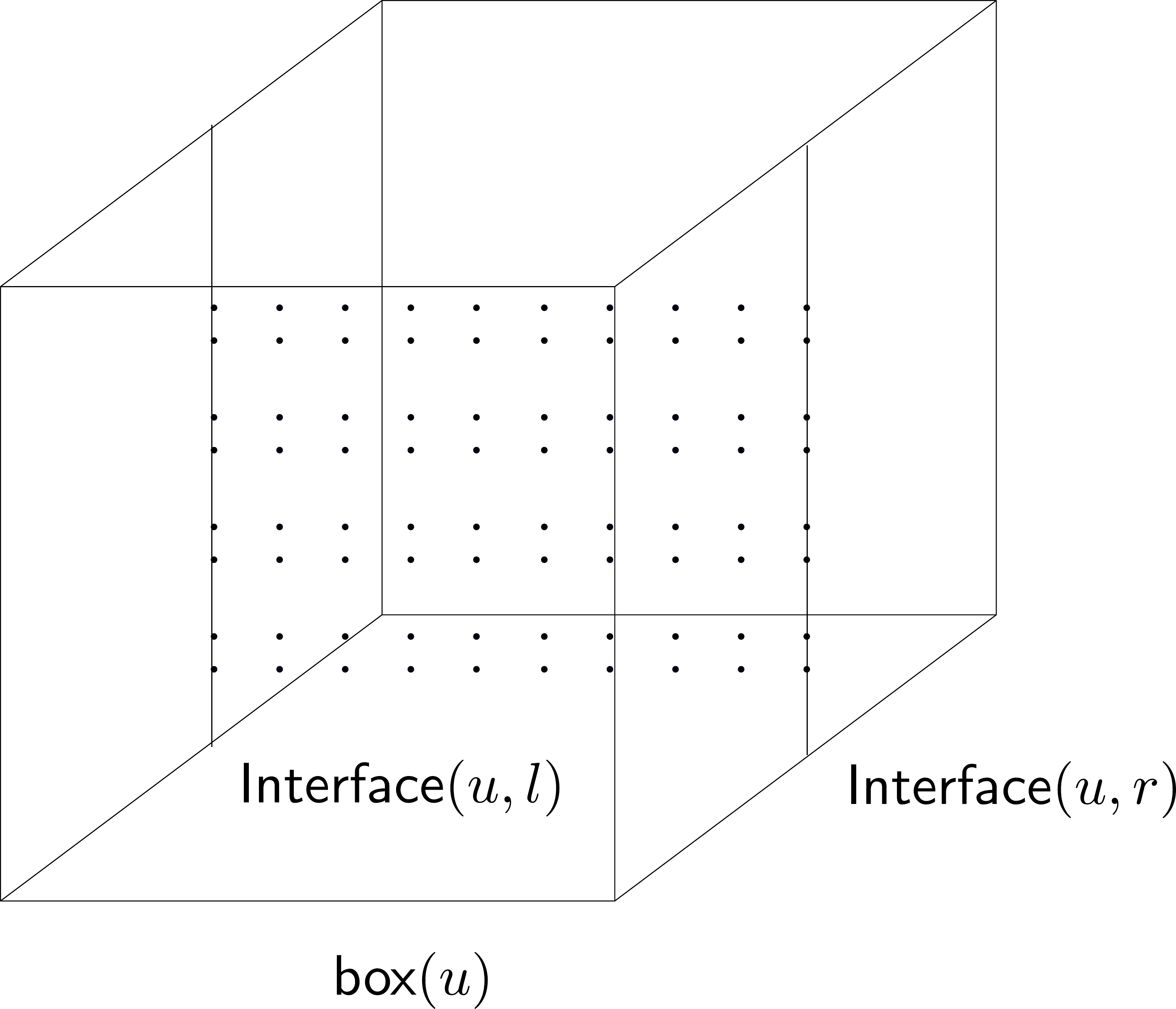}}
\caption{Interface in $\dbox(u)$ for $u,l,r \in P_i$ when $d = 3$ }
\label{fig:interface_temp}
\end{center}
\end{figure}

Finally, we add $1$-chains that connect the flexible $2$-chains so that the optimal solution traverses all $2$-chains. To this end, we consider a Hamiltonian path $h$ in $G$ and consider the vertices of $G$ in the order visited by $h$. When considering a vertex $u$, we add 1-chains appending all the flexible 2-chains $\Gamma(u,j,1)$ and $\Gamma(u,j,2)$ and then we add 1-chains to append all $\Gamma(u,a,v,b)$ that have not already been appended by 1-chains. Consider two arbitrary corner variables $u,w \in V$, we introduce two points $p$ and $q$, and add a $1$-chain from $p$ to $\Gamma(u,1,1)$, and a 1-chain from $\Gamma(w,|D|,2)$ to $q$ such that the following is true.

\begin{lemma}\label{lem:tspendpts}
Any optimal Traveling Salesperson path in the constructed instance has endpoints $p$ and $q$.
\end{lemma}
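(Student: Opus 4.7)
The plan is to show, via a direct comparison argument, that any Hamiltonian path of $X$ that uses $p$ or $q$ as an internal vertex is strictly longer than a carefully chosen reference path with $p$ and $q$ as endpoints. The first step is to place $p$ and $q$ so that each is geometrically isolated: I would locate $p$ at the terminal vertex of its $1$-chain, lying just outside $\dbox(u)$ on the side opposite to the interface faces used to link to neighboring boxes, in such a way that the unique point of $X$ within distance $\gamma|D|/3$ of $p$ is the preceding vertex of its $1$-chain. The analogous placement is done for $q$. Since $u$ and $w$ are chosen as corner variables of $P_g$, an ``empty'' halfspace outside $\dbox(u)$ (respectively $\dbox(w)$) is available, so this placement is geometrically realizable.

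The second step is to extract the consequence of this isolation. Suppose for contradiction that some optimal path $\pi$ has $p$ as an internal vertex. Then $\pi$ uses two edges incident to $p$; at most one has length $1$ (to its unique nearby $1$-chain neighbor), so the other has length at least $\gamma|D|/3$. The symmetric statement holds for $q$. I would next exhibit an explicit reference path $\pi^{*}$ with endpoints $p$ and $q$: start at $p$, traverse the $1$-chain into $\Gamma(u,1,1)$, visit all variable and constraint gadgets in the order determined by the Hamiltonian path $h$ of $G$ used in the construction (using the connecting $1$-chains, flexible $2$-chains, and interface $2$-chains, each of which contributes a known per-gadget cost as in Subsection~\ref{subsec:lowerboundtsp} and \cite{papadimitriou1977euclidean}), and finally exit through the $1$-chain into $q$. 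Summing these contributions yields a bound $|\pi^{*}| \le L_{0}$ with $L_{0}$ polynomial in $|V|$, $|D|$ and $\gamma$.

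The third step is the comparison. By choosing the scaling factor $\gamma$ in the base pointset sufficiently large relative to $L_{0}$ (still a constant factor, so that the fractal-dimension bound of Lemma~\ref{lem:fractaldim3} is unaffected thanks to Lemma~\ref{lem:scaling} and Lemma~\ref{lem:substitution}), the penalty of at least $\gamma|D|/3$ forced by an internal occurrence of $p$ or $q$ exceeds every possible saving obtainable by rerouting the remainder of the path, since any such saving is at most the total length $O(L_{0})$ of the alternative reference path. Hence no Hamiltonian path with an internal occurrence of $p$ or $q$ can be optimal, so both must be endpoints. The main obstacle will be to formalize the ``isolated placement'' in a manner compatible with the inductive fractal construction: namely, verifying that at the corner points $u$ and $w$ there is enough empty space outside $\dbox(u)$ and $\dbox(w)$ to push $p$ and $q$ to a truly isolated location without colliding with gadgets in neighboring boxes, and checking that adding the two isolated points does not inflate the fractal dimension beyond $\delta''$, which I expect follows by another application of Lemma~\ref{lem:substitution}.
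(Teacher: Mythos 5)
The paper gives no proof of this lemma: the construction merely says ``we introduce two points $p$ and $q$, and add a $1$-chain\ldots\ such that the following is true,'' i.e.\ it asserts that the $1$-chains attaching $p$ and $q$ can be designed so that the endpoint property holds, in the spirit of Papadimitriou's gadget analysis. So there is no paper proof to match against; the question is whether your argument is sound on its own terms. It is not, for two reasons.

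First, the ``isolation'' you postulate is geometrically unrealizable given how $p$ is attached. You ask that the unique point of $X$ within distance $\gamma|D|/3$ of $p$ be the single preceding vertex of its $1$-chain. But a $1$-chain consists of points at consecutive $\ell_1$-distance $1$, so $p$'s $1$-chain has points at distances $1, 2, 3, \ldots$ from $p$ (at least within the window-$20$ straightness constraint), and beyond that, a constant distance from the attachment point of $\Gamma(u,1,1)$ and its neighbours. Hence the second-nearest point of $X$ to $p$ is at distance $O(1)$, not $\Omega(\gamma|D|)$, and your claimed lower bound of $\gamma|D|/3$ on the length of the ``second edge'' at $p$ cannot hold. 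The Papadimitriou-style route out of this is not to isolate $p$ by a single long gap, but to give $p$ a long \emph{tail} (a long $1$-chain going away from the bulk and back) and to use a cut/parity counting argument: every separating hyperplane between consecutive tail vertices is crossed an odd number of times if exactly one path endpoint lies past it and an even number otherwise, so a path for which $p$ is not an endpoint must traverse essentially the whole tail twice, incurring a penalty proportional to the tail length.

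Second, even granting the isolation, your comparison in the third step is circular. You bound the possible saving from rerouting by $O(L_0)$, where $L_0$ is ``polynomial in $|V|$, $|D|$ and $\gamma$'', and then propose to choose $\gamma$ ``sufficiently large relative to $L_0$.'' But $L_0$ grows at least linearly in $\gamma$ (the whole base pointset is rescaled by $\gamma|D|$, so every chain connecting adjacent boxes has $\Theta(\gamma|D|)$ points spaced $\Theta(1)$ apart, and the reference path has length $\Omega(\gamma|D|)$ per chain), while your penalty is $\gamma|D|/3$, a single $\Theta(\gamma|D|)$ term. The ratio penalty$/L_0$ does not go to infinity as $\gamma\to\infty$; in fact it tends to zero as the instance size $N$ grows, since $L_0$ has many $\Theta(\gamma|D|)$ contributions. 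So the inequality ``penalty $>$ saving'' cannot be forced this way. With the tail/cut approach, one instead makes the tail length $L$ larger than a fixed bound on the possible saving outside the tails (which is independent of $L$), and one must separately check that adding two $1$-dimensional tails does not increase the fractal dimension (immediate, since $\delta''>1$) nor violate the cardinality bound in Lemma~\ref{lem:reductioncsptotsp}. Your overall intuition -- that $p$ and $q$ must be ``remote'' so that visiting them internally is costly -- is the right one, but the quantitative mechanism you propose does not work.
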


\subsubsection{Analysis}

\begin{lemma}\label{lem:reductioncsptotsp}
Let $d \geq 3$. Let $I = (V,D,C)$ be an instance of the \csp{} with constraint graph $G = R[N,d]$. Then it is possible to compute an instance $(X,\alpha)$ of TSP with $\dimF(X) = \delta'$ and $|X| \leq N^{(d-1) \frac{\delta}{\delta - 1}} |D|^{O(1)}$, such that the length of the shortest TSP tour of $X$ is at most $\alpha$ if and only if $I$ is satisfiable.
\end{lemma}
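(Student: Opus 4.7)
The plan is to verify that the construction already laid out above yields an instance $(X,\alpha)$ with the stated properties, where $X$ is the union over all $u \in P$ of the gadgets placed inside $\dbox(u)$, with $P$ the scaled copy of $f^{l,v,d}(k)$ for $k = \log_{l-v} N$, and $\alpha$ is the length of the ``canonical'' TSP tour in which, for each variable $u$, exactly one mode-selection chain $\Gamma(u,j,1)$ is traversed in mode~2 (all other $\Gamma(u,j,1)$ in mode~1), every constraint-conflict chain $\Gamma(u,a,v,b)$ is in its short mode, and each copy of configuration $H$ is visited once from an adjacent $A$ or $B$.

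For the size bound, $f^{l,v,d}(k)$ has $l^{k}(l-v)^{k(d-1)}$ points; since $(l-v)^k = N$ and a short calculation using $(\delta''-1) = (d-1)\log_l(l-v)$ gives $l^k = N^{(d-1)/(\delta''-1)}$, we obtain $|P| = N^{(d-1)\delta''/(\delta''-1)}$. Each $\dbox(u)$ contains only $|D|^{O(1)}$ points (a constant number of $A$, $B$, $H$ gadgets per assignment value, $O(|D|)$ mode-selection chains, $O(|D|^2)$ constraint-conflict chains, and $O(|D|)$ interface chains), giving $|X| \leq N^{(d-1)\delta''/(\delta''-1)} |D|^{O(1)}$, which is within the claimed bound since $\delta''<\delta$ and $\delta''$ can be chosen arbitrarily close to $\delta$ and larger than $\delta'$. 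For the fractal dimension, $P$ is a uniform scaling of $f^{l,v,d}(k)$ by $\gamma|D|$, so Lemma \ref{lem:scaling} gives $\dimF(P) = \delta''$; since $X$ is obtained from $P$ by replacing each $u \in P$ with $|D|^{O(1)}$ points inside $\dbox(u)$, all at distance $O(\gamma|D|)$ from $u$, Lemma \ref{lem:substitution} (applied after rescaling so the inter-point spacing of $P$ exceeds $4$ times the substitution radius, which holds for $\gamma$ a large enough constant) yields $\dimF(X) = \delta''$, matching the statement.

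For the ``if and only if,'' we invoke Papadimitriou's technical lemma (Lemma \ref{lem:papadimitrioutechnical}): the copies of configuration $H$ play the role of the $a$-components, and the remaining points (chains, $A$ and $B$ gadgets, and $1$-chain connectors) form the $a$-compact remainder whose endpoints are the designated $p,q$ by Lemma \ref{lem:tspendpts}. In the forward direction, a satisfying assignment $\sigma:V\to D$ induces a tour of length exactly $\alpha$ by traversing $\Gamma(u,\sigma(u),1)$ in mode~2 (and all other $\Gamma(u,j,1)$ in mode~1), and traversing each $\Gamma(u,a,v,b)$ in mode~2 iff $(\sigma(u),\sigma(v)) \neq (a,b)$; consistency is guaranteed because $\sigma$ violates no constraint, so no $B$-gadget in any plane $h(u,v,a,b)$ is forced into the wrong mode. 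Conversely, a tour of length $\alpha$ is forced by Lemma \ref{lem:papadimitrioutechnical} to visit each $H$-component from an adjacent $A$ or $B$, which in the plane $h(u)$ (exactly as in the analysis of Theorem \ref{thm:tsplowerbound}) forces a unique $j=\sigma(u)$ with $\Gamma(u,\sigma(u),1)$ in mode~2, and in every plane $h(u,v,a,b)$ prevents both $\Gamma(u,a,1)$ and $\Gamma(v,b,1)$ from simultaneously being in mode~2, i.e.\ forbids $(\sigma(u),\sigma(v))=(a,b)$ whenever $(a,b)\notin R$; thus $\sigma$ satisfies $I$.

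The main obstacle is the geometric packing inside each $\dbox(u)$: one must route the $2d$ triples of flexible $2$-chains $\Gamma(u,j,v,1),\Gamma(u,j,v,2),\Gamma(v,j,u)$ for the at most $2d$ neighbours $v\in N(u)$, link them according to the fixed ordering $\pi$, connect them to the mode-selection chains $\Gamma(u,j,1),\Gamma(u,j,2)$ residing in the common plane $h(u)$, and route the constraint-conflict chains through the planes $h(u,v,a,b)$ — all without crossings and while meeting the specified interfaces $\ifac(u,v)$ on the faces of $\dbox(u)$. The two rotation operations of flexible $2$-chains give exactly the freedom needed to accomplish this in $\mathbb{R}^d$ for $d\geq 3$ once $\gamma$ is taken large enough that the $|D|^{O(1)}$ gadget points fit in a hypercube of side $\gamma|D|$ with the required clearances; the rest of the argument is then a direct invocation of Papadimitriou's lemma and a bookkeeping calculation for $\alpha$.
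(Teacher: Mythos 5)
Your proposal follows the paper's strategy closely — same gadget construction, same size and fractal-dimension calculations via Lemmas \ref{lem:scaling} and \ref{lem:substitution}, and the same invocation of Papadimitriou's technical Lemma \ref{lem:papadimitrioutechnical} with the $H$-configurations as $a$-components. However, there is a concrete error in the forward direction. You assert that the canonical tour traverses $\Gamma(u,a,v,b)$ in mode~2 iff $(\sigma(u),\sigma(v)) \neq (a,b)$. But these chains only exist for $(a,b) \notin R$, and $\sigma$ satisfies $I$, so $(\sigma(u),\sigma(v)) \neq (a,b)$ \emph{always} holds — your rule reduces to ``always mode~2.'' This leaves no way to visit the configuration-$H$ gadgets $H(u,a,v,b,1)$, $H(u,a,v,b,2)$ in the constraint plane, since a $B$-gadget can only deviate to an adjacent $H$ when its $2$-chain is traversed in mode~1. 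The paper's proof uses a three-way case split: if $F(u)\neq a$ and $F(v)\neq b$ then $\Gamma(u,a,v,b)$ is traversed in mode~2 (the $H$'s are reached from $B$-gadgets on $\Gamma(u,a,v,1)$ and $\Gamma(v,b,u)$, which are in mode~1 exactly when the corresponding assignment slot is not selected); if $F(u)=a$ (resp.\ $F(v)=b$), traverse $\Gamma(u,a,v,b)$ in mode~1 and reach the matching $H$-gadget via the attached $B$. Without this case analysis the tour either misses some $H$-gadgets or exceeds $\alpha$, so the forward implication does not go through as written.

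A secondary, minor issue: you claim that $|X| \leq N^{(d-1)\delta''/(\delta''-1)}|D|^{O(1)}$ ``is within the claimed bound since $\delta'' < \delta$.'' Since $x \mapsto x/(x-1)$ is decreasing, $\delta'' < \delta$ gives $\delta''/(\delta''-1) > \delta/(\delta-1)$, so your exponent is in fact \emph{larger} than the one in the lemma statement, not smaller. (This is also loose in the paper; what actually matters downstream is that $\delta' < \delta''$, which makes $n^{1-1/\delta'} = o(N^{d-1})$, but the inequality as you state it points the wrong way.)
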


\begin{proof}
Our proof follows an identical approach to that of Marx and Sidiropoulos \cite{MS14} (Lemma 4.5). Given an instance of the \csp{} $I$, we construct a TSP instance $(X,\alpha)$ as described in the construction above. Let $G_1 , \ldots G_m$ denote the $H$ configurations in $P'$ and let $G_0 = P' \setminus \cup_{i = 1}^m G_i$. Let a = 20. Leaving enough space between the different gadgets in $X$ will ensure that the following properties are maintained:

\begin{description}
\item{1.} Every configuration $H$ in $X$ is an $a$-component.
\item{2.} Any two copies of configuration $H$ in $X$ are at least distance $2a$ apart.
\item{3.} Every optimal TSP path of $X$ has its endpoints in $G_0$; this follows from Lemma \ref{lem:tspendpts}.

\item{4.} $G_0$ is $(a, m+1)$-compact. We can ensure that this is satisfied by enforcing the following conditions: 1) Distinct $2$-chains and flexible $2$-chains that are not linked are at least distance $20$ from each other. 2) Distinct $1$-chains are at least distance $20$ from each other. 3) $1$-chains and flexible $2$-chains that don't share an end point are at least distance $20$ from each other. 4) Let $C$ be a $1$-chain that shares an endpoint $p$
with some flexible $2$-chain $C_0$. Then, there exists a line $l$ in $\mathbb{R}^d$, with $p \in  l$, such that the $20$ segments in $Q$ closest to $p$ lie in $l$, and the $20$ points in $C$ closest to $p$ also lie in $l$; all other points in $C$ are at distance at least $20$ from $C_0$.
\end{description}

The above conditions can be achieved by setting $\gamma$ to be large enough since the flexible $2$-chains can turn and orient to any axis. Suppose now that $I$ is satisfiable, and let $F : V \to \{1, \ldots , |D| \}$ be a satisfying assignment for $I$. We can build a Traveling Salesperson path $H$ for $X$ as follows. We start at $p$, and we traverse the $1$-chain originating at p. When we reach a 2-chain $C$, we distinguish between the following cases:

1) Suppose that $C$ is a $2$-chain of the form $\Gamma(u,i,1)$ for some $u \in V$, and $i \in [|D|]$. If $F(u) = i$, then we begin traversing $C$ in mode $2$, and if $F(u) \neq i$, then we begin traversing $C$ in mode $1$. 

2) Suppose next that $C$ is a $2$-chain of the form $\Gamma(u, i, v, j)$ for some $u, v \in V$, and $i, j \in \{1, \ldots , |D| \}$. This implies that there exists a constraint $\langle (u, v),R \rangle$, with $(i, j) \not \in R$. Since $F$ satisfies $I$, it follows that either $F(u) \neq i$, or $F(v) \neq j$. If $F(u) \neq i$, and $F(v) \neq j$, then we traverse $\Gamma(u, i, v, j)$ in mode $2$. If $F(u) = i$, then we traverse $\Gamma(u, i, v, j)$ in mode $1$, and we traverse $B(u, i, v, j, 2)$ together with $H(u, i, v, j, 1)$. If $F(v) = j$, then we traverse $\Gamma(u, i, v, j)$ in mode $1$, and we traverse $B(u, i, v, j, 2)$ together
with $H(u, i, v, j, 2)$.

Upon reaching the unlinked end of a flexible $2$-chain, we traverse the $1$-chain that is attached to it, and we
proceed to the next unlinked end of a flexible $2$-chain. Eventually, we reach $q$, and the path terminates. It is immediate
to check that the resulting path $Q$ visits all points in $X$. This gives us the following property:

\begin{description}
\item{5.} The path $H$ consists of the union of an optimal $(m + 1)$-path for $G_0$, $m$ optimal $1$-paths for $G_1, \ldots,G_m$, and $2m$ edges of length $a$ connecting $a$-components to $G_0$.
\end{description}

We now show the converse, that given an optimal path we can obtain a satisfying assignment to the instance of the \csp{}. Any optimal path $Q$ must consist of an optimal $(m + 1)$-path for $G_0$, $m$ optimal $1$-paths for $G_1,\ldots,G_m$, and $2m$ edges of length $a$ connecting $G_0$ with $G_1,\ldots,G_m$ (two edges for each $a$-component). By Lemma 1 of \cite{papadimitriou1977euclidean}, we have that for each $i \in \{ 1,\ldots,m \}$, the configuration-$H$ $G_i$ is visited in such a way that $Q$ must contain two edges between
each configuration-$H$ $G_i$ and some configuration-$B$ that is attached to some flexible $2$-chain $C$, such
that $G_i$ is a neighbor of $C$, and $C$ is traversed in mode $1$. This implies by the construction that
for each $u \in V$, exactly one of the $2$-chains $\Gamma(u, j,1)$ is traversed in mode $2$. We can construct
an assignment $F$ for $I$ by setting $F(u) = j$, where $\Gamma(u, j,1)$ is traversed in mode $2$. It remains
to check that $F$ satisfies $I$. Let $\langle (u, v),R \rangle$ be a constraint, and let $(i, j) \not \in R$. Again, by the
construction, we have that at least one of the flexible $2$-chains $\Gamma(u, i,1), \Gamma(v, j,1)$ is traversed in mode $2$, and therefore either $F(u) \neq i$, or $F(v) \neq j$. It follows that F satisfies $I$.

Finally we obtain an upper bound for $|X|$. Since we start with a rescaled instance of $f^{l,v,d}(k)$ and merely replace each point with at most $|D|^{O(1)}$ points, it follows that $|X| \leq N^{(d-1) \frac{\delta}{\delta - 1}} |D|^{O(1)}$. 
\end{proof}

\begin{lemma}\label{lem:csptotspfractal}
$\dimF(X) \leq \delta''$
\end{lemma}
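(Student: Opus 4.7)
The plan is to derive the fractal dimension bound for $X$ directly from the fractal dimension bound for the base pointset $P$, using the two invariance lemmas established in the preliminaries. Recall that $P$ was obtained by uniformly scaling $f^{l,v,d}(k)$ about the origin by a factor of $\gamma|D|$. By Lemma \ref{lem:fractalhigherdimmodular} applied to our choice of $l$ and $v$, we have $\dimF(f^{l,v,d}(k)) \leq \delta''$, and then Lemma \ref{lem:scaling} yields $\dimF(P) \leq \delta''$ as well.

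Next, I would view the pointset $X$ as obtained from $P$ by a bounded substitution: for each $u \in P$, all the gadget points that replace $u$ (the flexible $2$-chains, $1$-chains, and configurations $A$, $B$, $H$ living in $\dbox(u)$, as well as the interface $2$-chains when $u \in P_c$) lie inside $\dbox(u)$, a hypercube of side $\gamma|D|$ centered at $u$. Thus there is a constant $c' \leq \tfrac{\sqrt{d}}{2}\gamma|D|$ such that every new point assigned to $u$ is within distance $c'$ of $u$, and the total number of points assigned to $u$ is bounded by $|D|^{O(1)}$, a constant $k'$ once the instance is fixed.

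To invoke Lemma \ref{lem:substitution} I need $d(u,v) > 4c'$ for all distinct $u,v \in P$. Since $P$ is obtained by scaling a pointset whose minimum pairwise distance is $1$ by a factor of $\gamma|D|$, the minimum pairwise distance in $P$ is exactly $\gamma|D|$. Choosing $\gamma$ sufficiently large (which is already required in the construction to guarantee the separation conditions used in the proof of Lemma \ref{lem:reductioncsptotsp}) ensures $\gamma|D| > 4c'$. Lemma \ref{lem:substitution} then applies and yields $\dimF(X) = \dimF(P) \leq \delta''$.

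The only subtle point, and the step worth checking most carefully, is that the bound $c'$ truly is of the form claimed, i.e.\ that no gadget placed in $\dbox(u)$ strays outside its box. For interior gadgets (variable gadgets, constraint gadgets, and the interface $2$-chains) this is immediate from the construction. For the linking $1$-chains added across the Hamiltonian path of $G$, one must verify that each such $1$-chain can be routed so its points stay within $\dbox(u) \cup \dbox(v)$ of the two endpoints it connects, which follows from how the linking step is defined and from our ability to enlarge $\gamma$ as needed. Once this verification is in place, the two invariance lemmas finish the argument.
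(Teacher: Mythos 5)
Your proposal follows the same two-step plan as the paper (first Lemma \ref{lem:scaling}, then Lemma \ref{lem:substitution}), but there is a concrete gap in how you bound the substitution radius that breaks the application of Lemma \ref{lem:substitution}.

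You take $c'$ to be half the diagonal of $\dbox(u)$, i.e.\ $c' \leq \tfrac{\sqrt{d}}{2}\gamma|D|$, and then you say that choosing $\gamma$ large enough makes $\gamma|D| > 4c'$. But with that choice of $c'$ we have $4c' \approx 2\sqrt{d}\gamma|D| > \gamma|D|$, and enlarging $\gamma$ cannot repair this, because $c'$ grows \emph{proportionally} with $\gamma$: the inequality $\gamma|D| > 4c'$ would require $1 > 2\sqrt{d}$, which is false for every $d \geq 1$. So with the radius bound you propose, the hypothesis ``$d(u,v) > 4c$ for all $u,v\in P$'' of Lemma \ref{lem:substitution} is never met, regardless of $\gamma$, and the lemma simply does not apply. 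This is not a minor slack in a constant: the whole mechanism of Lemma \ref{lem:substitution} relies on the substitution clusters being small compared to the inter-point separation of $P$, and that comparison fails if the clusters can stretch to the corners of $\dbox(u)$.

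The paper's own proof sidesteps this by asserting a much stronger fact: that every gadget point is within a fixed absolute constant $\beta < 50$ of the original (rescaled) base point, i.e.\ the replacement radius does \emph{not} scale with $\gamma$ or $|D|$. Under that stronger bound, choosing $\gamma|D| > 4\beta$ does satisfy the separation hypothesis, and Lemma \ref{lem:substitution} applies cleanly. Your write-up correctly flags (at the end) the need to check that no gadget strays far from its base point, but you only verify containment in $\dbox(u)$ (or $\dbox(u)\cup\dbox(v)$), which is too weak; what must actually be verified is that the gadget diameter is bounded by a constant independent of $\gamma$ and $|D|$. Without establishing that constant bound — and it is not obvious from the construction, since the interface-connecting $2$-chains are described as running all the way from $\ifac(u,l)$ to $\ifac(u,r)$, spanning the box — your argument does not close.
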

\begin{proof}
We obtain $X$ by rescaling $f^{l,v,d}(k)$ and then replacing every point $p \in f^{l,v,d}(k)$ by some gadget containing at most $|D|^{O(1)}$ points that are all within the ball of radius $\beta$ of $p$ for some constant $\beta < 50$. Applying Lemmas \ref{lem:scaling} and \ref{lem:substitution}, the statement of the lemma follows.
\end{proof}
\begin{proof}[Proof of Theorem \ref{thm:tsplowerboundarbitrarydim}]
It follows by Theorem \ref{lem:cspeth} and Lemmas \ref{lem:reductioncsptotsp} and \ref{lem:csptotspfractal}.
\end{proof}

\begin{proof}[Proof of Theorem \ref{thm:tsplowerboundmain}]
It follows by Theorems \ref{thm:tsplowerbound} and \ref{thm:tsplowerboundarbitrarydim}.
\end{proof}

\bibliography{bibfile}

\end{document}